\tikzset{every tree node/.style={align=center, anchor=north}}
\definecolor{dkgreen}{rgb}{0,0.6,0}
\definecolor{gray}{rgb}{0.5,0.5,0.5}
\definecolor{mauve}{rgb}{0.58,0,0.82}
\tiny\color{gray},
  \providecommand\BibTeX{{%
    \normalfont B\kern-0.5em{\scshape i\kern-0.25em b}\kern-0.8em\TeX}}}
\newcommand{\narrow}[1]{\ensuremath{\text{\scalebox{0.7}[1.0]{\textsf{#1}}}}}
\newcommand{\Tables}{\narrow{Tables}}
\newcommand{\Table}{\narrow{Table}}
\newcommand{\Aliases}{\narrow{Aliases}}
\newcommand{\Attributes}{\narrow{Attributes}}
\newcommand{\Cost}{\narrow{Cost}}
\newcommand{\Sim}{\narrow{Sim}}
\newcommand{\mapping}{\ensuremath{\mathfrak{m}}}
\newcommand{\dist}{\narrow{dist}}
\newcommand{\Children}{\narrow{Children}}
\newcommand{\op}{\narrow{op}}
\newcommand{\RepairWhere}{\narrow{RepairWhere}}
\newcommand{\CreateBounds}{\narrow{CreateBounds}}
\newcommand{\DeriveFixes}{\narrow{DeriveFixes}}
\newcommand{\DeriveFixesOPT}{\narrow{DeriveFixesOPT}}
\newcommand{\DistributeFixes}{\narrow{DistributeFixes}}
\newcommand{\myComment}{\itshape\footnotesize\color{blue}// }
\let\oldnl\nl% Store \nl in \oldnl
\newcommand{\nonl}{\renewcommand{\nl}{\let\nl\oldnl}}% Remove line number for one line
\newcommand{\sql}[1]{\texttt{#1}}
\newcommand{\sqlquote}[1]{\texttt{\textquotesingle#1\textquotesingle}}
\newcommand{\SELECT}{\sql{SELECT}}
\newcommand{\FROM}{\sql{FROM}}
\newcommand{\WHERE}{\sql{WHERE}}
\newcommand{\GROUPBY}{\sql{GROUP} \sql{BY}}
\newcommand{\HAVING}{\sql{HAVING}}
\newcommand{\WITH}{\sql{WITH}}
\newcommand{\DISTINCT}{\sql{DISTINCT}}
\newcommand{\JOIN}{\sql{JOIN}}
\newcommand{\Fr}[1]{\ensuremath{\narrow{F}(#1)}}
\newcommand{\FrWh}[1]{\ensuremath{\narrow{FW}(#1)}}
\newcommand{\FrWhGr}[1]{\ensuremath{\narrow{FWG}(#1)}}
\newcommand{\FrWhGrHa}[1]{\ensuremath{\narrow{FWGH}(#1)}}
\newcommand{\mseq}{\ensuremath{\mathrel{\smash{\overset{\lower.5em\hbox{$\scriptscriptstyle\Box$}}{=}}}}}
\newcommand{\concat}{\ensuremath{\mathbin\Vert}}
\newcommand{\set}[1]{\ensuremath{\mathcal{#1}}}
\newcommand{\bool}[1]{\ensuremath{\mathsf{#1}}}
\newcommand{\True}{\narrow{true}}
\newcommand{\False}{\narrow{false}}
\newcommand{\MinFix}{\narrow{MinFix}}
\newcommand{\MinFixMult}{\narrow{MinFixMult}}
\newcommand{\MapAtomPreds}{\narrow{MapAtomPreds}}
\newcommand{\DontCare}{\ensuremath{\ast}}
\newcommand{\BuildTruthTable}{\narrow{BuildTruthTable}}
\newcommand{\InitFeasibility}{\narrow{InitFeasibility}}
\newcommand{\UpdateFeasibility}{\narrow{UpdateFeasibility}}
\newcommand{\PickSite}{\narrow{PickSite}}
\newcommand{\IsEquiv}{\narrow{IsEquiv}}
\newcommand{\IsUnSat}{\narrow{IsUnSatisfiable}}
\newcommand{\IsSat}{\narrow{IsSatisfiable}}
\newcommand{\Context}{\ensuremath{\mathcal{C}}}
\newcommand{\MinBoolExp}{\narrow{MinBoolExp}}
\newcommand{\FixGrouping}{\narrow{FixGrouping}}
\newcommand{\FixSelect}{\narrow{FixSelect}}
\newcommand{\ostar}{\ensuremath{o^{\smash{\scriptscriptstyle\star}}}}
\definecolor{black}{rgb}{0,0,0}
\definecolor{grey}{rgb}{0.8,0.8,0.8}
\definecolor{red}{rgb}{1,0,0}
\definecolor{green}{rgb}{0,1,0}
\definecolor{applegreen}{rgb}{0.55, 0.71, 0.0}
\definecolor{darkgreen}{rgb}{0,0.5,0}
\definecolor{darkpurple}{rgb}{0.5,0,0.5}
\definecolor{darkdarkpurple}{rgb}{0.3,0,0.3}
\definecolor{blue}{rgb}{0,0,1}
\definecolor{shadegreen}{rgb}{0.95,1,0.95}
\definecolor{shadeblue}{rgb}{0.95,0.95,1}
\definecolor{shadered}{rgb}{1,0.85,0.85}
\definecolor{shadegrey}{rgb}{0.85,0.85,0.85}
\definecolor{oddRowGrey}{rgb}{0.80,0.80,0.80}
\definecolor{evenRowGrey}{rgb}{0.85,0.85,0.85}
\newcommand{\cut}[1]{{}}
\newcommand{\oursys}{\textsc{Qr-Hint}}
\newcommand{\oursystitle}{Qr-Hint}
\newcommand{\mypar}[1]{\smallskip\noindent\textbf{{#1}.}}
\DeclareMathAlphabet{\mathbbold}{U}{bbold}{m}{n}
\newtheorem{Definition}{Definition}
\newtheorem{Example}{Example}
\newcommand{\card}[1]{|{#1}|}
\newcommand{\BigO}{\ensuremath{\mathrm{O}}}
\newcommand{\union}{\cup}
\newenvironment{sizeddisplay}[1]
 {\par\nopagebreak#1\noindent\ignorespaces}
 {\nopagebreak\ignorespacesafterend}
\definecolor{lstpurple}{rgb}{0.5,0,0.5}
\definecolor{lstred}{rgb}{1,0,0}
\definecolor{lstreddark}{rgb}{0.7,0,0}
\definecolor{lstredl}{rgb}{0.64,0.08,0.08}
\definecolor{lstmildblue}{rgb}{0.66,0.72,0.78}
\definecolor{lstblue}{rgb}{0,0,1}
\definecolor{lstmildgreen}{rgb}{0.42,0.53,0.39}
\definecolor{lstgreen}{rgb}{0,0.5,0}
\definecolor{lstorangedark}{rgb}{0.6,0.3,0}	
\definecolor{lstorange}{rgb}{0.75,0.52,0.005}
\definecolor{lstorangelight}{rgb}{0.89,0.81,0.67}
\definecolor{lstbeige}{rgb}{0.90,0.86,0.45}
\DeclareFontShape{OT1}{cmtt}{bx}{n}{<5><6><7><8><9><10><10.95><12><14.4><17.28><20.74><24.88>cmttb10}{}
\lstdefinelanguage{smtlib2}{
  alsoletter=-,
  morekeywords={declare-const,define-fun,assert,minimize,maximize,check-sat,get-objectives,and,or,not,distinct},
  extendedchars=false,
  keywordstyle=\bfseries\color{lstpurple},
  deletekeywords={Int,Bool},
  keywords=[2]{Int,Bool},
  keywordstyle=[2]\color{lstblue},
}
\lstdefinestyle{psql}
{
tabsize=2,
basicstyle=\scriptsize\upshape\ttfamily,
language=SQL,
morekeywords={PROVENANCE,BASERELATION,INFLUENCE,COPY,ON,TRANSPROV,TRANSSQL,TRANSXML,CONTRIBUTION,COMPLETE,TRANSITIVE,NONTRANSITIVE,EXPLAIN,SQLTEXT,GRAPH,IS,ANNOT,THIS,XSLT,MAPPROV,cxpath,OF,TRANSACTION,SERIALIZABLE,COMMITTED,INSERT,INTO,WITH,SCN,UPDATED},
extendedchars=false,
keywordstyle=\bfseries,
mathescape=true,
escapechar=@,
sensitive=true
}
\lstdefinestyle{psqlcolor}
{
tabsize=2,
basicstyle=\scriptsize\upshape\ttfamily,
language=SQL,
morekeywords={PROVENANCE,BASERELATION,INFLUENCE,COPY,ON,TRANSPROV,TRANSSQL,TRANSXML,CONTRIBUTION,COMPLETE,TRANSITIVE,NONTRANSITIVE,EXPLAIN,SQLTEXT,GRAPH,IS,ANNOT,THIS,XSLT,MAPPROV,cxpath,OF,TRANSACTION,SERIALIZABLE,COMMITTED,INSERT,INTO,WITH,SCN,UPDATED},
extendedchars=false,
keywordstyle=\bfseries\color{lstpurple},
deletekeywords={count,min,max,avg,sum},
keywords=[2]{count,min,max,avg,sum},
keywordstyle=[2]\color{lstblue},
stringstyle=\color{lstreddark},
commentstyle=\color{lstgreen},
mathescape=true,
escapechar=@,
sensitive=true
}
\lstdefinestyle{datalog}
{
basicstyle=\footnotesize\upshape\ttfamily,
language=prolog
}
\lstdefinestyle{pseudocode}
{
  tabsize=3,
  basicstyle=\small,
  language=c,
  morekeywords={if,else,foreach,case,return,in,or},
  extendedchars=true,
  mathescape=true,
  literate={:=}{{$\gets$}}1 {<=}{{$\leq$}}1 {!=}{{$\neq$}}1 {append}{{$\listconcat$}}1 {calP}{{$\cal P$}}{2},
  keywordstyle=\color{lstpurple},
  escapechar=&,
  numbers=left,
  numberstyle=\color{lstgreen}\small\bfseries, 
  stepnumber=1, 
  numbersep=5pt,
}
\lstdefinestyle{xmlstyle}
{
  tabsize=3,
  basicstyle=\small,
  language=xml,
  extendedchars=true,
  mathescape=true,
  escapechar=£,
  tagstyle=\color{keywordpurple},
  usekeywordsintag=true,
  morekeywords={alias,name,id},
  keywordstyle=\color{lstred}
}
\lstdefinestyle{smtlib2}
{
tabsize=2,
basicstyle=\scriptsize\upshape\ttfamily,
numbers=left,
stepnumber=1,
breaklines=true,
stringstyle=\color{lstreddark},
commentstyle=\color{lstgreen},
mathescape=true,
escapechar=@,
sensitive=true
}
\renewcommand\footnotetextcopyrightpermission[1]{} % removes footnote with conference information in first column
\begin{document}

% \input{revision_letter.tex}

%%
%% The "title" command has an optional parameter,
%% allowing the author to define a "short title" to be used in page headers.
\title{\oursystitle: Actionable Hints Towards Correcting Wrong SQL Queries} %Actionable Hints for Guided SQL Query Debugging}
%%
%% The "author" command and its associated commands are used to define
%% the authors and their affiliations.
%% Of note is the shared affiliation of the first two authors, and the
%% "authornote" and "authornotemark" commands
%% used to denote shared contribution to the research.

\author{Yihao Hu}
\email{yihao.hu@duke.edu}
\affiliation{
  \institution{Duke University}
  \city{Durham}
  \state{NC}
  \country{USA}
}

\author{Amir Gilad}
\email{amir.gilad1@mail.huji.ac.il}
\orcid{1234-5678-9012}
\affiliation{%
  \institution{The Hebrew University of Jerusalem}
  % \streetaddress{P.O. Box 1212}
  \city{Jerusalem}
  \country{Israel}
  \postcode{9190501}
}

\author{Kristin Stephens-Martinez}
\email{ksm@cs.duke.edu}
\affiliation{
  \institution{Duke University}
  \city{Durham}
  \state{NC}
  \country{USA}
}

\author{Sudeepa Roy}
\email{sudeepa@cs.duke.edu}
\affiliation{
  \institution{Duke University}
  \city{Durham}
  \state{NC}
  \country{USA}
}

\author{Jun Yang}
\email{junyang@cs.duke.edu}
\affiliation{
  \institution{Duke University}
  \city{Durham}
  \state{NC}
  \country{USA}
}

%%
%% By default, the full list of authors will be used in the page
%% headers. Often, this list is too long, and will overlap
%% other information printed in the page headers. This command allows
%% the author to define a more concise list
%% of authors' names for this purpose.
% \renewcommand{\shortauthors}{Trovato and Tobin, et al.}

%%
%% The abstract is a short summary of the work to be presented in the
%% article.
\begin{abstract}
We describe a system called \oursys\ that, given a (correct) target query $Q^\star$ and a (wrong) working query $Q$, both expressed in SQL,
provides actionable hints for the user to fix the working query so that it becomes semantically equivalent to the target.
It is particularly useful in an educational setting, where novices can receive help from \oursys\ without requiring extensive personal tutoring.
Since there are many different ways to write a correct query, we do not want to base our hints completely on how $Q^\star$ is written;
instead, starting with the user's own working query, \oursys\ purposefully guides the user through a sequence of steps
that provably lead to a correct query, which will be equivalent to $Q^\star$ but may still ``look'' quite different from it.
Ideally, we would like \oursys's hints to lead to the ``smallest'' possible corrections to $Q$.
However, optimality is not always achievable in this case due to some foundational hurdles
such as the undecidability of SQL query equivalence and the complexity of logic minimization.
Nonetheless, by carefully decomposing and formulating the problems and developing principled solutions,
we are able to provide provably correct and locally optimal hints through \oursys. 
We show the effectiveness of \oursys\ through quality and performance experiments as well as a user study in an educational setting.

\end{abstract}

%%
%% The code below is generated by the tool at http://dl.acm.org/ccs.cfm.
%% Please copy and paste the code instead of the example below.
%%
\begin{CCSXML}
<ccs2012>
 <concept>
  <concept_id>10010520.10010553.10010562</concept_id>
  <concept_desc>Computer systems organization~Embedded systems</concept_desc>
  <concept_significance>500</concept_significance>
 </concept>
 <concept>
  <concept_id>10010520.10010575.10010755</concept_id>
  <concept_desc>Computer systems organization~Redundancy</concept_desc>
  <concept_significance>300</concept_significance>
 </concept>
 <concept>
  <concept_id>10010520.10010553.10010554</concept_id>
  <concept_desc>Computer systems organization~Robotics</concept_desc>
  <concept_significance>100</concept_significance>
 </concept>
 <concept>
  <concept_id>10003033.10003083.10003095</concept_id>
  <concept_desc>Networks~Network reliability</concept_desc>
  <concept_significance>100</concept_significance>
 </concept>
</ccs2012>
\end{CCSXML}

% \ccsdesc[500]{Computer systems organization~Embedded systems}
% \ccsdesc[300]{Computer systems organization~Redundancy}
% \ccsdesc{Computer systems organization~Robotics}
% \ccsdesc[100]{Networks~Network reliability}

%%
%% Keywords. The author(s) should pick words that accurately describe
%% the work being presented. Separate the keywords with commas.
% \keywords{datasets, neural networks, gaze detection, text tagging}

%% A "teaser" image appears between the author and affiliation
%% information and the body of the document, and typically spans the
%% page.
% \begin{teaserfigure}
%   \includegraphics[width=\textwidth]{sampleteaser}
%   \caption{Seattle Mariners at Spring Training, 2010.}
%   \Description{Enjoying the baseball game from the third-base
%   seats. Ichiro Suzuki preparing to bat.}
%   \label{fig:teaser}
% \end{teaserfigure}

%%
%% This command processes the author and affiliation and title
%% information and builds the first part of the formatted document.
\maketitle
\pagenumbering{arabic} 
\pagestyle{plain}   % remove the head note on the upper corners

\section{Introduction}
\label{sec:intro}

In an era of widespread database usage, SQL remains a fundamental skill for those working with data.
Yet, SQL's rich features and declarative nature can make it challenging to learn and understand.
When students encounter difficulties in debugging their SQL queries, they often turn to instructors and teaching assistants for guidance.
However, this one-on-one approach is limited in scalability.
Syntax errors are easy to fix, but many queries contain subtle semantic errors that may require careful and time-consuming debugging.
To save time, the teaching staff is often tempted to give hints based on how the reference solution query is written,
ignoring what students have written themselves,
but doing so misses opportunities for learning.
A SQL query can be written in many ways that are different in syntax but nonetheless equivalent semantically.
Seasoned teaching staff knows how to guide students through a sequence of steps that,
starting with their own queries, lead them to a corrected version that is equivalent to the solution query but without revealing the solution query.
Our goal is to build a system to help provide this service to students in a more scalable manner.

\begin{Example}\label{ex:intro1}
Consider the following database (keys are underlined) about beer drinkers and bars:
\narrow{Likes}(\underline{\narrow{drinker}}, \underline{\narrow{beer}}),
\narrow{Frequents}(\underline{\narrow{drinker}}, \underline{\narrow{bar}}),
\narrow{Serves}(\underline{\narrow{bar}}, \underline{\narrow{beer}}, \narrow{price}).
Suppose we want to write a SQL query for the following problem:
\emph{For each beer $b$ that Amy likes and each bar $r$ frequented by Amy that serves $b$,
show the rank of $r$ among all bars serving $b$ according to price
(e.g., if $r$ serves $b$ at the highest price, $r$'s rank should be $1$).
We assume that there are no ties.}

The reference solution query $Q^\star$ is given as follows:
\begin{lstlisting}[language=SQL, basicstyle=\small\ttfamily]
SELECT L.beer, S1.bar, COUNT(*)
FROM Likes L, Frequents F, Serves S1, Serves S2
WHERE L.drinker = F.drinker AND F.bar = S1.bar
  AND L.beer = S1.beer AND S1.beer = S2.beer
  AND S1.price <= S2.price
GROUP BY F.drinker, L.beer, S1.bar
HAVING F.drinker = 'Amy';
\end{lstlisting}

Now consider a wrong student query $Q$:
\begin{lstlisting}[language=SQL, basicstyle=\small\ttfamily]
SELECT s2.beer, s2.bar, COUNT(*)
FROM Likes, Serves s1, Serves s2
WHERE drinker = 'Amy'
  AND Likes.beer = s1.beer AND Likes.beer = s2.beer
  AND s1.price > s2.price
GROUP BY s2.beer, s2.bar;
\end{lstlisting}
\end{Example}

\vspace*{-2ex}

Suggesting good hints to help students fix $Q$ is not easy.
First, there are many ways to write a query that is equivalent to $Q^\star$,
and queries that look very different syntactically might be semantically similar or equivalent,
so relying solely on the syntactic difference between $Q$ and $Q^\star$ to propose fixes is ineffective and potentially misleading.
In Example~\ref{ex:intro1}, even though $Q^\star$ has a \HAVING\ clause, it would be confusing to suggest add \HAVING\ to $Q$,
because the condition $\narrow{drinker}{=}\sqlquote{Amy}$ in $Q$'s \WHERE\ serves the same purpose.
Also, even though $Q$ has $\narrow{Likes.beer}{=}\narrow{s2.beer}$ in \WHERE\ while $Q^\star$ has $\narrow{S1.beer}{=}\narrow{S2.beer}$,
the difference is non-consequential because of the transitivity of equality.
Yet another example is $\narrow{s1.price}{>}\narrow{s2.price}$ in $Q$ versus $\narrow{S1.price}{\le}\narrow{S2.price}$ in $Q^\star$.
It would be wrong to suggest changing $>$ to $\le$ in $Q$,
because an examination of the entire $Q$ would reveal that
the student intends \narrow{s2} (and \narrow{s1}) in $Q$ to serve the role of \narrow{S1} (and \narrow{S2}) in $Q^\star$.
The correct fix is actually changing $>$ to $\ge$.%
\footnote{Another wrong hint would be to suggest changing \sql{COUNT(*)} to \sql{COUNT(*)+1} in $Q$'s \SELECT\ instead of changing the inequality
because doing so misses the top-ranked bars.
\oursys\ will not make such a mistake.}

Second, it is often impossible to declare a part of $Q$ as ``wrong'' since one could instead fix the remainder of $Q$ to compensate for it.
For example, we could argue that $\narrow{s1.price}{>}\narrow{s2.price}$ in $Q$ is ``wrong,''
but there exists a correct query containing precisely this condition, e.g., with \sql{(}$\narrow{s1.price}{>}\narrow{s2.price}$ \sql{OR} $\narrow{s1.price}{=}\narrow{s2.price}$\sql{)}.
Hence, it is difficult to formally define what ``wrong'' means.
Instead of basing our approach heuristically on calling out ``wrong'' parts,
we formulate the problem as finding the ``smallest repairs'' to $Q$ that make it \emph{correct}.

Third, hints are for human users, so for a query with multiple issues---which is often the case in practice---%
we must be aware of the cognitive burden on users and not overwhelm them by asking them to make multiple fixes simultaneously.
This desideratum introduces the challenge of planning the sequence of hints and defining appropriate intermediate goals.

Finally, effective hinting faces several fundamental barriers.
Realistically, we cannot hope to always provide ``optimal'' hints
because doing so entails solving the query equivalence problem for SQL, which is undecidable~\cite{trahtenbrot1950impossibility, abiteboul1995foundations, KMO2022, stackexchange:query-equivalence};
even for decidable query fragments, Boolean expression minimization is known to be on the second level of the polynomial hierarchy (precisely $\mathbf{\Sigma}^p_2$~\cite{DBLP:journals/jcss/BuchfuhrerU11}).

To address the challenges, we propose \oursys, a system that,
given a target query $Q^\star$ and a working query $Q$, follows the logical execution flow (i.e., $\FROM {\rightarrow} \WHERE {\rightarrow} \GROUPBY {\rightarrow} \HAVING {\rightarrow} \SELECT$) and produces step-by-step hints for the user to edit the working query to eventually achieve $Q^\star$.
The sequence of steps is guaranteed to lead the user on a correct path to eventual correctness.
The following example shows \oursys\ helps fix the query in Example~\ref{ex:intro1}.

\begin{Example}\label{ex:intro2}
Continuing with Example~\ref{ex:intro1}, \oursys\ automatically generates the sequence of hints below.
Currently built for the teaching staff, \oursys\ only generates the ``repairs'' below;
using these repairs, the teaching staff would then hint the user in natural language.
With the recent advances in generative AI chatbots, it would not be difficult to automate the natural language hints as well;
the advantage of using \oursys\ in that setting would be to provide provable guarantees on the quality of hints,
which otherwise would be difficult, if not impossible, for generative AI to achieve by itself.

\vspace*{0.3ex}
\centerline{\small
\begin{tabular}{rp{0.2\columnwidth}p{0.6\columnwidth}}\hline
Stage & \oursys\ repair & Hint in natural language\\\hline
\FROM\ &  \narrow{Frequents} needed &
\emph{It looks like you are missing one table---read the problem carefully and see what other piece of information you need.}\\\hline
\WHERE\ & $\narrow{s1.price}{>}\narrow{s2.price} \mapsto \narrow{s1.price}{\ge}\narrow{s2.price}$ &
\emph{Your \WHERE\ has a small problem with $\narrow{s1.price}{>}\narrow{s2.price}$.
Think through some concrete examples and see how you may fix it.}\\\hline
\end{tabular}
}
\vspace*{0.3ex}

Note the sequential nature of the hints above; the working query constantly evolves.
\oursys\ first focuses on \FROM\ and will only proceed to \WHERE\ after \FROM\ is ``viable.''
After adding \narrow{Frequents} to \FROM, the user will also need to add appropriate join conditions in \WHERE;
if these were not added correctly, the second step above would suggest additional repairs.
It turns out that for this example, only the above two hints are needed to fix the query.
In particular, \oursys\ knows \emph{not} to suggest spurious hints
such as adding to \narrow{Frequents.drinker} to \GROUPBY\ or changing \narrow{s2.beer} to \narrow{Likes.beer} in \SELECT.
\end{Example}

We make the following contributions:
\begin{itemize}[leftmargin=*]
\item We develop a novel framework that allows \oursys\ to provide step-by-step hints to fix a working SQL query
    with the goal of making it equivalent to a target query.
    This framework formalizes the notion of ``correctness'' for a sequence of hints,
    allowing \oursys\ to guarantee that every hint is actionable and is on the right path to achieve eventual correctness.
    Further, by formulating the hinting problem in terms of finding repair sites in $Q$ with viable fixes,
    we are able to quantify the quality of the hints.
\item Since the optimality of hints, in general, is impossible to achieve due to the foundational hurdles discussed earlier,
    we aim to provide guarantees on the ``local'' optimality of \oursys\ in each step.
    We design practical algorithms with sensible trade-offs between optimality and efficiency.
\item We evaluate the performance and efficacy of \oursys\ experimentally.
    We further perform a user study involving students from current/past database courses offered at the authors' institution.
    Our findings indicate that \oursys\ finds repairs that are optimal or close to optimal in practice under reasonable time,
    and they lead to hints that are helpful for students.
\end{itemize}
\section{Related Work}\label{sec:related-work}

{\bf Debugging Query Semantics.}
There are two main lines of work toward debugging query semantics (as opposed to syntax or performance).
The first line helps debug a query but without knowing the correct (reference) query;
in this regard, it differs fundamentally from \oursys.
Qex~\cite{veanes2010qex} is a tool for generating input relations and parameter values for unit-testing parameterized SQL queries.
SQLLint~\cite{brass2003detecting, brass2004detecting, brass2005proving, brass2006semantic, goldberg2009you} detects suspected semantic errors in a query, alerting users to what may be indicative of efficiency, logical, or runtime errors.
The work highlights a list of common semantic errors made by students and SQL users~\cite{brass2006semantic},
but it does not suggest edits, and fixing the suspected errors will not guarantee that the query is correct.
Habitat~\cite{grust2013observing, dietrich2015sql} is a query execution visualizer that allows users to highlight parts of a query and view their intermediate results.
While it helps users spot possible errors, it gives no edit suggestions if errors exist.
More recently, QueryVis~\cite{leventidis2020queryvis} turns queries into intuitive diagrams, helping users better understand the semantics of the queries and spot potential errors.

The second line of work, more directly related to \oursys, focuses on checking a query against a reference query and/or helping to explain their difference.
However, previous work has not been able to suggest small fixes that will make the user query equivalent to the reference query.
XData~\cite{chandra2015data} checks the correctness of a query by running the query on self-generated testing datasets based on a set of pre-defined common errors, but it provides no guarantees beyond this pre-defined set.
Cosette~\cite{chu2017cosette, chu2017hottsql, chu2018axiomatic} uses constraint solvers and theorem provers to establish the equivalence of two queries or construct arbitrary instances that differentiate them.
From a large database instance, RATest~\cite{miao2019explaining} utilizes data provenance to generate a small, illustrative instance to differentiate queries.
C-instances~\cite{gilad2022understanding} aims at constructing small abstract instances based on c-tables~\cite{imielinski1989incomplete} that can differentiate two given queries in all possible ways.
While Cosette, RATest, and c-instances can provide examples illustrating how two queries are semantically different,
they can only indirectly help users pinpoint errors in the original query; none of them is able to suggest fixes.
Chandra et al. ~\cite{chandra2021edit} developed a grading system that canonicalizes queries by applying rewrite rules and then decides partial credits based on a tree-edit distance between logical plans.
However, as query syntax differs significantly from canonicalized plans after rewrite, edits on a canonicalized plan do not translate naturally to small fixes on the original query, making it hard for users to use these edits as hints.
Finally, SQLRepair~\cite{presler2021sqlrepair} fixes simple errors in an SPJ query using constraint solvers to synthesize/remove \WHERE\ conditions until the query produces correct outputs over all testing instances.
Its scope of error is much narrower than what we consider, and its tests-driven nature offers no guarantee of query equivalence.

{\bf Program Repair and Feedback for GPL.}
In the domain of program repair for general-purpose programming language (GPL), several types of approaches have been developed but none of them can be directly applied or easily transferred to cover SQL.
First, a wrong program is usually aligned with reference program(s) (\cite{ahmed2022verifix, gulwani2018automated, wang2018search}) and fixes are generated based on the selected reference program using various techniques. Such an approach is similar to \oursys, but SQL is essentially different from GPL as SQL is declarative and GPLs are usually procedural. While it is possible to write programs in GPL to simulate the execution of a specific SQL query, there is no well-defined mapping between the syntax of SQL and any GPL. As a result, it is impossible to apply such program repair techniques to SQL in general.
Another approach is to leverage test cases to synthesize ``patches'' for the wrong program so that it returns the same output as the reference program for all test cases (\cite{hua2018towards, perelman2014test, rivers2017data, singh2013automated, xiong2018identifying}). However, such an approach heavily relies on the test cases to cover all possible errors and thus usually fails to guarantee semantic equivalence. 
% Therefore, the feedback generated accordingly is likely to be incomprehensive and potentially misleading. 
%
Besides the traditional approaches, recent work explores %machine learning 
ML algorithms to provide feedback and correction (\cite{berabi2021tfix, bhatia2018neuro, chen2019sequencer, gupta2019deep, gupta2019neural, lutellier2020coconut, piech2015learning}). In addition, large language models such as GPT-3~\cite{brown2020language} have shown an ability to explain the semantics of SQL queries,  %However, %such an approach 
%it 
but does not guarantee the correctness of fixes. 
%feedback or fixes, requiring additional verification efforts.

{\bf Testing query equivalence.}
While the query equivalence problem in general is undecidable 
\cite{Shmueli1993, trahtenbrot1950impossibility, abiteboul1995foundations, ABLMP21-newdbtheorybook},
%\cite{trahtenbrot1950impossibility, abiteboul1995foundations}, 
tools and algorithms are developed to check the equivalence of various classes of queries with restrictions and assumptions \cite{chandra1977optimal, aho1979equivalences, sagiv1980equivalences, klug1988conjunctive, ioannidis1995containment, jayram2006containment, chu2017cosette, chu2017hottsql, chu2018axiomatic, zhou2019automated, wang2022wetune}. Although they give a deterministic answer on equivalence, these tools/algorithms cannot provide any explanation on which parts of the users' queries cause semantic differences from the reference queries.

% old query equivalence paragraph

\section{The \oursystitle\ Framework}
\label{sec:framework}

%\subsection{Notations and Preliminaries}
%\label{sec:framework:prelim}

%\vspace*{-2ex}
\mypar{Queries}
We consider SQL queries that are select-project-join queries with an optional single level of grouping and aggregation.
For simplicity of presentation, we assume these are single-block SQL queries
with \SELECT, % (without \DISTINCT),
\FROM\ (without \JOIN\ operators),
and \WHERE\ (with condition defaulting to \sql{TRUE} if missing) clauses,%
\footnote{\label{footnote:subquery}We can handle a query with common table expressions (\WITH) and subqueries in \FROM\ that are aggregation-free,
as well as non-outer \JOIN{}s in \FROM,
by rewriting the query into single-block SQL.}
together with optional \GROUPBY\ and \HAVING\ clauses.
%In the remainder of this paper, we will
We refer to such a query as an \emph{SPJA} query if it contains grouping or aggregation or \DISTINCT;
otherwise, we will call it an \emph{SPJ} query.

We assume the default bag (multiset) semantics of SQL.
Given query $Q$, let $\Fr{Q}$ denote the cross product of $Q$'s \FROM\ tables (including multiple occurrences of the same table, if any);
and let $\FrWh{Q}$ denote the query that further filters $\Fr{Q}$ by $Q$'s \WHERE\ condition (i.e.,
$\FrWh{Q}$ is a \SELECT\ \sql{*} query with the same \FROM\ and \WHERE\ clauses as $Q$).
Furthermore, if $Q$ is SPJA, let $\FrWhGr{Q}$ denote the (non-relational) query%
\footnote{This query is non-relational because it returns, besides the underlying bag of rows from $\FrWh{Q}$, a partitioning of them into groups.}
that further groups the result rows of $\FrWh{Q}$ according to $Q$'s \GROUPBY\ expressions
(or $\emptyset$ if there are none but $Q$ contains aggregation nonetheless, in which case all result rows belong to a single group).
Finally, if $Q$ is SPJA, let $\FrWhGrHa{Q}$ denote the (non-relational) query
that filters the groups of $\FrWhGr{Q}$ according to $Q$'s \HAVING\ conditions (which defaults to \sql{TRUE} if missing).
When discussing equivalence (denoted $\equiv$) among above queries,
we require that they return the same bag of result rows (ignoring row and column ordering) for any underlying database instance,
and additionally, for queries returning groups, they return the same partitioning of result rows (ignoring group ordering).

\mypar{SMT Solvers}
As with previous work~\cite{chu2018axiomatic, miao2019explaining, zhou2019automated}, we leverage \emph{satisfiability modulo theory} (\emph{SMT}) solvers to
implement various primitives used by our system.
Such a solver can decide whether a formula, modulo the theories it references, is satisfiable, unsatisfiable, or unknown (beyond the solver's capabilities).
Specifically, we use the popular SMT solver Z3~\cite{de2008z3} to implement the following three primitives.
Given two quantifier-free expressions, $\IsEquiv(e_1, e_2)$ tests
whether $e_1 \Leftrightarrow e_2$ (for logic formulae such as those in \WHERE)
or $e_1 = e_2$ (for value experssions such as those in \SELECT\ or \GROUPBY).
Given a logic formula $p$, $\IsUnSat(p)$ and $\IsSat(p)$ return, respectively, whether $p$ is satisfiable or unsatisfiable, respectively.
All above primitives may return ``unknown'' when Z3 is unsure about its answer.
However, when they return true, Z3 guarantees that the answer is not a false positive.
Our algorithms in subsequent sections act only on (true) positive answers from these primitives.
For complex uses, it is often convenient to frame equivalence/satisfiability testing using a \emph{context} $\Context$,
or a set of logical assertions (e.g., types declaration, known constraints, and inference rules) under which testing is done.
We use subscripts to specify the context:
e.g., $\IsUnSat_\Context(p)$ is a shorthand for $\IsUnSat\left((\wedge_{c \in \Context} c) \wedge p\right)$.
\begin{Example}\label{ex:z3}
Consider a query with a \WHERE\ condition stipulating that $A>100$ for an \sql{INT}-typed column $A$,
as well as a \HAVING\ 
condition $\sql{MAX}(A) \ge 101$. We might wonder whether the \HAVING\ condition is unnecessary.
To this end, we call $\IsUnSat_\Context(p)$ with

\begin{sizeddisplay}{\footnotesize}
\begin{align*}
\allowdisplaybreaks
\Context&: \left\{\;\begin{aligned}
    \mathbf{A} \text{ has type }\narrow{Array}(\mathbb{Z})\\
    \forall i \in \mathbb{N}: \mathbf{A}[i] > 100\\
    \sql{MAX} \text{ has type }\narrow{Array}(\mathbb{Z}) \to \mathbb{Z}\\
    \forall i \in \mathbb{N}, \mathbf{X} \text{ of type }\narrow{Array}(\mathbf{Z}): \sql{MAX}(\mathbf{X}) \ge \mathbf{X}[i]
\end{aligned}\;\right\},&  p: \neg (\sql{MAX}(\mathbf{A}) \ge 101).
%\\
%p&: \neg (\sql{MAX}(\mathbf{A}) \ge 101).
\end{align*}
\end{sizeddisplay}
The first two assertions in \Context\ are derived from the type of $A$ and the \WHERE\ conditions;
here the array-typed $\mathbf{A}$ refers to a collection of $A$ values.
The last two specify (some) general inference rules on the SQL aggregate function \sql{MAX}.
Z3 correctly returns true, meaning that $\sql{MAX}(A) \ge 101$ must be true under \Context\ and is therefore unnecessary.
\end{Example}

Our use of Z3 for reasoning with SQL aggregation, such as the example above, goes beyond the practice in previous work,
where aggregation functions are mostly treated as uninterpreted functions.
For example, to test the equality of two aggregates, \cite{zhou2019automated} conservatively checks whether input value sets or multisets for the aggregate function are equal.
In contrast, we encode properties of SQL aggregation functions in a way that allows Z3 to reason with them.
As formulae become more complicated, e.g., with quantifiers and arrays, Z3 no longer offers a complete decision procedure
(as there exists no decision procedure for first-order logic) and may return ``unknown'' more often.
Nonetheless, practical heuristics employed by Z3 allow it to handle many cases of practical uses to \oursys. 

\vspace{-2mm}
\subsection{Approach}\label{sec:framework:approach}
Given a ({\em syntactically correct}) working query $Q$ and a target query $Q^\star$,
\oursys\ provides hints in \emph{stages} to help the user edit the working query incrementally until it becomes \emph{semantically equivalent} to $Q^\star$.
Each stage focuses on one specific syntactic fragment of the working query.
\oursys\ gives actionable hints for the user to edit this fragment with the aim of bringing $Q$ a step ``closer'' to being equivalent to $Q^\star$.
\oursys\ strives to suggest the smallest edits possible and avoid suggesting unnecessary edits.
Upon passing a \emph{viability check}, the working query $Q$ clears the current stage and moves on to the next.
After clearing all stages, \oursys\ guarantees that $Q \equiv Q^\star$ (even if syntactically they are still different).

We now briefly outline the concrete stages of \oursys; the details will be presented in the subsequent sections.

For an SPJ query, there are three stages.
(1) We start with $Q$'s \FROM\ clause (Section~\ref{sec:from}) and make sure that its list of tables can eventually lead to a correct query;
following this stage, $\Fr{Q} \equiv \Fr{Q^\star}$.
(2) Next, we provide hints to repair $Q$'s \WHERE\ clause (Section~\ref{sec:where}) such that $\FrWh{Q} \equiv \FrWh{Q^\star}$,
i.e., the repaired query returns the same sub-multiset of rows as $Q^\star$ that satisfy the \WHERE\ clause, ignoring \SELECT.
(3) Finally, we handle $Q$'s \SELECT\ clause and ensure the working query returns correct output column values.
Importantly, we make inferences of equivalence under the premise that all rows before \SELECT\ already satisfy \WHERE;
this use of \WHERE\ allows us to infer more equivalent cases and avoid spurious hints.

For an SPJA query, there are five stages.
(1) The {\em first stage} handles \FROM\ as in the SPJ case.
(2) The {\em second stage} handles \WHERE, but with a twist.
As we have seen from Example~\ref{ex:intro1}, some condition can be either \WHERE\ or \HAVING, and it would be misleading to hint its absence from \WHERE\ to be wrong;
hence, \oursys\ will look ``ahead'' at the two queries' \HAVING\ and \GROUPBY\ clauses to avoid misleading the user.
At the end of this stage, instead of insisting that $\FrWh{Q} \equiv \FrWh{Q^\star}$ for the original $Q^\star$,
we may rewrite $Q^\star$ (by legally moving some conditions between \WHERE\ and \HAVING) as needed first.
(3) The {\em third stage} is \GROUPBY, where we provide hints to edit $Q$'s \GROUPBY\ expressions to achieve equivalent grouping, i.e., $\FrWhGr{Q} \equiv \FrWhGr{Q^\star}$.
Here, we infer equivalence under the premise that the rows to be grouped all satisfy \WHERE.
(4) The {\em fourth stage} is \HAVING, where we provide hints to repair $Q$'s \HAVING\ condition in the same vein as \WHERE;
however, inferences in this stage would additionally consider both \WHERE\ and \GROUPBY, and they are more challenging because of aggregation functions.
After this stage, we have $\FrWhGrHa{Q} \equiv \FrWhGrHa{Q^\star}$.
(5) The {\em fifth and final stage} is \SELECT, which is similar to the SPJ case,
but with the challenge of handling aggregation functions while simultaneously considering \WHERE, \GROUPBY, and \HAVING.

\mypar{Progress and Correctness}
Note that to clear a stage, the user only needs to come up with a fix to pass the viability checks up to this stage.
Even though \oursys\ may examine the queries in their entirety, the user does not have to think ahead about how to make the entire query correct.%
\footnote{In some cases, just to maintain syntactic correctness, a fix may necessitate trivial edits to fragments handled in future stages:
e.g., if we remove a table from \FROM, we will need to remove references to this table in the rest of the query.
However, the user never needs to worry about making those edits semantically correct---that responsibility falls on future stages.}
Moreover, once a stage is cleared, \oursys\ never requires the user to come back to fix the same fragment again.
This stage-by-stage design with ``localized'' hints helps limit the cognitive burden on the user.

The following theorem formalizes the intuition that this stage-based approach leads to steady, forward progress toward the goal of fixing the working query.
It follows from the observation that our solution for each stage ensures the properties asserted below,
which we will show stage by stage in the subsequent sections.
\begin{theorem}\label{theorem:main}
Let $Q_0 = Q$ denote the initial working query and $Q^\star$ denote the target query.
Let $V_i$ denote the viability check for stage $i$,
and $Q_i$ denote the working query upon clearing stage $i$, where $Q_i$ satisfies $V_1, V_2, \ldots, V_i$.
We say that two queries are \emph{stage-$i$ consistent} if they are identical syntactically except in the fragments that stage $i+1$ and beyond focus on.
For each stage $i$, the following %properties 
hold:
\begin{itemize}[leftmargin=1em]
    \item[]\hspace*{-1em}\textbf{(Hint leads to fix)}
    If $Q_{i-1}$ fails to satisfy $V_i$, there exists a query $\hat{Q}_i$ such that
    $\hat{Q}_i$ satisfies $V_1, V_2, \ldots, V_i$,
    $\hat{Q}_i$ is stage-$(i-1)$ consistent with $Q_{i-1}$,
    and $\hat{Q}_i$ follows the stage-$i$ hint provided by \oursys.
    \item[]\hspace*{-1em}\textbf{(Fix leads to eventual correctness)}
    There exists a query $\hat{Q}$ such that $\hat{Q} \equiv Q^\star$ and $\hat{Q}$ is stage-$i$ consistent with $Q_i$.
\end{itemize}
% \vspace{-3mm}
\end{theorem}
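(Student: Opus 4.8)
The plan is to prove both properties by induction on the stage index $i$, reducing the global statement to the per-stage guarantees established in the subsequent clause-specific sections. The backbone is a \emph{compositionality} observation. Write $P_i(\cdot)$ for the (possibly non-relational) intermediate result produced through the stage-$i$ clause, so that $P_1=\Fr{\cdot}$, $P_2=\FrWh{\cdot}$, $P_3=\FrWhGr{\cdot}$, $P_4=\FrWhGrHa{\cdot}$, and $P_5$ is the full query; then each viability check $V_i$ is exactly the assertion $P_i(Q)\equiv P_i(Q^\star)$ (with the $i=2$ caveat noted below). Since the semantics of every clause handled after stage $i$ is a deterministic function of $P_i$, I would isolate and prove the following lemma: if $P_i(Q')\equiv P_i(Q'')$ through a column correspondence extending the \FROM-mapping $\mapping$, and $Q',Q''$ agree on all fragments that stages $i{+}1,\dots,5$ focus on up to renaming through $\mapping$, then $Q'\equiv Q''$. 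The proof unfolds the bag (and, for the grouping stages, partitioned-bag) semantics of each trailing clause.

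Given this lemma, \textbf{(Fix leads to eventual correctness)} follows directly from $V_i$ rather than from the inductive hypothesis. After clearing stage $i$, $Q_i$ satisfies $V_i$, i.e.\ $P_i(Q_i)\equiv P_i(Q^\star)$. I construct the witness $\hat{Q}$ by gluing: retain the fragments of $Q_i$ for stages $1,\dots,i$ verbatim, and append the fragments of $Q^\star$ for stages $i{+}1,\dots,5$, translated into $Q_i$'s aliases through the correspondence $\mapping$ fixed when $V_1$ was cleared (which is well defined precisely because $V_1$ certifies $\Fr{Q_i}\equiv\Fr{Q^\star}$ together with $\mapping$). By construction $\hat{Q}$ is syntactically valid and stage-$i$ consistent with $Q_i$; since $P_i(\hat{Q})=P_i(Q_i)\equiv P_i(Q^\star)$ and $\hat{Q}$ agrees with $Q^\star$ on the trailing fragments up to $\mapping$, the lemma gives $\hat{Q}\equiv Q^\star$.

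For \textbf{(Hint leads to fix)} I would rely on the per-stage algorithmic guarantees and supply only the existence of a viable fix, deferring its agreement with the emitted hint to the stage's own section. A witness is built as above: take the fragments of $Q_{i-1}$ for stages $1,\dots,i{-}1$ (which already satisfy $V_1,\dots,V_{i-1}$) and replace the stage-$i$ fragment by that of $Q^\star$ translated through $\mapping$; compositionality at level $i{-}1$ shows the result satisfies $V_i$ while preserving $V_1,\dots,V_{i-1}$ and staying stage-$(i{-}1)$ consistent with $Q_{i-1}$. The stage-$i$ completeness result---that the repair search never misses such a fix and that every emitted hint is realizable---then upgrades this existence into a $\hat{Q}_i$ that additionally follows the stage-$i$ hint.

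The hardest part will be the compositionality lemma in the grouping and aggregation stages. For \GROUPBY\ and \HAVING\ the object $P_i$ is non-relational---a bag together with its partition into groups---so equivalence must track the partition, and I must show that the trailing-clause semantics, especially aggregate evaluation in \HAVING\ and \SELECT, depends only on this partitioned bag; this is exactly what the careful definitions of $\FrWhGr{\cdot}$ and $\FrWhGrHa{\cdot}$ are set up to support. A secondary difficulty is the $i=2$ caveat: $V_2$ compares $\FrWh{Q}$ against $\FrWh{\cdot}$ of a \emph{rewritten} target $\hat{Q}^\star$ in which conditions have been legally shifted between \WHERE\ and \HAVING. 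Because this rewrite is equivalence-preserving, $\hat{Q}^\star\equiv Q^\star$, so substituting $\hat{Q}^\star$ for $Q^\star$ throughout the gluing arguments leaves both conclusions intact.
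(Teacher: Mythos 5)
Your proposal matches the paper's approach: the paper proves \Cref{theorem:main} by reducing it to the per-stage lemmas (\Cref{lemma:from-correctness}, \Cref{lemma:where-correctness}, \Cref{lemma:group-fix}, \Cref{lemma:having-fix}, and the \SELECT-stage discussion), each of which asserts exactly the viability, intermediate-equivalence, and eventual-correctness properties, with eventual correctness witnessed by the same gluing construction you describe (e.g., for \FROM, making $\hat{Q}$ isomorphic to $Q^\star$ up to substitution of table aliases from $\Aliases(Q)$). Your explicit compositionality lemma, and your note that it is the conjunction $V_1,\ldots,V_i$ (not $V_i$ alone) together with the stage-2 rewriting of $Q^\star$ that yields $P_i(Q_i)\equiv P_i(Q^\star)$, only make precise what the paper leaves implicit, so the two arguments are essentially identical.
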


We delegate all proofs in this paper to the appendix.

\mypar{Optimality}
Ideally, we would like \oursys\ to suggest the ``best possible'' hints, e.g., those leading to minimum edits to the working query.
Unfortunately, it is impossible for any system to provide such a guarantee in general,
because doing so entails being able to determine the equivalence of SQL queries:
if $Q \equiv Q^\star$ to begin with, the system should not suggest any fix.
It is well-known that the equivalence of first-order queries with only equality comparisons is undecidable~\cite{abiteboul1995foundations}.
Under bag semantics, even the decidability of equivalence of conjunctive queries has not been completely resolved~\cite{KMO2022}.
Once we open up to the full power of SQL, which can express integer arithmetic,
even equivalence of selection predicates becomes undecidable via a simple reduction to the satisfiability of Diophantine equations~\cite{stackexchange:query-equivalence}.

Given the foundational hurdles above, \oursys\ seeks a pragmatic solution.
Instead of offering any global guarantee on the optimality of its hints, which is impossible,
\oursys\ establishes, for each stage, guarantees on the necessity or minimality of its hints under certain assumptions.
For example, for the \FROM\ stage, \oursys\ guarantees its suggested fixes are optimal for SPJ queries,
but for some SPJA queries, it may suggest a fix that turns out to be unnecessary.
As another example, for the \WHERE\ stage, the optimality of \oursys\ depends on, among other things, Z3-based primitives offering \emph{complete} decision procedures.
In each subsequence section, we will state any such assumption explicitly.

Finally, it is important to note that \oursys's progress and correctness properties (Theorem~\ref{theorem:main}) do \emph{not} rely on these assumptions.
In the worst case, the user may be hinted to make some fixes that are unnecessary or unnecessarily big,
but \oursys\ will still ensure that the user gets a correct working query in the end.

% \mypar{Features Not Currently Handled}
%

\mypar{Limitations}
Following \Cref{theorem:main}, \oursys\ is guaranteed to generate correct hints for select-project-join queries with an optional single level of grouping and aggregation. On the other hand, \oursys\ currently has several limitations.
1)~\oursys\ may sometimes suggest suboptimal or even unnecessary fixes (even though they still lead to correct queries), as discussed above;
the reason lies in fundamental hurdles due to the undecidability of SQL query equivalence and the use of heuristics to tame complexity.
2)~\oursys\ currently does not handle \sql{NULL} values and assumes that all database columns are \sql{NOT} \sql{NULL}.
With some additional effort and complexity, \oursys\ can be extended to handle \sql{NULL} using the technique in~\cite{zhou2019automated} of encoding each column with a pair of variables in Z3 (one for its value and the other a Boolean representing whether it is \sql{NULL}).
The same applies to \sql{OUTER} \sql{JOIN}.
3)~Except the case of aggregation-free subqueries in \FROM\ mentioned in Footnote~\ref{footnote:subquery}, \oursys\ does not support subqueries in general.
Subqueries involving aggregation in general cannot be folded into the outer query block.
Subquery constructs such as \sql{NOT} \sql{EXISTS} and \sql{NOT} \sql{IN} entail supporting queries involving the difference operator, which we have not yet studied.
If we do not care about the number of duplicates in the result,
positive subqueries with \sql{EXISTS} and \sql{IN} could be rewritten as part of the join in the outer select-project-join query and supported as such.
However, this approach is unsatisfactory, especially since our handling of \FROM\ (Section~\ref{sec:from}) does assume that duplicates matter.
In general, more work is needed to develop a comprehensive solution for subqueries.
4)~Finally, \oursys\ does not consider database constraints such as keys and foreign keys.
While we can, in theory, encode some constraints as logical assertions and include them as part of the context when calling Z3,
these assertions (with quantifiers) can significantly hamper Z3's performance.
Future work is needed to develop more robust algorithms for incorporating constraints.

\section{\FROM\ Stage}
\label{sec:from}

This stage aims to ensure $\Fr{Q} \equiv \Fr{Q^\star}$.
Recall that a \sql{FROM} clause may reference a table $T$ multiple times, and each reference is associated with a distinct alias (which defaults to the name of $T$).
Each column reference must resolve to exactly one of these aliases.
Let $\Tables(Q)$ denote the multiset of tables in the \sql{FROM} clause of $Q$, and let $\Aliases(Q)$ denote the set of aliases they are associated with in $Q$.
With a slight abuse of notation, given table $T$, let $\Aliases(Q, T)$ denote the subset of $\Aliases(Q)$ associated with $T$ (a non-singleton $\Aliases(Q, T)$  implies a self-join involving $T$).
Given an alias $t \in \Aliases(Q)$, let $\Table(Q, t)$ denote the table that $t$ is associated with in $Q$.

The viability check (Theorem~\ref{theorem:main}, stage 1) for \FROM\ is simple:
$$V_1: \text{Check if}\; \Tables(Q) \mseq \Tables(Q^\star)$$
where $\mseq$ denotes multiset equality.
If the working query $Q$ fails the viability check, \oursys\ simply hints,
for each table $T$ whose counts in $\Tables(Q)$ and $\Tables(Q^\star)$ differ (including cases where $T$ is used in one query but not the other),
that the user should consider using $T$ more or less to make the counts the same.
It is straightforward to see that this hint leads to a fix that makes $\Tables(Q) \mseq \Tables(Q^\star)$,
which enables the user to further edit $Q$ into some $\hat{Q} \equiv Q^\star$ without retouching \FROM:
at the very least, one can make $\hat{Q}$ isomorphic to $Q^\star$ up to the substitution of table references with those in $\Aliases(Q)$.
This observation establishes the progress and correctness properties (see Theorem~\ref{theorem:main}) of \FROM-stage hints,
which we state below along with the remark that $\Fr{Q} \equiv \Fr{Q^\star}$ after this stage.

\begin{lemma}\label{lemma:from-correctness}
\oursys's \FROM-stage hint leads to a fixed working query $Q_1$ that
(1)~passes the viability check $V_1$ $\Tables(Q_1) \mseq \Tables(Q^\star)$;
(2)~satisfies $\Fr{Q_1} \equiv \Fr{Q^\star}$; and
(3)~leads to eventual correctness.
\end{lemma}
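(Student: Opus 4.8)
The plan is to establish the three claims in the order stated, treating (1) and (2) as near-immediate consequences of the form of the viability check and the algebra of cross products, and reserving the real work for (3). For (1), I would argue directly from the shape of the hint: it identifies, for each table $T$ whose counts in $\Tables(Q)$ and $\Tables(Q^\star)$ differ, exactly that discrepancy and asks the user to add or drop occurrences of $T$ until the two counts agree. Applying these edits for every offending $T$ produces, by construction, a query $Q_1$ with $\Tables(Q_1) \mseq \Tables(Q^\star)$, so $Q_1$ passes $V_1$.

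For (2), I would appeal to the definition of $\Fr{\cdot}$ as the cross product of the \FROM\ tables counted with multiplicity, together with the fact that under bag semantics, and up to reordering of rows and columns (which our notion of $\equiv$ ignores), cross product is commutative and associative. Consequently the cross product depends only on the \emph{multiset} of input tables; since $\Tables(Q_1) \mseq \Tables(Q^\star)$, the two cross products are taken over the same multiset, giving $\Fr{Q_1} \equiv \Fr{Q^\star}$.

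The substantive step is (3), which discharges the ``fix leads to eventual correctness'' obligation of Theorem~\ref{theorem:main} by exhibiting the witness $\hat{Q}$. Because $\Tables(Q_1) \mseq \Tables(Q^\star)$, for every table $T$ the alias sets $\Aliases(Q_1, T)$ and $\Aliases(Q^\star, T)$ have equal cardinality, so I would fix a table-preserving bijection $\beta : \Aliases(Q^\star) \to \Aliases(Q_1)$ satisfying $\Table(Q^\star, t) = \Table(Q_1, \beta(t))$ for every alias $t$ (this also accommodates self-joins, where some $\Aliases(\cdot, T)$ is non-singleton). I would then define $\hat{Q}$ to have \emph{exactly} the \FROM\ clause of $Q_1$, and to obtain each of its remaining clauses (\WHERE, \GROUPBY, \HAVING, \SELECT) by rewriting the corresponding clause of $Q^\star$ with every alias $t$ replaced by $\beta(t)$. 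Stage-$1$ consistency of $\hat{Q}$ with $Q_1$ is then immediate, since their \FROM\ clauses are literally identical and they may differ only in the fragments that later stages address. For $\hat{Q} \equiv Q^\star$, I would observe that $\beta$ induces a column permutation carrying each tuple of $\Fr{\hat{Q}}$ to the corresponding tuple of $\Fr{Q^\star}$; because every predicate and output expression of $\hat{Q}$ is precisely the $\beta$-image of its counterpart in $Q^\star$, the two queries evaluate identically on every database instance, modulo row and column order.

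The main obstacle will be making the semantics-preservation argument in (3) fully rigorous rather than merely intuitive. Two points need care: first, that consistent renaming via $\beta$ leaves $\hat{Q}$ syntactically well-formed, i.e., every column reference still resolves to a unique alias; and second, that renaming is meaning-preserving at the level of \emph{individual tuples} (via the induced column permutation), not just at the coarser level of the underlying multiset of tables established in (2). Once this tuple-level correspondence is set up, the equivalence $\hat{Q} \equiv Q^\star$ follows by a routine evaluation on an arbitrary instance, and everything else reduces to the commutativity of cross product and the definition of the hint.
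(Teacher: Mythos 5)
Your proposal is correct and follows essentially the same route as the paper, which notes that the hint trivially yields $\Tables(Q_1) \mseq \Tables(Q^\star)$ and then exhibits the witness $\hat{Q}$ as a copy of $Q^\star$ with its table references substituted by those in $\Aliases(Q_1)$---exactly your bijection $\beta$. Your version merely spells out the tuple-level renaming argument that the paper leaves as a one-line observation.
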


While the correctness of the \FROM-stage hint is straightforward, its optimality is surprisingly strong.
The following lemma states that the viability check is, in fact, necessary---regardless of what could be done in \WHERE\ and \SELECT---under reasonable assumptions.

\begin{lemma}\label{lemma:from-optimality}
Two SPJ queries $Q^\star$ and $Q$ cannot be equivalent under bag semantics if $\Tables(Q^\star) \not\mseq \Tables(Q)$ assuming no database constraints are present, and there exists some database instance for which either $Q^\star$ or $Q$ returns a non-empty result. 
\footnote{The assumption of a not-always-empty result may seem out of the blue but is necessary.
For example, queries \sql{SELECT} \sql{1} \sql{FROM} \sql{R} \sql{WHERE} \sql{FALSE} and \sql{SELECT} \sql{1} \sql{FROM} \sql{R,R} \sql{WHERE} \sql{FALSE} are equivalent---both always return empty results.
However, if at least one of $Q^\star$ and $Q$ can return non-empty results, $\Tables(Q^\star) \mseq \Tables(Q)$ becomes necessary for equivalence.
Our proof of Lemma~\ref{lemma:from-optimality}, in fact, builds on such a non-empty result.}
\end{lemma}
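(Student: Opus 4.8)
The plan is to argue by contradiction on the total output cardinality, exploiting the fact that under bag semantics the number of result rows of an SPJ query is a monomial in the sizes of its \FROM\ tables. Suppose $Q^\star \equiv Q$. Since $\Tables(Q^\star) \not\mseq \Tables(Q)$, there is at least one table $T$ whose multiplicity $a$ in $\Tables(Q^\star)$ differs from its multiplicity $b$ in $\Tables(Q)$ (taking a multiplicity to be $0$ when $T$ is absent from a query); fix such a $T$ and assume without loss of generality $a \neq b$. By the hypothesis that one of the queries is non-empty on some instance, together with the assumed equivalence (which forces $Q^\star$ and $Q$ to be non-empty on exactly the same instances), there is an instance $D_0$ with $\card{Q^\star(D_0)} = \card{Q(D_0)} > 0$.

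The core step is a ``blow-up'' construction. For $n \geq 1$, let $D_0^{(n)}$ be the instance obtained from $D_0$ by replacing the relation for $T$ with $n$ identical copies of every one of its rows, leaving all other relations untouched; since no database constraints are assumed, $D_0^{(n)}$ is a legal instance. I would then establish the scaling identities
\[
\card{Q^\star(D_0^{(n)})} = n^{a}\,\card{Q^\star(D_0)}, \qquad \card{Q(D_0^{(n)})} = n^{b}\,\card{Q(D_0)}.
\]
The justification is that, under bag semantics, the output cardinality of an SPJ query equals the number of rows of $\FrWh{\cdot}$, since projection in \SELECT\ maps each surviving row to exactly one output row and never eliminates duplicates---this is precisely where SPJ, as opposed to SPJA with \DISTINCT, is essential. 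Because the \WHERE\ predicate depends only on the \emph{values} of the rows and cannot distinguish identical copies, every value-combination of the $a$ (resp.\ $b$) occurrences of $T$ that satisfied \WHERE\ in $D_0$ now has $n^{a}$ (resp.\ $n^{b}$) copy-combinations satisfying \WHERE\ in $D_0^{(n)}$, while the contributions of the unchanged tables are unaffected. Hence each surviving cross-product row is replicated exactly $n^{a}$ (resp.\ $n^{b}$) times.

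Combining these identities with equivalence on the instance $D_0^{(n)}$ and the fact that $\card{Q^\star(D_0)} = \card{Q(D_0)}$, I obtain $n^{a}\card{Q^\star(D_0)} = n^{b}\card{Q^\star(D_0)}$ for every $n$. Dividing by the positive quantity $\card{Q^\star(D_0)}$ yields $n^{a} = n^{b}$ for all $n \geq 1$; already $2^{a} = 2^{b}$ forces $a = b$, contradicting the choice of $T$. Therefore no such equivalence can hold.

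I expect the main obstacle to be making the scaling identities airtight---in particular, arguing carefully that (i)~\WHERE\ treats duplicated rows identically, so the count of satisfying combinations scales by exactly $n^{a}$ through the $a$ occurrences of $T$, and (ii)~bag-semantics projection in \SELECT\ preserves multiplicities so these counts survive to the output. It is also worth flagging explicitly where the two hypotheses enter: the ``no constraints'' assumption guarantees that every $D_0^{(n)}$ is a valid instance on which equivalence must hold, and the non-emptiness hypothesis guarantees a positive $\card{Q^\star(D_0)}$ to divide by---without it, as the footnote notes, both queries could be identically empty and the argument collapses.
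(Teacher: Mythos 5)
Your proof is correct and rests on the same core idea as the paper's: duplicate the rows of a table and observe that, for an SPJ query under bag semantics, the output cardinality scales by $n^{k}$ where $k$ is that table's multiplicity in \FROM, so equivalence pins down every multiplicity. The only difference is in the endgame --- the paper blows up all tables at once by distinct primes (and empties tables outside $\Tables(Q^\star)$) and then invokes unique prime factorization, whereas you blow up one table at a time by a varying factor $n$ and conclude from $n^{a}=n^{b}$; both are sound and essentially interchangeable.
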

% The proof starts with a database instance $D$ where one of $Q^\star, Q$ returns non-empty results, creates another instance $D'$ by duplicating the tuples in $D$ number of times equal to a unique prime for some tables and then arguing using the Prime Factorization Theorem \cite{gauss1966disquisitiones} that if the queries must have the same results on $D'$, they must have the same multi-set of tables. 

\mypar{Table Mappings}
To facilitate analysis in subsequent stages, \oursys\ needs a way to ``unify'' table and column references in $Q$ and $Q^\star$ so that all of them use the same set of table aliases.

\begin{Definition}\label{def:table-mapping}
    Given queries $Q^\star$ and $Q$ over the same schema where $\Tables(Q^\star) \mseq \Tables(Q)$,
    a \emph{table mapping} from $Q^\star$ to $Q$ is a bijective function $\mapping: \Aliases(Q^\star) \to \Aliases(Q)$
    with the property that two corresponding aliases are always associated with the same table, i.e., $\forall t \in \Aliases(Q^\star): \Table(Q^\star, t) = \Table(Q, \mapping(t))$.
\end{Definition}

If the queries have no self-joins, it is straightforward to establish this mapping by table names.
With self-joins, however, it can be tricky because we must match multiple roles played by the same table across queries.
The information contained in \FROM\ alone would be insufficient for matching.
One approach is to explore every possible table mapping and select the one that leads to the minimum fix.
Doing so would blow up complexity by a factor exponential in the number of self-joined tables.
\oursys\ instead opts for a heuristic that picks the single most promising table mapping.
% Details are in~\cite{fullversion}, but briefly, 
Here we describe the heuristic briefly. For each alias,
we build a ``signature'' that captures how its columns are used by various parts of the query in a canonical fashion.
We define a distance (cost) metric for the signatures.
Then, for each table involved in self-joins, to determine the mapping between its aliases in $Q$ and $Q^\star$,
we construct a bipartite graph consisting of these aliases and solve the minimum-cost bipartite matching problem.
We illustrate the high-level idea using the example below.

\begin{Example}\label{ex:table-mapping}
Continuing with Example~\ref{ex:intro1}, the following are signatures (one per column) for \narrow{S1} and \narrow{S2} in $Q^\star$ and \narrow{s1} and \narrow{s2} in $Q$.

\centerline{\small\setlength{\tabcolsep}{0.2em}
\begin{tabular}{rr|l|l|l|l}
    &
            & \narrow{S1} in $Q^\star$
                    & \narrow{S2} in $Q^\star$
                            & \narrow{s1} in $Q$
                                    & \narrow{s2} in $Q$
\\\hline\hline
\WHERE\ \&
    & \narrow{bar}:
            & ${=}\{\narrow{F.bar}\}$
                    & ${=}\{\narrow{F.bar}\}$
                            & None
                                    & None
\\
\HAVING\
    & \narrow{beer}:
            & ${=}\{\narrow{L.beer}, \narrow{S2.beer}\}$
                    & ${=}\{\narrow{L.beer}, \narrow{S2.beer}\}$
                            & ${=}\{\narrow{Likes.beer}, \narrow{s2.beer}\}$
                                    & ${=}\{\narrow{Likes.beer}, \narrow{s2.beer}\}$
\\
    & \narrow{price}:
            & ${\le}\{\narrow{S2.price}\}$
                    & ${\ge}\{\narrow{S2.price}\}$
                            & ${>}\{\narrow{s2.price}\}$
                                    & ${<}\{\narrow{s1.price}\}$
\\\hline
\GROUPBY\
    &
            & $\{\narrow{bar}, \narrow{beer}\}$
                    & $\{\narrow{beer}\}$
                            & $\{\narrow{beer}\}$
                                    & $\{\narrow{beer}\}$
\\\hline
\SELECT\
    & \narrow{bar}:
            & $\{ 2 \}$
                    & $\emptyset$
                            & $\emptyset$
                                    & $\{ 2 \}$
\\
    & \narrow{beer}:
            & $\{ 1 \}$
                    & $\{ 1 \}$
                            & $\{ 1 \}$
                                    & $\{ 1 \}$
\\
    & \narrow{price}:
            & $\emptyset$
                    & $\emptyset$
                            & $\emptyset$
                                    & $\emptyset$
\\\hline\hline
\end{tabular}
}

For example, \narrow{S1.beer}'s \WHERE/\HAVING\ signature says that it is involved in an equality comparison with both \narrow{L.beer} and \narrow{S2.beer};
the latter is inferred---\oursys\ automatically adds column references and constants that obviously belong to the same equivalence class.
Likewise, \narrow{S1}'s \GROUPBY\ signature includes both \narrow{bar} and \narrow{beer}, with the latter added because of its equivalence to the \GROUPBY\ column \narrow{L.beer}.
When comparing signatures, all aliases are replaced by table names (which is a heuristic simplification);
therefore, all four \WHERE/\HAVING\ signatures above for \narrow{beer} are considered the same.
In this case, what makes the difference in bipartite matching turns out to be the \SELECT\ signatures for \narrow{bar},
which clearly favors the mapping with $\narrow{S1} \mapsto \narrow{s2}$ and $\narrow{S2} \mapsto \narrow{s1}$.
\end{Example}

Once we have selected the table mapping $\mapping$, we can then ``unify'' $Q^\star$ and $Q$.
For convenience, we simply rename each alias $a$ in $Q^\star$ to $\mapping(a)$;
in subsequent sections, we shall assume that $Q^\star$ and $Q$ have consistent column references.
\section{\WHERE\ Stage}
\label{sec:where}

\WHERE\ is our most involved stage,
aimed at making small edits to the \WHERE\ condition of $Q$ so that it becomes logically equivalent to that of $Q^\star$,
thereby ensuring $\FrWh{Q^\star} \equiv \FrWh{Q}$ (recall from \Cref{sec:framework:approach}).
Let $P$ and $P^\star$ denote the \WHERE\ predicates in $Q$ and $Q^\star$, respectively.
We assume that they have already been unified by the selected table mapping to have the same set of column references, as discussed in \Cref{sec:from}.
The viability check for the \WHERE\ stage (Theorem~\ref{theorem:main}, stage 2) is simply that $P$ is logically equivalent to $P^\star$: 
$$V_2: \text{Check if}\; P \Leftrightarrow P^\star$$

As discussed in \Cref{sec:intro}, if $P \not\Leftrightarrow P^\star$,
there are many different ways to modify $P$ so that becomes logically equivalent to $P^\star$,
and it is impossible to declare any part of $P$ as definitively ``wrong.''
Therefore, we suggest the smallest possible edits on $P$ to reduce the cognitive burden on the user.
We formalize the notion of ``small edits'' below.
We represent $P$ and $P^\star$ using syntax trees, where:
\begin{itemize}[leftmargin=*]
\item Internal (non-leaf) nodes represent logical operators $\land$, $\lor$, and $\lnot$.
    Let $\op(x)$ denote the operator associated with node $x$, and $\Children(x)$ denote the $x$'s child nodes.
    If $\op(x)$ is $\lnot$, $\card{\Children(x)} = 1$.
    If $\op(x) \in \{\land, \lor\}$, $\card{\Children(x)} \ge 2$.
\item Leaf nodes are atomic predicates involving column references and/or literals.
    We treat each unique column reference as a free variable over the domain of the referenced column.
    We support basic SQL types and as well as standard comparison, arithmetic, and string operators to the extent supported by Z3,
    e.g.: $A {>} 5$, $B {\le} 2C{-}10$, $D$ \sql{LIKE} \sql{'Eve\%'}.
\end{itemize}

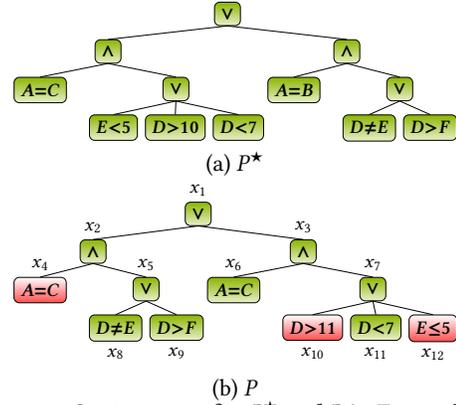
\begin{figure}[t]
\begin{minipage}[b]{0.75\linewidth}
     \centering
    \begin{tikzpicture}[%grow=right,
        scale=.5,
    level 1/.style={level distance=1cm, sibling distance=1mm},
    level 2/.style={sibling distance=6mm, level distance=1cm}, 
    level 3/.style={level distance=1cm,sibling distance=2mm},
    % level 4/.style={level distance=1.0cm,sibling distance=-3mm},
    % level 5/.style={level distance=1.0cm, sibling distance=0mm},
    % level 6/.style={level distance=1.0cm, sibling distance=0mm},
    % snode/.style = {shape=rectangle, rounded corners, draw, align=center, top color=white, bottom color=blue!20},
    logic/.style = {shape=rectangle, rounded corners, draw, align=center, top color=applegreen, bottom color=applegreen!20},
    incorrect/.style = {shape=rectangle, rounded corners, draw, align=center, top color=white, bottom color=red!70}]
    \huge
    \Tree
    [.\node[logic] {$\boldsymbol{\lor}$}; 
        [.\node[logic] {$\boldsymbol{\land}$}; 
            [.\node[logic] {$\boldsymbol{A{=}C}$};]
            [.\node[logic] {$\boldsymbol{\lor}$};
                [.\node[logic] {$\boldsymbol{E{<}5}$};]
                [.\node[logic] {$\boldsymbol{D{>}10}$};]
                [.\node[logic] {$\boldsymbol{D{<}7}$};]
            ]
        ]
        [.\node[logic] {$\boldsymbol{\land}$}; 
            [.\node[logic] {$\boldsymbol{A{=}B}$};]
            [.\node[logic] {$\boldsymbol{\lor}$};
                [.\node[logic] {$\boldsymbol{D{\neq}E}$};]
                [.\node[logic] {$\boldsymbol{D{>}F}$};]
            ]
        ]
    ]
    \end{tikzpicture}\\
    (a)~$P^\star$
\end{minipage}
% \caption{Syntax trees for $P$ in Example~\ref{ex:syntax-tree}.} 

\begin{minipage}[b]{0.75\linewidth}
     \centering
    \begin{tikzpicture}[%grow=right,
        scale=.5,
    level 1/.style={level distance=1cm, sibling distance=1mm},
    level 2/.style={sibling distance=6mm, level distance=1cm}, 
    level 3/.style={level distance=1cm,sibling distance=2mm},
    % level 4/.style={level distance=1.0cm,sibling distance=-3mm},
    % level 5/.style={level distance=1.0cm, sibling distance=0mm},
    % level 6/.style={level distance=1.0cm, sibling distance=0mm},
    % snode/.style = {shape=rectangle, rounded corners, draw, align=center, top color=white, bottom color=blue!20},
    logic/.style = {shape=rectangle, rounded corners, draw, align=center, top color=applegreen, bottom color=applegreen!20},
    incorrect/.style = {shape=rectangle, rounded corners, draw, align=center, top color=white, bottom color=red!70}]
    \huge
    \Tree
    [.\node[logic, label=above:{$x_1$}] {$\boldsymbol{\lor}$}; 
        [.\node[logic, label=above:{$x_2$}] {$\boldsymbol{\land}$}; 
            [.\node[incorrect, label=above:{$x_4$}] {$\boldsymbol{A{=}C}$};]
            [.\node[logic, label=above:{$x_5$}] {$\boldsymbol{\lor}$};
                [.\node[logic, label=below:{$x_8$}] {$\boldsymbol{D{\neq}E}$};]
                [.\node[logic, label=below:{$x_9$}] {$\boldsymbol{D{>}F}$};]
            ]
        ]
        [.\node[logic, label=above:{$x_3$}] {$\boldsymbol{\land}$}; 
            [.\node[logic, label=above:{$x_6$}] {$\boldsymbol{A{=}C}$};]
            [.\node[logic, label=above:{$x_7$}] {$\boldsymbol{\lor}$};
                [.\node[incorrect, label=below:{$x_{10}$}] {$\boldsymbol{D{>}11}$};]
                [.\node[logic, label=below:{$x_{11}$}] {$\boldsymbol{D{<}7}$};]
                [.\node[incorrect, label=below:{$x_{12}$}] {$\boldsymbol{E{\leq}5}$};]
            ]
        ]
    ]
    \end{tikzpicture}\\
    (b)~$P$
\end{minipage}
\caption{Syntax trees for $P^\star$ and $P$ in Example~\ref{ex:syntax-tree}.} 

%\vspace{-5mm}
\label{fig:disjoint-tree}
\end{figure}
\begin{Example}\label{ex:syntax-tree}
Consider the following logical formulae $P^\star$ and $P$ where $A,B,C,D,E$ are integers:
{\small
\begin{itemize}[leftmargin=*]
    \item[] $P^\star$: $(A{=}C \land (E{<}5 \lor D{>}10 \lor D{<}7) \lor (A{=}B \land (D{\neq}E \lor D{>}F))$
    \item[] $P$: $(A{=}C \land (D{\neq}E \lor D{>}F)) \lor (A{=}C \land (D{>}11 \lor D{<}7 \lor E{\leq}5))$
\end{itemize}
}
The syntax tree of $P$ is shown in \Cref{fig:disjoint-tree}.
The syntax tree of $P^\star$ would have the positions of nodes $x_2$ and $x_3$ reversed and red nodes replace with $A{=}B$, $D{>}10$, and $E{<}5$ respectively.
\end{Example}

\begin{Definition}[Repair for SQL Predicate]\label{def:where-repair-site}
Given a quantifier-free logical formulae $P$ represented as a tree, a \emph{repair} of $P$ is a pair $(\set{S}, \set{F})$
where $\set{S}$ is a set of disjoint subtrees of $P$ called the \emph{repair sites},
and $\set{F}$ is function that maps each site $x \in \set{S}$ to a new formulae $\set{F}(x)$ called the \emph{fix} for $x$.
Given a target predicate $P^\star$, a repair $(\set{S}, \set{F})$ for $P$ is \emph{correct} if applying it to $P$---i.e., replacing each $x \in \set{S}$ with $\set{F}(x)$---results in a formulae $P'$ such that $P' \Leftrightarrow P^\star$.
\end{Definition}

\begin{Definition}[Cost of a repair]\label{defn:repair-cost}
Given target predicate $P^\star$, the \emph{cost} of a repair $(\set{S}, \set{F})$ for $P$ is:
\vspace{-1mm}
\begin{equation}
\Cost(\set{S}, \set{F}) = w \cdot |\set{S}|
  + \sum_{s \in \set{S}} \frac{\dist(s, \set{F}(s))}{|P| + |P^\star|},
  \text{where}\label{eq:cost}
\vspace{-1mm}
\end{equation}
\begin{equation}
\dist(s, \set{F}(s)) = |s| + |\set{F}(s)|, \text{and}
\vspace{-1mm}
\end{equation}
$w \in \mathbb{R}^+$ controls the relative weights of the cost components.
\end{Definition}
Here, we simply define $\dist(\cdot, \cdot)$ to be the number of nodes deleted and inserted by the repair;
other notions of edit distance could be used too.
The denominator under $\dist(\cdot, \cdot)$ serves to normalize the measure relative to the sizes of the queries.
Also, note that the $w \cdot |\set{S}|$ term adds a fixed penalty for each additional repair site.
Intuitively, \oursys\ will present all repair sites (without the associated fixes) to the user as a hint.
Even a moderate number of repair sites will pose a significant cognitive challenge---if there were so many issues with $P$, we might as well ask the user to rethink the whole predicate (which would be a single repair site at the root).
In our experiments (\Cref{sec:experiments}), we set $w = 1/6$, and the number of repair sites per \WHERE\ rarely goes above two or three.

\begin{Example}\label{ex:rs-fix-cost}
Consider \Cref{fig:disjoint-tree}.
One correct repair for $P$ consists of three sites $(x_4, x_{10}, x_{12})$ and the corresponding fixes $(A{=}B, D{>}10, E{<}5)$.
The cost for this repair is $3w + \frac{3\times(1+1)}{12+12} = \frac{1}{2} + \frac{1}{4} = 0.75$.

Another correct repair for $P$ consists of two sites $(x_5, x_3)$ and the corresponding fixes $E{<}5 \lor D{>}10 \lor D{<}7$ and $A{=}B \land (D{\neq}E \lor D{>}F)$.
The cost for this repair is $2w + \frac{(4+3)+(5+6)}{12+12} = \frac{1}{3} + \frac{3}{4} \approx 1.08$.

A trivial single-site repair that replaces the entire $P$ with $P^\star$ would have cost $1w + \frac{(12+12)}{12+12} \approx 1.16$.
\end{Example}

{
% \vspace{-6mm}
\begin{algorithm2e}[t] \small

\caption{$\RepairWhere(x, x^\star, n)$}
\label{alg:find-where-repair}
\Input{a wrong predicate $x$, a correct predicate $x^\star$, and a cap $n$ on the number of repair sites}
\Output{a repair $(\set{S}, \set{F})$ with minimum cost}

\Let $\set{S}_\circ = \emptyset, \set{F}_\circ = \emptyset$\;
\Let $c_\circ$ denote the minimum cost so far, and $\infty$ initially\;
\ForEach{set $\set{S}$ of ${\le}n$ disjoint subtrees in $x$, in ascending $\card{\set{S}}$ order}{ 
    \If(\tcp*[h]{cost due to \# sites alone is already too big}){$\Cost(\set{S}, \cdot) \ge c_\circ$}{
        \Return ($\set{S}_\circ$, $\set{F}_\circ$)\tcp*{safe to stop now}
    }
    \If{$x^\star \in \CreateBounds(x, \set{S})$\label{l:createbounds-verify}}{ 
        \Let $\_, \set{F} = \DeriveFixes(x, \set{S}, x^\star, x^\star)$\;\label{l:createbounds-derive-fixes}
        \If{$c_\circ > \Cost(\set{S}, \set{F})$}{
            \Let $\set{S}_\circ = \set{S}$, $\set{F}_\circ = \set{F}$\;
        }
    }
}
\Return $(\set{S}_\circ, \set{F}_\circ)$\;
\end{algorithm2e}
% \vspace{-3mm}
}

\Cref{alg:find-where-repair} is our overall procedure for computing a minimum-cost repair for a predicate.
It considers all possible sets of repair sites,
prioritizing smaller ones because the number of repair sites heavily influences the repair cost,
and stopping once the lowest cost found so far is no greater than a conservative lower bound on the cost of the repairs to be considered.
In the worst case, the number of repairs to be considered is exponential in the size of $P$,
but in practice, the early stopping condition usually kicks in when the number of repair sites is $2$ or $3$, so the number of repairs considered is usually quadratic or cubic in $\card{P}$.

The two key building blocks of \Cref{alg:find-where-repair} are $\CreateBounds$ and $\DeriveFixes$, which we describe in more detail in the remainder of this section.
Intuitively, $\CreateBounds$ (\Cref{sec:where:bounds}) provides a quick and ``exact'' test to determine whether a given set of repair sites could ever lead to a correct repair.
If yes, $\DeriveFixes$ (\Cref{sec:where:fixes}) then finds the ``optimal'' fixes for these repair sites.
Our algorithms use Z3, so their exactness and optimality depend on Z3's completeness for the types of predicates they are given.
$\DeriveFixes$'s optimality further hinges on a Boolean minimization procedure ($\MinBoolExp$)  that it also uses.
On the other hand, since Z3 inferences are sound, progress and correctness (\Cref{sec:framework:approach}) are guaranteed.

\begin{lemma}\label{lemma:where-correctness}
\WHERE-stage hint leads to a fixed working query $Q_2$ with \WHERE\ condition that
1)~passes the viability check $P \Leftrightarrow P^\star$;
2)~satisfies $\FrWh{Q_2} \equiv \FrWh{Q^\star}$; and
3)~leads to eventual correctness.
\end{lemma}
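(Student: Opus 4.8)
The plan is to establish the three claims in the order stated, reducing claims~(1) and~(3) to the two halves of Theorem~\ref{theorem:main} specialized to stage~$2$, and deriving claim~(2) from claim~(1) together with the \FROM-stage guarantee of Lemma~\ref{lemma:from-correctness}. Throughout I would rely only on the \emph{soundness} of Z3 (its positive answers are never false positives), not on its completeness, so that correctness holds unconditionally even in cases where the optimality analysis of $\CreateBounds$ and $\DeriveFixes$ does not go through.

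For claim~(1), the first step is to argue that \Cref{alg:find-where-repair} always returns a \emph{correct} repair in the sense of Definition~\ref{def:where-repair-site} whenever $P \not\Leftrightarrow P^\star$. The key observation is that the search is never vacuous: the single-site repair with $\set{S} = \{\text{root of } P\}$ and $\set{F}(\text{root}) = P^\star$ is always correct, since applying it replaces all of $P$ by $P^\star$, and this candidate is reachable by the enumeration. Hence the algorithm terminates with some repair $(\set{S}, \set{F})$ of finite cost; and because every candidate is admitted only after the Z3-backed test $x^\star \in \CreateBounds(x, \set{S})$ (line~\ref{l:createbounds-verify}) succeeds and $\DeriveFixes$ returns a witnessing fix, soundness guarantees that applying $(\set{S}, \set{F})$ to $P$ yields a predicate $P'$ with $P' \Leftrightarrow P^\star$. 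Taking $Q_2$ to be $Q_1$ with its \WHERE\ predicate replaced by $P'$ gives a query that passes $V_2$; this $Q_2$ simultaneously serves as the witness $\hat{Q}_2$ for the ``hint leads to fix'' half of Theorem~\ref{theorem:main}, since it is stage-$1$ consistent with $Q_1$ (only the \WHERE\ fragment changed) and edits exactly the suggested repair sites.

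For claim~(2), I would combine claim~(1) with the fact that the \WHERE\ stage leaves \FROM\ untouched, so $\Fr{Q_2} = \Fr{Q_1} \equiv \Fr{Q^\star}$ by Lemma~\ref{lemma:from-correctness}. Since $P$ and $P^\star$ have already been unified to reference the same set of column aliases (\Cref{sec:from}), the equivalence $P' \Leftrightarrow P^\star$ means the two predicates evaluate to the same truth value on every tuple over the shared schema. Filtering the bag-equivalent cross products $\Fr{Q_2} \equiv \Fr{Q^\star}$ by pointwise-equal predicates preserves multiplicities, and therefore $\FrWh{Q_2} \equiv \FrWh{Q^\star}$.

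Finally, for claim~(3) I would exhibit the witness $\hat{Q}$ required by the ``fix leads to eventual correctness'' half of Theorem~\ref{theorem:main}. Let $\hat{Q}$ share the \FROM\ and \WHERE\ clauses of $Q_2$ but take its remaining clauses (\GROUPBY, \HAVING, \SELECT, or just \SELECT\ in the SPJ case) verbatim from $Q^\star$, with aliases rewritten through the table mapping $\mapping$; this is well-defined because $\Tables(Q_2) \mseq \Tables(Q^\star)$ ensures every alias of $Q^\star$ corresponds to one over the same underlying table in $Q_2$. By construction $\hat{Q}$ is stage-$2$ consistent with $Q_2$, and since $\FrWh{\hat{Q}} = \FrWh{Q_2} \equiv \FrWh{Q^\star}$ while all downstream clauses coincide with those of $Q^\star$, we obtain $\hat{Q} \equiv Q^\star$. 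I expect the main obstacle to be the careful bookkeeping underlying claims~(2) and~(3): arguing rigorously that \emph{logical} equivalence of the unified predicates lifts to \emph{bag} equivalence of the filtered queries, and that transplanting $Q^\star$'s downstream clauses onto $Q_2$'s \FROM/\WHERE\ is semantically faithful. Both steps depend on the column-reference unification from the \FROM\ stage, which must be invoked explicitly rather than assumed implicitly.
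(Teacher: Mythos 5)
Your proposal is correct and, for the core of the argument, follows the same route as the paper: claim~(1) is reduced to the soundness of the viability test via Lemma~\ref{lemma:correctness-create-bounds} and the existence of correct fixes via Lemma~\ref{lemma:correctness-derive-fixes}, which is exactly the two-step structure of the paper's proof. The only difference is that you additionally spell out claims~(2) and~(3) --- the non-vacuousness of the search (the root single-site repair is always viable), the lifting of logical equivalence of the unified predicates to bag equivalence of $\FrWh{\cdot}$, and the explicit witness $\hat{Q}$ obtained by transplanting $Q^\star$'s downstream clauses --- steps the paper leaves implicit, so your version is a strict elaboration rather than a different approach.
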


\begin{lemma}\label{lemma:where-optimality}
Given $P$ and $P^\star$,
assuming that Z3 inference is complete with respect to the logic exercised by $P$ and $P^\star$,
and that $\MinBoolExp$ finds a minimum-size Boolean formula equivalent to its given input,
the repair returned by $\RepairWhere(P, P^\star, |P|)$ is optimal (i.e., has the lowest possible cost)
if there exists an optimal repair that either contains a single site or has all its sites sharing the same parent in $P$.
\end{lemma}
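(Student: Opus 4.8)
The plan is to reduce the claimed optimality to two supporting guarantees about the building blocks $\CreateBounds$ and $\DeriveFixes$, and then show that the early-stopping enumeration of \Cref{alg:find-where-repair} cannot miss the promised optimum. First I would isolate the two properties that Sections~\ref{sec:where:bounds} and~\ref{sec:where:fixes} must supply under the stated assumptions (Z3 complete on the relevant logic, and exact $\MinBoolExp$): \emph{(i) exactness of $\CreateBounds$} --- for a site set $\set{S}$ that is a single site or whose sites share a common parent in $P$, the test at line~\ref{l:createbounds-verify} succeeds \emph{iff} some assignment of fixes to $\set{S}$ yields a correct repair; and \emph{(ii) optimality of $\DeriveFixes$} --- for such a feasible $\set{S}$, line~\ref{l:createbounds-derive-fixes} returns fixes of cost exactly $\min_{\set{F}}\Cost(\set{S},\set{F})$ over all correct fix assignments. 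In addition, Z3's soundness (no false positives, and correct fixes, as in \Cref{lemma:where-correctness}) guarantees that \emph{every} $\set{S}$ passing line~\ref{l:createbounds-verify} is scored with a genuinely correct repair, so each candidate cost the loop records is the cost of some correct repair, hence at least the optimal cost $c^\star$. This yields the loop invariant $c_\circ \ge c^\star$ (with $c_\circ=\infty$ before any correct repair is found).

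Next I would invoke the structural hypothesis. Let $(\set{S}^\star,\set{F}^\star)$ be an optimal repair that is single-site or same-parent, put $c^\star=\Cost(\set{S}^\star,\set{F}^\star)$ and $k=|\set{S}^\star|$. By (i), $\set{S}^\star$ passes line~\ref{l:createbounds-verify}; by (ii), the fixes returned for it have cost exactly $c^\star$. Thus \emph{if} the loop reaches $\set{S}^\star$ it records a repair of cost $c^\star$, and together with the invariant $c_\circ\ge c^\star$ this forces $c_\circ=c^\star$ on output, i.e.\ optimality. Note that false negatives of $\CreateBounds$ on \emph{non-special} sets, or suboptimal (but still correct) fixes from $\DeriveFixes$ on them, only ever inflate the recorded cost and therefore cannot violate the invariant --- which is precisely why restricting attention to the single-site/same-parent optimum is enough.

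The only delicate point is the early-stopping rule, which I must show never sacrifices optimality. Since the loop visits site sets in nondecreasing $|\set{S}|$ and $\Cost(\set{S},\cdot)=w\,|\set{S}|$ is a valid lower bound on $\Cost(\set{S},\set{F})$, every set not yet visited at a halt has size at least that of the triggering set. If the loop reaches $\set{S}^\star$, the previous paragraph gives optimality. Otherwise it halts at some earlier set $\set{S}_t$ with the condition $w\,|\set{S}_t|\ge c_\circ$; here $c_\circ$ is already finite (else the condition fails) and, since $\set{S}^\star$ is still unvisited, $|\set{S}_t|\le k$. Hence $c^\star\le c_\circ\le w\,|\set{S}_t|\le w\,k\le c^\star$, using $c^\star=w\,k+(\text{a nonnegative distance term})\ge w\,k$; all inequalities collapse, so $c_\circ=c^\star$ and the already-recorded output is optimal. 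In both cases the returned cost equals $c^\star$.

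I expect the main obstacle to lie not in this enumeration argument but in establishing property~(i) --- the \emph{false-negative-freeness} of $\CreateBounds$ --- and in pinning down exactly why (i) and the optimality in (ii) hold only for single-site or common-parent configurations. The difficulty is that when repair sites sit in different subtrees, the constraint (``bound'') each site must meet depends on the truth values contributed by the other sites, so feasibility and cost no longer decompose site-by-site; the common-parent (or single-site) restriction is what localizes each site's requirement into an independent Boolean specification that Z3 and $\MinBoolExp$ can decide and minimize exactly. Making the informal claims that $\CreateBounds$ is an exact feasibility test and that $\DeriveFixes$ returns optimal fixes fully rigorous for these configurations --- and isolating where they would fail for scattered sites --- is the crux, and is exactly what Sections~\ref{sec:where:bounds} and~\ref{sec:where:fixes} must deliver for this lemma.
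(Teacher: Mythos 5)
Your enumeration and early-stopping analysis is sound and is actually more careful than anything the paper writes down for that part of the argument: the invariant $c_\circ \ge c^\star$ (every recorded cost is the cost of a genuinely correct repair, by \Cref{lemma:correctness-derive-fixes}), the fact that reaching $\set{S}^\star$ records a cost equal to $c^\star$, and the collapse $c^\star \le c_\circ \le w\,|\set{S}_t| \le w\,k \le c^\star$ when the loop halts early are all correct. One small correction: the exactness of the viability test (your property~(i)) is not confined to single-site or common-parent configurations --- the paper establishes it for \emph{arbitrary} site sets by combining \Cref{lemma:correctness-create-bounds} (any repair lands inside the computed bound) with \Cref{lemma:correctness-derive-fixes} (whenever $P^\star$ lies in the bound, a correct repair exists). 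What fails for scattered sites is only the optimality of the derived fixes, not the feasibility test.

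The genuine gap is that you defer your property~(ii) --- that $\DeriveFixes$ followed by $\MinFix$ returns a fix of cost exactly $\min_{\set{F}}\Cost(\set{S},\set{F})$ for a single-site (or, after merging, common-parent) set $\set{S}$ --- and that property is precisely what the paper's proof of this lemma consists of. The missing content is the claim that the target bound $[l^\star_s, u^\star_s]$ pushed down to the unique repair site $s$ is \emph{exactly} the set of correct fixes for $s$: any formula inside the bound yields a correct repair (this follows from \Cref{lemma:correctness-derive-fixes}), and, conversely, any correct fix must lie inside the bound, so that $\MinBoolExp$ minimizing over the bound's don't-care truth table is minimizing over all correct fixes. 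The paper proves this by structural induction on $P$: the base case where $s$ is the root (bound $[P^\star,P^\star]$), and cases for $\land$, $\lor$, $\lnot$ showing, e.g., that when $P = c_1\land\cdots\land c_n$ with $c_i$ the site, the siblings' bounds degenerate to $[\,\bigwedge_{j\ne i}c_j,\ \bigwedge_{j\ne i}c_j\,]$ and the pushed-down bound for $c_i$ captures every $f$ with $\bigwedge_{j\ne i}c_j \land f \Leftrightarrow P^\star$. The common-parent case then reduces to the single-site case because $\DeriveFixes$ merges siblings under a commutative $\land$/$\lor$ into one site $r$. You correctly name this as the crux and correctly explain why independence breaks for scattered sites, but naming the crux is not proving it; as written, the proposal is a correct reduction of the lemma to property~(ii), not a proof of the lemma.
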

% Proofs of the above and other lemmas are in the \fullversion.
Note that \Cref{lemma:where-optimality} provides optimality for two important cases that commonly arise in practice:
1)~$P$ makes a single (presumably small) mistake;
2) $P$ is either conjunctive or disjunctive (because all atomic-predicate nodes share the same $\land$ or $\lor$ parent node).

\subsection{Viability of Repair Sites}
\label{sec:where:bounds}

\begin{savenotes}
\begin{algorithm2e}[t] \small
\caption{$\CreateBounds(x, \set{S})$}
\label{alg:create-bounds}
\Input{a predicate $x$, and a set $\set{S}$ of disjoint subtrees (repair sites) of $x$}
\Output{lower and upper bounds for $x$ achievable by fixing $\set{S}$}
\lIf{$x \in \set{S}$}{
    \Return $[\False, \True]$
}\lElseIf{$x$ is atomic}{
    \Return $[x, x]$
}\uElseIf{$\op(x) \in \{\land, \lor\}$}{
    \ForEach{$c \in \Children(x)$}{
        \Let $[l_c, u_c] = \CreateBounds(c, \set{S}[c])$;%
        \footnote{For node $x$ in $P$, $\set{S}[x]$ denotes the subset of $\set{S}$ that belong to the subtree rooted at $x$.}
    }
    \Return $[\Theta_{c \in \Children(x)} l_c,\; \Theta_{c \in \Children(x)} u_c]$ where $\Theta = \op(x)$\nllabel{l:CreateBounds:and-or};
}\Else(\tcp*[h]{$\op(x)$ is $\lnot$}){
    \Let $c = \Children(x)[0]$\tcp*{the only child of $x$}
    \Let $[l_c, u_c] = \CreateBounds(c, \set{S}[c])$\;
    \Return $[\lnot u_c,\; \lnot l_c]$\nllabel{l:CreateBounds:not};
}
\end{algorithm2e}
\end{savenotes}

The key idea is that, given a set of repair sites in $P$, we can quickly compute a ``bound'' that precisely defines what can be accomplished by \emph{any} fixes at these sites (and only at these sites).
We first give the definition of bounds and introduce some notations.
Give quantifier-free logical formulae $P_\bot$, $P$, and $P_\top$ such that $P_\bot \Rightarrow P \Rightarrow P_\top$,
we say that $[P_\bot, P_\top]$ is a \emph{bound} for $P$, denoted $P \in [P_\bot, P_\top]$.
We call $P_\top$ an \emph{upper bound} of $P$ and $P_{\bot}$ a \emph{lower bound} of $P$.

$\CreateBounds(P, \set{S})$ (\Cref{alg:create-bounds}) computes a precise bound for any predicate that can be obtained by fixing $P$ at the given set $\set{S}$ of repair sites.
It works by computing a bound for each node in $P$ in a bottom-up fashion, starting from the repair sites or leaves of $P$.
We call these bounds \emph{repair bounds}.
Intuitively, the repair bound at a repair site would be $[\False, \True]$, because a fix can change it to any logical formula.
If a subtree contains no repair sites underneath, it would have a very tight repair bound of $[p, p]$, where $p$ denotes the formulae corresponding to the subtree, which is unchangeable by the given repair.
The internal logical nodes combine and transform these bounds in expected ways in \Cref{alg:create-bounds}.

\begin{Example}\label{ex:create-bounds}
Given repair sites $\{ x_4, x_{10}, x_{12} \}$ for $P$ in \Cref{fig:disjoint-tree}, $\CreateBounds$ computes the repairs bounds shown below.

%\begin{table}[h]
%    \centering   
\centerline{ \small
\begin{tabular}{|c|c|c|}\hline
Node(s) & repair lower bound & repair upper bound\\\hline\hline
$x_4$ & $\False$ & $\True$\\\hline
$x_8, x_9, x_5$ & \multicolumn{2}{c|}{original predicate in $P$}\\\hline
$x_2$ & $\False$ & $D{\neq}E \lor D{>}F$\\\hline
$x_6$ & \multicolumn{2}{c|}{original predicate in $P$}\\\hline
$x_{10}$ & $\False$ & $\True$\\\hline
$x_{11}$ & \multicolumn{2}{c|}{original predicate in $P$}\\\hline
$x_{12}$ & $\False$ & $\True$\\\hline
$x_7$ & $D{<}7$ & $\True$\\\hline
$x_3$ & $A{=}C \land D{<}7$ & $A{=}C$\\\hline
$x_1$ ($P$) & $A{=}C \land D{<}7$ & $D{\neq}E \lor D{>}F \lor A{=}C$\\\hline
\end{tabular}
}
%\caption{Repair lower and upper bounds in Example~\ref{ex:create-bounds}}
%\label{tab:repair-bounds}
%\end{table}
%in Table~\ref{tab:repair-bounds}.
\end{Example}
The following shows that repair bounds computed by \CreateBounds\ are valid.
The proof uses an induction on the structure of $P$.
\begin{lemma}[Validity of Repair Bounds]\label{lemma:correctness-create-bounds}
Given a predicate $P$ and a set $\set{S}$ of repair sites,
$\CreateBounds(P, \set{S})$ outputs two predicates $P_{\bot}$ and $P_{\top}$,
such that applying any repair $(\set{S}, \set{F})$ (with the given $\set{S}$) will result in a predicate $P' \in [P_{\bot}, P_{\top}]$.
\end{lemma}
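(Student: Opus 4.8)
The plan is to prove a slightly stronger, \emph{local} statement by structural induction on the syntax tree and then obtain the lemma by instantiating it at the root. Concretely, for every node $x$ of $P$ I would show that $\CreateBounds(x, \set{S}[x])$ returns a pair $[l_x, u_x]$ with the property that for \emph{every} fix function $\set{F}$ defined on the sites $\set{S}[x]$ lying within the subtree rooted at $x$, the repaired subtree $x'$ (obtained by substituting each site $s \in \set{S}[x]$ with $\set{F}(s)$) satisfies $l_x \Rightarrow x' \Rightarrow u_x$. Quantifying over all $\set{F}$ \emph{inside} the induction hypothesis is the key modeling choice: it is what lets the recursion compose at internal nodes. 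The disjointness of the repair sites is what makes the recursion well-founded: it guarantees that $\set{S}[x]$ splits as the disjoint union of the $\set{S}[c]$ over $x$'s children, and that whenever $x$ is itself a site no descendant of $x$ is a site, so the two branches of the algorithm at a node never interfere.

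For the base cases: if $x \in \set{S}$, the algorithm returns $[\False, \True]$, and since $\False \Rightarrow x' \Rightarrow \True$ holds for \emph{any} formula $x'$ whatsoever, the bound is trivially valid regardless of the chosen fix. If $x$ is atomic and not a site, then no fix touches it, so $x' = x$, and the tight bound $[x, x]$ returned by the algorithm is valid.

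The inductive step rests on two elementary logical facts. For an $\land$ or $\lor$ node I would invoke \emph{monotonicity}: assuming $l_c \Rightarrow c' \Rightarrow u_c$ for each child $c$ (the induction hypothesis), combining the children under the node's operator $\Theta \in \{\land, \lor\}$ preserves both implications, giving $\Theta_c\, l_c \Rightarrow \Theta_c\, c' \Rightarrow \Theta_c\, u_c$; since $\Theta_c\, c'$ is exactly the repaired subtree $x'$, this matches the bound returned on line~\ref{l:CreateBounds:and-or}. For a $\lnot$ node I would invoke \emph{antimonotonicity}: from $l_c \Rightarrow c' \Rightarrow u_c$, negation reverses both implications to yield $\lnot u_c \Rightarrow \lnot c' \Rightarrow \lnot l_c$, i.e. $\lnot u_c \Rightarrow x' \Rightarrow \lnot l_c$, which is precisely the swapped bound returned on line~\ref{l:CreateBounds:not}. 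In each case the argument is just propagation (or contrapositive) of implications, so no Z3 reasoning or semantic model is needed---the bounds are syntactic logical consequences.

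Instantiating the local claim at the root node of $P$ (where $\set{S}[P] = \set{S}$) yields $P_\bot = l_P$ and $P_\top = u_P$ with $P_\bot \Rightarrow P' \Rightarrow P_\top$ for every repair $(\set{S}, \set{F})$, which is the lemma. I do not expect a genuine obstacle, as this is a routine structural induction; the one point demanding care is the $\lnot$ case, where the lower and upper bounds swap and every implication flips direction. Getting this reversal right---and checking that it composes correctly with the monotone $\land/\lor$ cases sitting above and below a negation in the tree---is the only place where an error could plausibly creep in.
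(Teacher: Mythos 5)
Your proposal is correct and follows essentially the same route as the paper's proof: a structural induction on the syntax tree with the universally-quantified-over-fixes bound as the inductive invariant, the same two base cases ($[\False,\True]$ at a site, $[x,x]$ at an untouched atom), and monotonicity of $\land/\lor$ versus implication-reversal under $\lnot$ for the inductive step. No gaps.
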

\Cref{lemma:correctness-create-bounds} immediately yields a method for deciding whether a candidate set $\set{S}$ of repair sites is viable:
if the target formula $P^\star \notin [P_\bot, P_\top]$ given $\set{S}$, then there does not exist a set of correct fixes $\set{F}$ for $\set{S}$.
The next natural question to ask is: if the target formula $P^\star \in [P_\bot, P_\top]$ given $\set{S}$, is it always possible to find some correct fixes?
The answer to this question is yes---and \Cref{sec:where:fixes} will provide constructive proof.
Hence, repair bounds provide a \emph{precise} test of whether a set $\set{S}$ of repair sites is viable.

For example, continuing from Example~\ref{ex:create-bounds}, using Z3, it is easy to verify that 
$P^\star \in [A{=}C \land D{<}7,\; D{\neq}E \lor D{>}F \lor A{=}C]$;
therefore, $\{x_4, x_{10}, x_{12}\}$ is a viable set of repair sites for $P$ with respect to $P^\star$.

\subsection{Derivation of Fixes}
\label{sec:where:fixes}

{
% \vspace{-3mm}
\begin{algorithm2e}[t] \small
\caption{$\DeriveFixes(x, \set{S}, l^\star, u^\star)$}
\label{alg:derive-fixes}
\Input{a predicate $x$, a set $\set{S}$ of disjoint subtrees (repair sites) of $x$,
    and a target bound $[l^\star, u^\star]$ for $x$ to achieve by fixes}
\Output{a repair represented as a set of $(s,f)$ pairs, one for each $s \in \set{S}$}

\lIf{$x \in \set{S}$ \nllabel{l:DeriveFixes:base1:1}}{
    \Return $\{(x, \MinFix(l^\star, u^\star))\}$ \nllabel{l:DeriveFixes:base1:2}
}\lElseIf{$x$ is atomic \nllabel{l:DeriveFixes:base2:1}}{
    \Return $\emptyset$ \nllabel{l:DeriveFixes:base2:2}
}\ElseIf{$\op(x)$ is $\lnot$ \nllabel{l:DeriveFixes:not:start}}{
    \Let $c = \Children(x)[0]$\tcp*{the only child of $x$}
    \Return $\DeriveFixes(c, \set{S}[c], \lnot u^\star_0, \lnot l^\star_0)$\nllabel{l:DeriveFixes:not:end};
}

\Let $\Theta = \op(x)$ \tcp*{either $\land$ or $\lor$ at this point}
\ForEach{$c \in \Children(x)$}{
    \Let $[l_c, u_c] = \CreateBounds(c, \set{S}[c])$\;
}
\Let $\set{R} = \Children(x) \cap \set{S}$ \tcp*{children of $x$ being repaired}
\lIf{$\set{R} = \emptyset$}{
    \Let $r = \emptyset$ and $\set{C} = \Children(x)$%
}\Else(\tcp*[h]{treat all children being repaired as one}){
    \Let $r = \Theta_{c \in \set{R}} c$ and $[l_r, u_r] = [\False, \True]$\;
    \Let $\set{C} = \Children(x) \setminus \set{R} \union \{r\}$\;
}
\Let $\set{F} = \emptyset$ \tcp*{result set of $(s,f)$ pairs to be computed}
\ForEach{$c \in \set{C}$\nllabel{l:DeriveFixes:recurse:start}}{
    \tcp{Combine bounds from all other children:}
    \Let $[l', u'] = [\Theta_{c' \in \set{C} \setminus \{c\}} l_{c'},\;\Theta_{c' \in \set{C} \setminus \{c\}} u_{c'}]$\;
    \uIf{$\Theta$ is $\land$ \nllabel{l:DeriveFixes:and:start}}{
        \Let $l_c^\star = l^\star $;
        \Let $u_c^\star = u_c \land (u^\star \lor \lnot u')$\;\nllabel{l:DeriveFixes:and:end}
    }\Else(\tcp*[h]{$\Theta$ is $\lor$} \nllabel{l:DeriveFixes:or:start}){
        \Let $l_c^\star = l_c \lor (l^\star \land \lnot l') $;
        \Let $u_c^\star = u^\star$\; \nllabel{l:DeriveFixes:or:end}
    }
    \lIf{$c$ is not $r$}{
        \Let $\set{F} = \set{F} \cup \DeriveFixes(c, \set{S}[c], l_c^\star, u_c^\star)$%
    }\lElse{\nllabel{l:DeriveFixes:recurse:end}
        \Let $\set{F} = \set{F} \cup \DistributeFixes(\MinFix(l_c^\star, u_c^\star), \set{C})$ %
    }
}
\Return $\set{F}$\;

% \tcp{$\op(x)$ is $\land$ or $\lor$; refine child bounds as needed}
% \Let $c_0, c_1$ denote the two children of $x$\;
% \Let $[l_0, u_0] = \CreateBounds(c_0, \set{S}[c_0])$\;
% \Let $[l_1, u_1] = \CreateBounds(c_1, \set{S}[c_1])$\;
% \uIf{$\op(x)$ is $\land$\nllabel{l:DeriveFixes:and:start}}{
%     \Let $l_0^\star = l^\star$\tcp*{must be expanded to $l^\star$} \nllabel{l:DeriveFixes:and:second}
%     \Let $u_0^\star = u_0 \land (u^\star \lor \lnot u_1)$\tcp*{tighten to $u^\star$, except what is\\
%         guaranteed to be excluded by other child, and don't over-tighten}
%     \Let $l_1^\star = l^\star$\tcp*{symmetric to $l_0^\star$}
%     \Let $u_1^\star = u_0 \land (u^\star \lor \lnot u_0)$\tcp*{symmetric to $u_0^\star$} \nllabel{l:DeriveFixes:and:end}
% }\Else(\tcp*[h]{$\op(x)$ is $\lor$}\nllabel{l:DeriveFixes:or:start}){
%     \Let $l_0^\star = l_0 \lor (l^\star \land \lnot l_1)$\tcp*{expand to $l^\star$, except what is\\
%         guaranteed to be included by other child, and don't over-expand \nllabel{l:DeriveFixes:or:second}}
%     \Let $u_0^\star = u^\star$\tcp*{must be restricted to $u^\star$}
%     \Let $l_1^\star = l_1 \lor (l^\star \land \lnot l_0)$\tcp*{symmetric to $l_0^\star$}
%     \Let $u_1^\star = u^\star$\tcp*{symmetric to $u_0^\star$} \nllabel{l:DeriveFixes:or:end}
% }
% \Return $\DeriveFixes(c_0, \set{S}[c_0], l^\star_0, u^\star_0) \cup \DeriveFixes(c_1, \set{S}[c_1], l^\star_1, u^\star_1)$;
\end{algorithm2e}
% \vspace{-6mm}
}

Suppose the target formula $P^\star$ falls within the repair bound $[P_\bot, P_\top]$ computed by $\CreateBounds(P, \set{S})$. 
We now introduce \DeriveFixes\ (\Cref{alg:derive-fixes}) that computes correct fixes $\set{F}$ for $\set{S}$.
The idea is to traverse $P$'s syntax tree top-down and derive a \emph{target bound} for each node $x$.
As long as we repair subtrees rooted at $x$'s children such that the resulting predicates fall within their respective target bounds,
we will have a repair for $x$ that makes its result predicate fall within $x$'s target bound.
We start from $P$'s root with the desired target bound $[P^\star, P^\star]$ and ``push it down'';
whenever we reach a repair site, its fix would simply be the smallest formula (found by \MinFix)
that falls within the target bound we have derived for the repair site.

The intuition behind how to ``push down'' the target bound at node $x$ to its children is as follows.
First, the repair bound on a child $c$ of $x$ dictates what repairs are possible---the target bound we set for $c$ must be bound by its repair bound.
However, we want to tighten the repair bound as little as possible because a looser target bound gives \MinFix\ more freedom in finding a small formula.
As a simple example, consider the target bound $[a_1 \land a_2,\; (a_1 \land a_2) \lor a_3]$, where $a_1, a_2, a_3$ are independent atomic predicates.
The smallest formula within this bound is $a_1 \land a_2$.
However, if the target bound were looser, e.g., $[a_1 \land a_2 \land a_3,\; (a_1 \land a_2) \lor a_3]$,
the smallest formula within this new bound would be just $a_3$, smaller than before.

Lines~\ref{l:DeriveFixes:recurse:start}--\ref{l:DeriveFixes:recurse:end} of \Cref{alg:derive-fixes} spells out our strategy.
We will illustrate the key ideas with \Cref{ex:create-bounds} and \Cref{fig:disjoint-tree}.
Consider pushing down the target bound of $[P^\star, P^\star]$ at $x_1$ to $x_2$ and $x_3$.
Note that our choices of target bounds for $x_2$ and $x_3$ are constrained by their respective repair bounds in the table of \Cref{ex:create-bounds};
in general, we will need to raise these lower bounds and/or lower these upper bounds in a way such that any repairs on $x_2$ and $x_3$ within these bounds ensure that $x_1$'s target bound is met.
Let us focus on setting the target bound for $x_2$.
As argued above, we would like it to be as loose as possible.
Thankfully, because $x_1 \Leftrightarrow x2 \lor x3$, $x_3$ can help ``cover'' some of $x_1$.
Specifically, no matter how we end up repairing $x_3$, we know it is lower-bounded by $A{=}C \land D{<}7$ (denote this formula by $l'$).
Hence, $x_3$ will certainly cover the $P^\star \land l'$ part of $P^\star$,
leaving $x_2$ responsible to cover only $P^\star \land \lnot l'$.
This observation motivates us to set the lower target bound for $x_2$ by raising its lower repair bound (denote it by $l_c$)
to $l_c \lor (P^\star \land \lnot l')$ (Line~\ref{l:DeriveFixes:or:end}) instead of all the way up to $l_c \lor P^\star$.
On the other hand, $x_3$ does not help with setting the upper target bound for $x_2$.
We have to set $x_2$'s upper target bound to $P^\star$, because if $x_2$ ``overshoots'' $P^\star$,
$\lor$-ing it with any $x_3$ formula will not bring it down.
In sum, we set the target bound for $x_2$ as $[l_c \lor (P^\star \land \lnot l'), P^\star] = [P^\star \lnot(A{=}C \land D{<}7), P^\star]$.
A symmetric argument leads to setting the target bound for $x_3$ as $[(A{=}C \land D{<}7) \lor P^\star, P^\star]$
(in this case $x_2$ offers no help to $x_3$ because it is lower-bounded only by \False).
The intuition behind pushing the target bound through $\land$ is analogous to that described above for $\lor$
but instead boils down to lowering upper bounds as little as possible (as opposed to raising lower bounds).
Completing the rest of \Cref{ex:create-bounds}, we show the target bounds derived by \DeriveFixes\ for $P$ given repair sites $\{ x_4, x_{10}, x_{12} \}$ in Table~\ref{tab:derivefixes-bounds}.

\begin{table}[t]
    \centering    \centerline{\small
\begin{tabular}{|c|c|c|}\hline
Node(s) & target lower bound & target upper bound\\\hline\hline
$x_1$ ($P$) & $P^\star$ & $P^\star$\\\hline
$x_2$ & $P^\star \land \lnot (A{=}C \land D{<}7)$ & $P^\star$\\\hline
$x_4$ & $P^\star \land \lnot (A{=}C \land D{<}7)$ & $P^\star \lor \lnot (D{\neq}E \lor D{>}F)$\\\hline
$x_5, x_8, x_9$ & \multicolumn{2}{c|}{same as in original predicate}\\\hline
$x_3$ & $P^\star \lor (A{=}C \land D{<}7)$ & $P^\star$\\\hline
$x_6$ & \multicolumn{2}{c|}{same as in original predicate}\\\hline
$x_7$ & $P^\star \lor (A{=}C \land D{<}7)$ & $P^\star \lor \lnot(A{=}C)$\\\hline
$x_{10} \land x_{12}$ & $P^\star \land \lnot(D{<}7)$ & $P^\star \lor \lnot(A{=}C)$\\\hline
$x_{11}$ & \multicolumn{2}{c|}{same as in original predicate}\\\hline
\end{tabular}
}
\caption{Target lower and upper bounds in Example~\ref{ex:syntax-tree}}
    \label{tab:derivefixes-bounds}
\end{table}

% \begin{Example}\label{ex:derive-fixes}
% Continuing from \Cref{ex:create-bounds}, we show the target bounds derived by \DeriveFixes\ for $P$ given repair sites $\{ x_4, x_{10}, x_{12} \}$ in Table~\ref{tab:derivefixes-bounds}.

% \end{Example}

Another aspect of \DeriveFixes\ worth mentioning is its handling of the case when multiple repair sites have the same $\land$ or $\lor$ parent
(which is common because many queries in practice are conjunctive; therefore, their trees have only two levels- the root and the leaves).
Since $\land$ and $\lor$ are commutative, all such sites can be combined into effectively one site ($r$ in \Cref{alg:derive-fixes}) to be fixed.
In \Cref{ex:syntax-tree} above, $x_{10}$ and $x_{12}$ are handled in this manner.
%Once we obtain a fix for $r$ using \MinFix, it can then be distributed among the constituent sites. 
Once we obtain a fix for $r$ using \MinFix\ (in conjunctive normal form for $\land$ or disjunctive normal form for $\lor$),
\DistributeFixes\ distributes the $r$'s clauses to the repair sites (Line \ref{l:DeriveFixes:recurse:end})
based on syntactic similarities between them.\label{sec:where:distribute-fixes}

The following is the main result of this section, which affirms that so long as a candidate set $\set{S}$ of repair sets passes the repair bound check in \Cref{sec:where:bounds}, there must exist a correct repair for $\set{F}$ and \DeriveFixes\ will find it.
This lemma and \Cref{lemma:correctness-create-bounds} together imply that our repair bound check is \emph{exact}.

\begin{lemma}[Existence of Correct Repair]\label{lemma:correctness-derive-fixes}
Suppose $P^\star \in \CreateBounds(P, \set{S})$.
$\DeriveFixes(P, \set{S}, P^\star, P^\star)$ returns $\set{F}$
such that applying $(\set{S},\set{F})$ to $P$ yields a formula equivalent to $P^\star$.
\end{lemma}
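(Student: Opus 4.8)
The plan is to prove the statement by a single structural induction on the syntax tree of $P$ that mirrors the top-down recursion of $\DeriveFixes$. The key device is to track, for each node $x$ reached with target bound $[l^\star,u^\star]$ and repair bound $[l_x,u_x]=\CreateBounds(x,\set{S}[x])$, not the raw target bounds but the \emph{effective} bounds $L_x := l^\star\land u_x$ and $U_x := u^\star\lor l_x$. Let $x'$ denote the predicate obtained after applying the fixes the recursion computes inside the subtree at $x$. The entire argument reduces to the invariant $L_x \Rightarrow x' \Rightarrow U_x$ holding at every node: at the root the call is $\DeriveFixes(P,\set{S},P^\star,P^\star)$, and the hypothesis $P^\star\in\CreateBounds(P,\set{S})$ gives $l_P\Rightarrow P^\star\Rightarrow u_P$, so $L_P\equiv P^\star$ and $U_P\equiv P^\star$; the invariant then forces $x'\equiv P^\star$, which is exactly the claim since $x'$ is the fully repaired $P$.

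First I would discharge a prerequisite that keeps $\MinFix$ well-defined everywhere it is invoked: since $\MinFix(l,u)$ returns a formula in $[l,u]$ only when $l\Rightarrow u$, I must show every repair site (and every merged site $r$) receives a target with $l^\star\Rightarrow u^\star$. For this I establish the simpler invariant $L_x\Rightarrow U_x$, i.e.\ $(l^\star\land u_x)\Rightarrow(u^\star\lor l_x)$, at every node. It holds at the root, and I would check it pushes through each operator of \Cref{alg:derive-fixes}: for $\land$ and $\lor$ children a short pointwise check (splitting on whether the combined ``other children'' term $u'$ or $l'$ is active) reduces the child's inequality to the parent's, and the $\lnot$ case is literally the contrapositive of the parent's (using $\lnot l_c=u_x$, $\lnot u_c=l_x$). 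Because a site has repair bound $[\False,\True]$, there $L_x\Rightarrow U_x$ specializes to $l^\star\Rightarrow u^\star$, so $\MinFix$ always sees a valid bound and returns a genuine formula in $[l^\star,u^\star]$; correspondingly $\DistributeFixes$ splits that formula's CNF/DNF clauses so that the conjunction (for an $\land$-parent) or disjunction (for an $\lor$-parent) of the site fixes is equivalent to it.

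With validity in hand, I would prove the main invariant $L_x\Rightarrow x'\Rightarrow U_x$ by induction. The base cases are routine: an atomic non-site leaf has $x'=x$ and both implications are trivial; a site has $x'=\MinFix(l^\star,u^\star)\in[l^\star,u^\star]=[L_x,U_x]$; the merged child $r$ behaves identically through $\DistributeFixes$. For $\lnot$, the child bounds $[\lnot u^\star,\lnot l^\star]$ make the child's two implications exactly the contrapositives of those for $x=\lnot c$. The substantive case is $\land$/$\lor$, where $x'=\Theta_{c\in\set{C}}\,c'$ over the (possibly merged) children and each $c'$ satisfies $L_c\Rightarrow c'\Rightarrow U_c$ by induction. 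One direction composes directly: for $\lor$, each $c'\Rightarrow U_c=u^\star\lor l_c$, hence $\bigvee_c c'\Rightarrow u^\star\lor l_x=U_x$ (and dually for $\land$). The opposite direction I argue pointwise: for $\lor$, at any assignment satisfying $L_x=l^\star\land u_x$, pick a child $c_0$ with $u_{c_0}$ true (one exists since $u_x=\bigvee_c u_c$); if the other children's guaranteed lower bound $l'$ is false there, then $l^\star_{c_0}$ holds and $L_{c_0}$ fires $c_0'$; otherwise some sibling $c_1$ has $l_{c_1}$ (hence $u_{c_1}$) true and $L_{c_1}$ fires $c_1'$. The $\land$ case is dual.

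I expect this last combination step to be the main obstacle, and it is worth flagging why. The children do \emph{not} individually land inside their nominal target intervals: a child whose target lower bound exceeds what its own repair bound can achieve will fall short, while a sibling overshoots to compensate, so one cannot reason node-by-node that each $c'$ meets its own target. This ``cooperation'' is precisely what motivates replacing the raw target bounds by the effective bounds $L_x,U_x$ (the target intersected/unioned with what is actually achievable): they are the quantities that genuinely propagate through both $\land$ and $\lor$, they absorb the sibling compensation, and they still collapse to $[P^\star,P^\star]$ at the root. Getting this bookkeeping right---together with correctly folding the merged site $r$ and $\DistributeFixes$ into the $\Theta_{c}\,c'$ combination---is the delicate part; the remaining per-operator implications are mechanical.
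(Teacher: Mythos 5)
Your proposal is correct, and it takes a genuinely different---and in one respect more careful---route than the paper's proof. The paper argues in two phases: a top-down claim (H1) that the target bound pushed to every node is contained in that node's repair bound, and a bottom-up claim (H2) that the repaired subtree at every node lands inside its \emph{raw} target bound, after which the $\land$/$\lor$ combination identities finish the job. You instead run a single induction on the \emph{effective} bounds $L_x=l^\star\land u_x$ and $U_x=u^\star\lor l_x$, and the concern that motivates this---that a non-site child whose repair bound is strictly tighter than its pushed-down target cannot meet that target, so siblings must compensate---is exactly where the paper's version is fragile: for an $\lor$-node the paper asserts $u^\star_c=u^\star\Rightarrow u_c$ for each child (dually $l_c\Rightarrow l^\star$ for $\land$), which does not follow from $u^\star\Rightarrow\bigvee_c u_c$. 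Concretely, for $P=a\lor b$ with site $\{a\}$ and $P^\star=b\lor c$, the untouched child $b$ receives target $[P^\star,P^\star]$ but only achieves $b$, so the paper's stated inductive hypothesis fails at that child even though the algorithm's final output is correct; your invariant absorbs this, since $L_b=P^\star\land b\Rightarrow b\Rightarrow P^\star\lor b=U_b$ holds trivially, your pointwise case analysis at $\land$/$\lor$ (activate the child whose $l_c$ or $u_c$ fires) recovers the parent's bound, and the effective bounds still collapse to $[P^\star,P^\star]$ at the root because $P^\star\in[l_P,u_P]$. The remaining ingredients---the auxiliary invariant $L_x\Rightarrow U_x$ keeping \MinFix\ well-posed (the paper carries $l^\star\Rightarrow u^\star$ through (H1) for the same purpose), and the requirement that \DistributeFixes\ recombine to the fix computed for the merged site $r$---match the paper's treatment. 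In short: same skeleton (structural induction mirroring the recursion, validity of repair bounds from Lemma~\ref{lemma:correctness-create-bounds} as a prerequisite), but your choice of invariant is the stronger one and closes a gap that the paper's write-up leaves open.
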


In the remainder of this section, we first focus on \MinFix, which \DeriveFixes\ uses to find the smallest formula within a target bound.
We end with a discussion of complexity, optimality, and, when we cannot guarantee optimality, techniques to mitigate suboptimality.

\mypar{Finding Smallest Formula with a Bound}
Given a target bound $[l^\star, u^\star]$ for a repair site, \MinFix\ needs to find a formula $g$ with the smallest size possible such that $g \in [l^\star, u^\star]$.
This goal is intimately related to the \emph{Boolean minimization} problem, which has been well studied and known to be hard~\cite{DBLP:journals/jcss/BuchfuhrerU11}.
Many practically effective tools have been developed over the years, so our strategy is to leverage these tools for \oursys.
There are two technical challenges:
1)~Boolean minimization is formulated in terms of expressions involving independent Boolean variables, while our formulae involve atomic predicates whose truth values are not independent.
2)~Our minimization problem is given a bound as opposed to a single expression that Boolean minimization typically expects.

To address (1), we run a heuristic procedure using Z3 to identify a set $\set{A}$ of ``unique'' atomic predicates that appear in $l^\star$ and $u^\star$;
those that are logically equivalent to others or can be expressed easily in terms of others (e.g., with a negation) are excluded.
This procedure does not need to detect or remove intricate dependencies (such that $A{>}C$ follows from $A{>}B$ and $C{\le}B$); any such dependencies will still be caught later.
Then, we map each predicate in $\set{A}$ to a unique Boolean variable and convert $l^\star$ and $u^\star$ into Boolean expressions involving these variables.

To address (2), we note that many practical Boolean minimization tools accept the specification of Boolean expressions as truth tables with possible \emph{don't-care} output entries.
Our idea is to use \emph{don't-cares} to encode the constraint implied by the target bound.
Specifically, we generate a truth table whose rows correspond to truth assignments of the Boolean variables for $\set{A}$.
If a particular assignment is not feasible (which is testable in Z3) due to interacting atomic predicates, we mark the output for the row as \emph{don't-care}.
For each feasible assignment, if $l^\star$ and $u^\star$ evaluate to the same truth value, we designate the output for that row to be this value.
If $l^\star$ evaluates to \False\ and $u^\star$ evaluates to \True, we mark the output as \emph{don't-care}---reflecting the flexibility offered by the bound.
(Note that because $l^\star \Rightarrow u^\star$, the case where $l^\star$ and $u^\star$ evaluate to \True\ and \False\ respectively cannot occur.)

The current implementation of \oursys\ uses \emph{ESPRESSO}~\cite{brayton1982comparison} as the primitive \MinBoolExp\ for finding a minimum-size Boolean expression given a truth table with \emph{don't-cares}.

{\bf Complexity and Optimality.}
In our analysis below, let $\kappa$ denote the combined size of formulae $P$ and $P^\star$.
% \DeriveFixes\ is the most expensive component of the overall \RepairWhere\ algorithm.
% Its
\DeriveFixes's
main cost comes from calls to \MinFix\ and Z3.
The number of times that \MinFix\ is invoked is $\card{\set{S}}$, which is $\BigO(\kappa)$ but is usually a small constant in practice.
\MinFix\ runs in time exponential in the number of Boolean variables, which is capped at $\kappa$.
To construct the input truth table for \MinBoolExp, \MinFix\ will also call Z3 $O(2^\kappa)$ times.
Each Z3 call may take time exponential in the length of its input, though in practice, we time out with an inconclusive answer.
Finally, as discussed at the beginning of \Cref{sec:where}, the number of calls to \DeriveFixes\ by \RepairWhere\ can be worst-case exponential in $\kappa$, but in practice it will be $\BigO(\kappa^3)$.
Regardless, the overall complexity of \RepairWhere\ is exponential in the complexity of the \WHERE\ predicates.
Although this worst-case complexity seems daunting, we have found that \oursys\ delivers acceptable performance in practice:
thankfully, $\kappa$ is often small,
and the structures of $P$ and $P^\star$ and the interdependencies among their atomic predicates tend to be much simpler than, e.g., our \Cref{ex:syntax-tree}.

The optimality result is presented earlier as \Cref{lemma:where-optimality}.
Intuitively, the guarantees (which still depend on the primitives Z3 and \MinBoolExp) stem from two observations:
1)~if repair is limited to a single site, the target bound computed by \DeriveFixes\ is indeed the best one can do; and
2)~if all sites share the same parent, \DeriveFixes\ would effectively process them as a single site.
However, target bounds for non-combinable repair sites cannot be set optimally in an independent manner;
the approach taken by \DeriveFixes, which essentially assumes that siblings receive the least amount of help possible from each other
when pushing down target bounds, cannot guarantee a minimum-size repair.
Indeed, our running example \Cref{ex:syntax-tree} with repair sites $\{ x_4, x_{10}, x_{12} \}$ is an instance where \DeriveFixes\ fails to set target bounds optimally, because $x_4$ has a different parent from $x_{10}$ and $x_{12}$.
To mitigate this problem, we have developed a more sophisticated algorithm (called \DeriveFixesOPT) for finding fixes for multiple sites holistically.
A full discussion of \DeriveFixesOPT\ is in the appendix.
% , but we present the gist of the approach below.
%
% First, we use \DeriveFixes\ to obtain a target bound for the least common ancestor $x$ of all sites in $\set{S}$; this bound is still the best one can do.
% We build a truth table for the set $\set{A}$ of atomic predicates with \emph{don't-cares} as before,
% but next, treating each site in $\set{S}$ as an additional Boolean variable, we further explore what truth assignments to $\set{A} \cup \set{S}$ lead to a resultant truth value for $x$ consistent with its repair bound.
% For each truth assignment $\set{A}$, we identify the set of feasible truth assignments for $\set{S}$---which can be seen as constraints on what Boolean expression each site should be repair to.
% Using this information, we proceed to repair the site that is the ``most constrained.''
% Upon repairing this site, we revise the constraints and repeat the process until all sites are repaired.
%
\DeriveFixesOPT\ increases the complexity by another factor of $2^{\card{\set{S}}}$.
It is heuristic in nature (as it prioritizes repair sites by how constrained they are) and cannot guarantee optimality beyond \Cref{lemma:where-optimality}.
However, it does well in practice and better than \DeriveFixes.
Since $\card{\set{S}}$ is small in practice, the complexity overhead is a good price to pay.

\begin{Example}\label{ex:derive-fixes}
%Continuing from \Cref{ex:derive-fixes}, 
In \Cref{ex:syntax-tree}, for repair sites $\{ x_4, x_{10}, x_{12} \}$,
\DeriveFixes\ returns fixes
$x_4 \mapsto A{=}B \lor (A{=}C \land D{>}10) \lor (A{=}C \land D{<}7)$;
$x_{10} \mapsto (A{=}B \land D{\neq}E) \lor (A{=}B \land D{>}F)$;
$x_{12} \mapsto (A{=}C \land D{>}10) \lor (A{=}C \land E{<}5)$.

On the other hand, \DeriveFixesOPT\ finds the optimal fixes 
$x_4 \mapsto A{=}B$; $x_{10} \mapsto D{>}10$; $x_{12} \mapsto E{<}5$.
\end{Example}

\section{\GROUPBY\ Stage}\label{sec:handling-groupby}

We check the \GROUPBY\ equivalence assuming $Q^\star$, $Q$ have equivalent \FROM\ and \WHERE\ clauses.
We focus on ensuring $\FrWhGr{Q} \equiv \FrWhGr{Q}$,
regardless of the order and the number of expressions involved in their \GROUPBY\ clauses.

\cut{
The following lemma
shows the necessity of fixing \sql{GROUP BY} clause (proof in the full version). 

\begin{lemma}\label{lemma:aggr-necessity}
Consider two single-block SQL queries $Q_1$ and $Q_2$, where $Q_1$ has no \sql{GROUP} \sql{BY} or aggregation,
while $Q_2$ has \sql{GROUP} \sql{BY} and/or aggregation but no \sql{HAVING}.
$Q_1$ and $Q_2$ cannot be equivalent under bag semantics, assuming that no database constraints are present and there exists some database instance for which either $Q_1$ or $Q_2$ returns a non-empty result.
\end{lemma}
}

In the following, we consider the case where both $Q$ and $Q^\star$ have grouping and/or aggregation.
Suppose we have unified the \sql{WHERE} conditions and \sql{GROUP} \sql{BY} expressions in the two queries according to the table mapping $\mapping$.
Let $P$ denote the resulting formula for $Q^\star$'s \sql{WHERE} condition (which at this point is logically equivalent to $Q$'s),
and let $\vec{o}$ and $\vec{\ostar}$ denote the resulting lists of \sql{GROUP} \sql{BY} expressions for $Q$ and $Q^\star$, respectively.
Note that the ordering of the \sql{GROUP} \sql{BY} expressions is unimportant.
Also, if a query involves aggregation but has no \sql{GROUP} \sql{BY}, we consider the list of \sql{GROUP} \sql{BY} expressions to be an empty list.
Same column references across $P$, $\vec{o}$, and $\vec{\ostar}$ are treated as same variables.
Our goal is to compute a subset $\Delta^-$ of \sql{GROUP} \sql{BY} expressions to be removed from $Q$,
as well as a set $\Delta^+$ of additional \sql{GROUP} \sql{BY} expressions to be added to $Q$,
such that the resulting query will always produce the same grouping of intermediate result tuples (produced by \sql{FROM}-\sql{WHERE}) as $Q^\star$.
In practice, we may not want to reveal $\Delta^+$, but instead simply hint that $Q$ misses some \sql{GROUP} \sql{BY} expressions.
We may repeat the hinting process several times until \sql{GROUP} \sql{BY} is completely fixed.

Repairing grouping is trickier than it seems because seemingly very different \sql{GROUP} \sql{BY} lists can produce equivalent grouping, as illustrated by the following example.
\begin{example}\label{ex:groupby}
Consider two queries over tables $\sql{R}(\sql{A},\sql{B})$ and $\sql{S}(\sql{C},\sql{D})$:
\begin{lstlisting}[language=SQL,escapeinside=``,basicstyle=\small\ttfamily]
SELECT B FROM R, S WHERE B=C GROUP BY B, D; -- `$Q^\star$`
SELECT C FROM R, S WHERE B=C GROUP BY C+D, C; -- `$Q$`
\end{lstlisting}
The two queries are equivalent, even though none of the pairs of \sql{GROUP} \sql{BY} expressions are equivalent when examined in isolation.
\end{example}
To address this challenge, instead of comparing pairs from $\vec{\ostar}$ and $\vec{o}$ in isolation,
we holistically consider these lists as well as the \sql{WHERE} condition,
and go back to the definition of \sql{GROUP} \sql{BY} as computing a partitioning of intermediate result tuples.
Formally, the viability check for this stage is that $\vec{o}$ and $\vec{\ostar}$ achieve the same partitioning,
or more precisely:
$$\textstyle V_3: \text{Check if}\; \forall t_1, t_2 \in \FrWh{Q^\star}: \bigwedge_i (o_i[t_1]\!=\!o_i[t_2]) \Leftrightarrow \bigwedge_i (\ostar_i[t_1]\!=\!\ostar_i[t_2])$$
% if we can show that for any two intermediate result tuples $t_1$ and $t_2$, which are known to satisfy $P$,
Here, $t_1$ and $t_2$ denote intermediate result tuples, which are known to satisfy $P$;
we use $o[t]$ to denote %(informally) 
evaluating $e$ over $t$.%
\footnote{Formally, we treat $t$ as an assignment of variables (column references) in $e$ to variables representing corresponding column values in $t$.
Hence, $e[t]$ is an expression obtained from $e$ by replacing each variable (column reference) $v$ with variable $t(v)$.}
This approach underlines our algorithm \FixGrouping\ (Algorithm~\ref{alg:fix-grouping}).
\begin{Example}\label{ex:groupby-logic}
Consider the two queries in Example~\ref{ex:groupby}.
The table mapping is trivial and we simply use column names to name variables.
We have: $P$ is $B=C$, $\vec{\ostar} = [ B, D ]$, and $\vec{o} = [ C+D, C ]$.
The logical statement that establishes the equivalence of grouping is
\begin{sizeddisplay}{\footnotesize}
\begin{gather*}
\forall (A_1, B_1, C_1, D_1), (A_2, B_2, C_2, D_2):\\
\left( B_1{=}C_1 \land B_2{=}C_2 \right) \;\;\text{\myComment both $(A_1, B_1, C_1, D_1)$ and $(A_2, B_2, C_2, D_2)$ satisfy $P$}\\
\Rightarrow
\left(
\begin{matrix*}
& \left( B_1{=}B_2 \land D_1{=}D_2 \right) \;\;\text{\myComment $Q^\star$'s grouping criterion}\\
& \Leftrightarrow \left( C_1{+}D_1{=}C_2{+}D_2 \land C_1{=}C_2 \right) \;\;\text{\myComment $Q$'s grouping criterion}
\end{matrix*}
\right).
\end{gather*}
\end{sizeddisplay}
Note that instead of referring to tuples $t_1$ and $t_2$, we simply refer to variables representing their column values in the above.
\end{Example}

\begin{algorithm2e}[t] \small
\caption{$\FixGrouping(P, \vec{o}, \vec{\ostar})$}
\label{alg:fix-grouping}
\Input{a formula $P$ and two expression lists $\vec{o}$ and $\vec{\ostar}$}
\Output{a pair $(\Delta^-, \Delta^+)$,
where $\Delta^- \subseteq [1..\dim(\vec{o})]$ is a subset of indices of $\vec{o}$
and $\Delta^+ \subseteq [1..\dim(\vec{\ostar})]$ is a subset of indices of $\vec{\ostar}$}
\Let $\vec{v}$ denote the set of variables in $P$, $\vec{o}$, and $\vec{\ostar}$\;
\Let $t_1, t_2$ be two assignments of $\vec{v}$ to new sets of variables $\vec{v}_1$ and $\vec{v}_2$\;
\Let $G^\star$ denote the formula $\bigwedge_i (\ostar_i[t_1]\!=\!\ostar_i[t_2])$\;
\Let $\Delta^- = \emptyset$\;
\ForEach{$o_i \in \vec{o}$}{
    \If{$\IsSat(P[t_1] \land P[t_2] \land G^\star \land o_i[t_1]\!\neq\!o_i[t_2])$ \nllabel{l:fix-grouping:redundant}}{
        \Let $\Delta^- = \Delta^- \union \{i\}$\;
    }
}
\Let $G$ denote the formula $\bigwedge_{i \not\in \Delta^-} (o_i[t_1]\!=\!o_i[t_2])$\;
\Let $\Delta^+ = \emptyset$\;
\ForEach{$\ostar_i \in \vec{\ostar}$}{
    \If{$\IsSat(P[t_1] \land P[t_2] \land G \land \ostar_i[t_1]\!\neq\!\ostar_i[t_2])$ \nllabel{l:fix-grouping:missing}}{
        \Let $\Delta^+ = \Delta^+ \union \{i\}$\;
        \Let $G = G \land \ostar_i[t_1]\!\neq\!\ostar_i[t_2]$\;
    }
}
\Return $(\Delta^-, \Delta^+)$\;
\end{algorithm2e}

In \FixGrouping, to find $\Delta^-$, which are ``wrong'' expressions in $\vec{o}$,
we check, for each $o_i$, whether it is possible that given $P[t_1] \land P[t_2]$, we can have $\bigwedge_i (\ostar_i[t_1]\!=\!\ostar_i[t_2])$ but not $o_i[t_1]\!=\!o_i[t_2]$.
If yes, that means $o_i$ is wrong with respect to $\ostar$, because while $t_1$ and $t_2$ should belong to the same group per $\ostar$, grouping by $o_i$ alone would have forced them into separate groups instead.
After identifying all wrong expressions in $\vec{o}$ and removing them, we are left with a partitioning potentially coarser than $\ostar$ but otherwise consistent with $\ostar$. We then find $\Delta^+$ to be further added in a similar fashion. 

\cut{
We then find $\Delta^+$, a set of \sql{GROUP} \sql{BY} expressions to be further added to make the resulting partitioning equivalent to that of $\vec{\ostar}$.
To this end, we iterate through $\ostar$, and check, for each $\ostar_i$,
whether it is possible for two tuples $t_1$ and $t_2$ in the same group as defined by the current partitioning to have $\ostar_i[t_1]\!\neq\!\ostar_i[t_2]$.
If yes, we add $\ostar_i$ to $\Delta^+$ to refine the current partitioning.}

\cut{As Lemma~\ref{lemma:group-fix} below shows, at the end of this process,
we guarantee that $(\vec{o} \setminus \Delta^- \union \Delta^+)$ produces the same partitioning as $\vec{\ostar}$,
that $\Delta^-$ and $\Delta^+$ are ``minimal'' under reasonable assumptions.}
\begin{lemma}\label{lemma:group-fix}
% Consider a cross product,
% followed by a selection with condition $P$,
% and then a \sql{GROUP} \sql{BY} operation using a list of expressions to produce a partitioning of the intermediate result tuples.
% Let $\vec{o}$ and $\vec{\ostar}$ denotes two lists of \sql{GROUP} \sql{BY} expressions.
We say that two lists of \sql{GROUP} \sql{BY} expressions are \emph{equivalent} if they produce the same partitioning for the above query over any database instance.
Let $(\Delta^-, \Delta^+) = \FixGrouping(P, \vec{o}, \vec{\ostar})$.
Assuming that subroutine \IsSat\ returns no false positives, we have:

{\bf Correctness:} \GROUPBY-stage hint leads to a fixed working query $Q_3$ that 1) passes the viability check ($\vec{o}, \vec{\ostar}$ are equivalent), 2) satisfies $\FrWhGr{Q_3} \equiv \FrWhGr{Q^\star}$; and 3) leads to eventual correctness. %i.e.  $\vec{o} \setminus \Delta^- \union \Delta^+$ is equivalent to $\vec{\ostar}$, 

Further assuming that \IsSat\ returns no false negatives, we have:

{\bf Strong Minimality of $\Delta^-$:} Let $(\Delta^-_\circ, \Delta^+_\circ)$ denote the minimal $\Delta^-$ and $ \Delta^+$ respectively, then for any $(\Delta^-_\circ, \Delta^+_\circ)$ such that $\vec{o} \setminus \Delta^-_\circ \union \Delta^+_\circ$ is equivalent to $\vec{\ostar}$, $\Delta^- \subseteq \Delta^-_\circ$.

{\bf Weak Minimality of $\Delta^+$:} If $\Delta^+ \neq \emptyset$, then there exists no $\Delta^-_\circ$ such that
$\vec{o} \setminus \Delta^-_\circ$ is equivalent to $\vec{\ostar}$.

\end{lemma}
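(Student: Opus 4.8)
The plan is to push the whole statement down to a condition on pairs of intermediate tuples and then verify the three claims in turn. Because the \WHERE\ stage already gives $\FrWh{Q}\equiv\FrWh{Q^\star}$, both queries partition exactly the same bag of tuples (each satisfying $P$), so $\FrWhGr{Q_3}\equiv\FrWhGr{Q^\star}$ is precisely the viability condition $V_3$: for every pair $t_1,t_2$ with $P[t_1]\wedge P[t_2]$, the final grouping formula $\bigwedge_{i\notin\Delta^-}(o_i[t_1]{=}o_i[t_2])\wedge\bigwedge_{j\in\Delta^+}(\ostar_j[t_1]{=}\ostar_j[t_2])$ (call it $H$) is logically equivalent to $G^\star=\bigwedge_i(\ostar_i[t_1]{=}\ostar_i[t_2])$. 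I would prove \textbf{Correctness} by splitting this biconditional. For $G^\star\Rightarrow H$: every surviving $o_i$ ($i\notin\Delta^-$) failed its removal test, so $P[t_1]\wedge P[t_2]\wedge G^\star$ forces $o_i[t_1]{=}o_i[t_2]$, and each added conjunct $\ostar_j[t_1]{=}\ostar_j[t_2]$ is implied by $G^\star$ directly. For $H\Rightarrow G^\star$: given a pair with $H$, if some $\ostar_m$ separated it then $m\notin\Delta^+$ (otherwise $H$ contains $\ostar_m[t_1]{=}\ostar_m[t_2]$), so $\ostar_m$ was rejected, i.e. its test $\IsSat(P[t_1]\wedge P[t_2]\wedge G\wedge\ostar_m[t_1]{\neq}\ostar_m[t_2])$ came back unsatisfiable against the running grouping formula $G$ of that iteration; since $G$ is a coarsening of $H$, the same pair satisfies $G$, so that test should have been satisfiable, a contradiction. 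Hence $H\Leftrightarrow G^\star$ and $\FrWhGr{Q_3}\equiv\FrWhGr{Q^\star}$; viability and, via \Cref{theorem:main}, eventual correctness follow. Soundness of \IsSat\ (no false positives) is what guarantees that every individual removal or addition the algorithm proposes is genuinely justified.

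For the two minimality claims I would argue with explicit witnesses read off the satisfying assignments \IsSat\ returns. \textbf{Strong minimality of }$\Delta^-$: for each $i\in\Delta^-$ the removal test was satisfiable, yielding a pair $t_1,t_2$ with $G^\star$ true but $o_i[t_1]\neq o_i[t_2]$; any correct repair $(\Delta^-_\circ,\Delta^+_\circ)$ that retained $o_i$ would split this pair while $\vec{\ostar}$ keeps it together, contradicting $(\vec{o}\setminus\Delta^-_\circ)\cup\Delta^+_\circ\equiv\vec{\ostar}$, so $i\in\Delta^-_\circ$ and thus $\Delta^-\subseteq\Delta^-_\circ$. Completeness of the removal tests is what additionally ensures $\Delta^-$ removed everything it had to, so that $\Delta^-$ is itself extendable to a correct repair; this is why the claim is stated under the no-false-negative assumption. \textbf{Weak minimality of }$\Delta^+$: if $\Delta^+\neq\emptyset$, the first accepted addition test produces a pair $t_1,t_2$ that is $\bigwedge_{i\notin\Delta^-}(o_i[t_1]{=}o_i[t_2])$-equal yet $\vec{\ostar}$-separated. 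Were there a removal-only repair with $\vec{o}\setminus\Delta^-_\circ\equiv\vec{\ostar}$, strong minimality gives $\Delta^-\subseteq\Delta^-_\circ$, so $\vec{o}\setminus\Delta^-_\circ$ is coarser than $\vec{o}\setminus\Delta^-$; the witness, already together under the latter, is together under the former and hence (by assumption) under $\vec{\ostar}$, contradicting its separation. So no removal-only repair can exist.

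The hard part is the $H\Rightarrow G^\star$ direction of correctness, i.e. arguing that a single greedy pass of additions separates every pair that $\vec{\ostar}$ separates. The subtlety is that the formula $G$ tested when $\ostar_m$ is rejected is strictly coarser than the final $H$, so the proof must transport the final witness pair back to that earlier iteration; this rests on the monotonic refinement of $G$ across the loop and on pinning down exactly what $G$ encodes at iteration $m$. It is also the one place where completeness of the Z3 calls, not merely their soundness, is indispensable, so I would state that assumption explicitly there and confirm that it matches the no-false-negative hypothesis used for the minimality claims.
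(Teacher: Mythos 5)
Your proof is correct and follows essentially the same route as the paper's: satisfying assignments of the \IsSat\ calls are turned into witness pairs of intermediate tuples, soundness of \IsSat\ makes those witnesses genuine (which drives both minimality claims, with strong minimality feeding into weak minimality exactly as in the paper's argument), and correctness reduces to the pairwise biconditional between the repaired grouping formula and $G^\star$, which you verify in the two directions the paper's terse contradiction argument compresses. One refinement: you are right that correctness actually hinges on completeness (no false negatives) of \IsSat, an assumption the lemma lists only under the minimality claims, but note that completeness is needed in \emph{both} directions of the biconditional --- the surviving $o_i$ in your $G^\star \Rightarrow H$ step also rely on a rejected removal test being genuinely unsatisfiable --- not only in the $H \Rightarrow G^\star$ direction you single out.
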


The strong minimality of $\Delta^-$ means that we can hint each expression therein as a ``must-fix.''
The weak minimality of $\Delta^+$ works perfectly as we simply hint that the wrong query needs some additional \sql{GROUP} \sql{BY} expressions.

\cut{
We note that a stronger sense for minimality for $\Delta^+$ is both semantically trickier to define and computationally harder to find.
First, we can always use a single \sql{GROUP} \sql{BY} expression to achieve the same partitioning as $\vec{\ostar}$---namely,
\sql{ROW($\ostar_1$,$\ostar_2$,...)} where \sql{ROW} is the SQL row/tuple constructor.
Therefore, minimizing the cardinality of $\Delta^+$ by itself is not meaningful.
Second, even if we place restrictions on $\Delta^+$ (e.g., requiring it to be a subset of $\vec{\ostar}$ as does \FixGrouping),
or develop a better measure for the complexity of $\Delta^+$, 
the optimization problem is hard in general, as illustrated by Example~\ref{ex:groupby-minimize}.
\begin{example}\label{ex:groupby-minimize}
Consider the list of \sql{GROUP} \sql{BY} expressions $\vec{\ostar} = [ \sql{ROW(A,B,C)}, \sql{ROW(B,C,D)}, \sql{ROW(A,C,E)} ]$.
If we require $\Delta^+ \subseteq \vec{\ostar}$, the $\Delta^+$ with minimum cardiality is $\{ \sql{ROW(B,C,D)}, \sql{ROW(A,C,E)} \}$.
It is not difficult to see that finding $\Delta^+$ for similarly constructed examples amounts to solving the set cover problem.

As another example, consider $\vec{\ostar} = [ A{+}2B{+}3C, 2A{+}4B{-}C, C ]$.
A reasonable $\Delta^+$, without requiring $\Delta^+ \subseteq \vec{\ostar}$, would be $\{ A{+}2B, C \}$.
Computing this $\Delta^+$ amounts to computing a basis for a vector space.
\end{example}
Because of these complications, we believe that the weak minimality of $\Delta^+$ is a simpler and more practical guarantee.
}

\section{\sql{HAVING} Stage}\label{sec:handling-having}

At \HAVING\ stage, we aim at further ensuring that $\FrWhGrHa{G} \equiv \FrWhGrHa{G^\star}$ assuming that $Q^\star$ and $Q$ unified by a table mapping and have equivalent \FROM, \WHERE, and \GROUPBY.
While \HAVING\ can also be modeled as a logical formula, there are new challenges:
1)~unlike \WHERE, inputs to \HAVING\ formulae are arrays of tuples $[t_1, ..., t_n]$ instead of single tuples,
2)~we need to consider aggregate functions, and
3)~we cannot test \HAVING\ alone without considering \WHERE's effect.

\begin{Example}\label{ex:having-1}
Consider two queries over $\sql{R}(\sql{A},\sql{B})$ and $\sql{S}(\sql{C},\sql{D})$:
\begin{lstlisting}[language=SQL, basicstyle=\small\ttfamily]
SELECT A FROM R, S WHERE A=C AND A>4 GROUP BY A, B 
  HAVING A > B + 3 AND 2*SUM(D) > 10; -- `$Q^\star$`
SELECT A FROM R, S WHERE A=C GROUP BY A, B, C 
  HAVING C > B + 3 AND SUM(D * 2) > 10 AND A>4; -- `$Q$`
\end{lstlisting}
The two queries are equivalent because \sql{A=C} in \WHERE, because \sql{2*} distributes over \sql{SUM},
and because \sql{A>4} can be either in \WHERE\ or \HAVING.
\end{Example}

Our strategy is to construct two formulae $H^\star, H$ for the \sql{HAVING} conditions of $Q^\star, Q$ respectively,
such that equivalence of $H^\star$ and $H$ implies $\FrWhGrHa{G} \equiv \FrWhGrHa{G^\star}$.
To this end, for each reference to a \GROUPBY\ column in \HAVING, we replace it with a variable from the same domain,
and we translate \HAVING\ expressions outside aggregate function calls in the same way as we handle \WHERE:
e.g., \sql{A>B+3} becomes $A{>}B{+}3$.
For each reference to a column not in \GROUPBY, we introduce an array variable to capture the fact that it refers to a collection of values from rows in the same group.
Moreover, for each aggregate function call, we introduce a new array variable to represent the collection of input values if they are computed from an expression, and we use a universally quantified assertion to relate this variable to the source column values:
e.g., for \sql{SUM(D*2)} we introduce array-valued $\mathbf{D}_2$ to represent \sql{D*2} values,
and we related it to the array-valued $\mathbf{D}$ representing \sql{D} values by asserting
$\forall i \in \mathbb{N}: \mathbf{D}_2[i] = \mathbf{D}[i] \times 2$.
Such assertions, along with the \WHERE\ condition and additional inference rules for aggregate functions, go into a context as discussed in \Cref{sec:framework} and illustrated in \Cref{ex:z3}.
% Since $A$ and $C$ are arrays of values under the context of \HAVING, they need to be rewritten as $\forall i \in \mathbb{N}: A[i]=C[i]$ when incorporating this constraint for \HAVING. Furthermore, when translating \HAVING\ into a logical formula, we need to distinguish between non-aggregate predicates (e.g. $A>B+3$, apply to each tuple in the group) and aggregate predicates (e.g. $2*\sql{SUM}(A)>10$, apply to the entire group of tuples), as each variable represents an array of values. To this end, we universally quantify the formula with an integer $i$ and treat non-aggregate predicates as constraints on each tuple within the group. Then the SQL formula for $Q^\star$ in \Cref{ex:having-1} becomes $\forall i \in \mathbb{N}: A[i] > B[i] + 3 \land 2 * \sql{SUM}(A) > 10$. Meanwhile, the operator $*$ is not defined between an array and an integer for $\sql{SUM}(A*2)$ in $Q$, we thus create a new array $A_0$ such that $\forall i, A_0[i] = A[i] * 2$, and rewrite $\sql{SUM}(A*2)>10$ to be $A_0[i] = A[i] * 2 \land \sql{SUM}(A_0) > 10$. Given these constraints, we thus create a context $\Context$ (as in \Cref{ex:z3}), under which we then test the equivalence of \HAVING.

\begin{Example}\label{ex:having-2}
For \Cref{ex:having-1}, \HAVING\ formulae for $Q^\star, Q$ are:
\begin{sizeddisplay}{\footnotesize}
\begin{align*}
(H^\star)&& A{>}B{+}3 \land (2{\times}\sql{SUM}(\mathbf{D}){>}10) \\
(H)&& C{>}B{+}3 \land \sql{SUM}(\mathbf{D}_2){>}10 \land A{>}4
\end{align*}
\end{sizeddisplay}
We test their equivalence under the following context:
\begin{sizeddisplay}{\footnotesize}
\begin{align*}
\Context&: \left\{\;\begin{aligned}
    \mathbf{D}, \mathbf{D}_2 \text{ have type }\narrow{Array}(\mathbb{Z})\\
    A = C \land A > 4\\
    \forall i \in \mathbb{N}: \mathbf{D}_2[i] = \mathbf{D}[i] \times 2\\
    \cline{1-1}
    \sql{SUM} \text{ has type }\narrow{Array}(\mathbb{Z}) \to \mathbb{Z}\\
    \forall c \in\mathbb{Z}, \mathbf{X} \text{ and } \mathbf{Y} \text{ of type }\narrow{Array}(\mathbf{Z}):\hspace*{10em}\\
    (\forall i \in \mathbb{N}: \mathbf{X}[i] \times c = \mathbf{Y}[i])
    \Rightarrow
    \sql{SUM}(\mathbf{X}) \times c = \sql{SUM}(\mathbf{Y})
\end{aligned}\;\right\},
%\\
% P&: \neg (\sql{MAX}(\mathbf{A}) \ge 101).
% & \hspace{4mm} h_1: H^\star \land \neg H, \ \ \ \ \ h_2: \neg H^\star \land H
\end{align*}
\end{sizeddisplay}
In the above, the assertions underneath the horizontal line are generic assertions encoding properties of aggregate functions
useful for inferring equivalences.
Only those relevant to \Cref{ex:having-1} are listed here; for a complete list see~\cite{fullversion}.
\end{Example}%

% While various constraints can be encoded, the semantics of some aggregate functions cannot be captured using first-order logic (e.g., $\IsUnSat_\Context(h_1)$). Thus, aggregate functions are declared as uninterpreted functions. However, Z3 allows encoding of function properties (e.g. linearity of \sql{SUM}), enabling us to define a base context with some default constraints (\WHERE\ constraints and other variables are added later):
% \begin{itemize}[leftmargin=*]
%     \item Declare all variables in both \WHERE\ and \HAVING\ as arrays.
%     \item Declare uninterpreted function \sql{SUM, AVG, COUNT, MAX, MIN}.
%     \item Linearity of \sql{SUM, AVG}.
% \end{itemize}

The viability check for \HAVING\ (Theorem~\ref{theorem:main}, stage 4) is that $H$ is logically equivalent to $H^\star$ under \HAVING\ base context $\Context$, i.e.:
$$V_4: \textit{Check if } H \Leftrightarrow H^\star  \textit{under \Context}$$
Note that this check implicitly applies to all groups.
If a constraint solver fails to establish equivalence, we invoke the exact same procedures as for \WHERE\ to find a repair.

% \mypar{Pre-processing Non-aggregate Predicates} The existence of non-aggregate predicates in \HAVING\ might affect repair in \WHERE. In \Cref{ex:having-1}, running \Cref{alg:find-where-repair} on $Q$'s \WHERE\ without considering $A>4$ in $Q$'s \HAVING\ might confuse users as \oursys\ hints that such a predicate is missing. Observe that predicates that do not contain any aggregate functions might be moved into \WHERE\ as a conjunct (e.g. $A>4$ in $Q$ of \Cref{ex:having-1}), we have the following preprocessing before \WHERE-stage:
% \begin{enumerate}[leftmargin=*]
%     \item Convert \HAVING\ formula into conjunctive normal form (CNF).
%     \item Scan each CNF clause and move only clauses with no aggregate functions into \WHERE\ formula as a conjunct. Note that only conjuncts in CNF of \HAVING\ can be moved into \WHERE.
% \end{enumerate}
 % As a result, \Cref{alg:find-where-repair} is more likely to return a comprehensive repair to users.

% \mypar{Handling ``unknown''} As mentioned in \Cref{sec:framework}, Z3 does not always guarantee to return definite answers. For safety, we try to be conservative when handling ``unknown'': (1) When the equivalence between two formulae is ``unknown'' under a context, we treat the formulae as inequivalent and proceed to find repairs. (2) When the validity of the repair bounds returned by \Cref{alg:create-bounds} is ``unknown'', we proceed to look for other repair sites.

\begin{lemma}\label{lemma:having-fix}
\HAVING-stage hint leads to a fixed working query $Q_4$ with \HAVING\ condition that
1)~passes the viability check;
2)~satisfies $\FrWhGrHa{Q_4} \equiv \FrWhGrHa{Q^\star}$; and
3)~leads to eventual correctness.
\end{lemma}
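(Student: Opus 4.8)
The plan is to mirror the structure of the \WHERE-stage argument (\Cref{lemma:where-correctness}), since the \HAVING\ stage reuses \emph{exactly} the same repair machinery (\RepairWhere, \CreateBounds, \DeriveFixes): the only differences are that atomic predicates may now contain aggregate calls and array-valued variables, and that every equivalence/satisfiability test is discharged under the \HAVING\ base context $\Context$. Consequently, the bulk of the work reduces to establishing one new foundational fact, which I will call \emph{faithfulness of the encoding}: logical equivalence of $H$ and $H^\star$ under $\Context$ implies the semantic group-filtering equivalence $\FrWhGrHa{Q}\equiv\FrWhGrHa{Q^\star}$. Granting faithfulness, parts (1)--(3) follow by threading $\Context$ through the already-proved \WHERE-stage lemmas.

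To prove faithfulness I would fix an arbitrary database instance and an arbitrary group $g$ of the common partitioning guaranteed by the previous stage (recall $\FrWhGr{Q}\equiv\FrWhGr{Q^\star}$). Interpret each \GROUPBY-column variable by its value (constant within $g$), each non-grouped column's array variable by the multiset of values that column takes over the rows of $g$, and each aggregate symbol by its actual SQL semantics. Under this interpretation one checks that (a) $H$ and $H^\star$ evaluate to the truth values of the actual \HAVING\ conditions of $Q$ and $Q^\star$ on $g$; and (b) every assertion in $\Context$ holds --- the \WHERE\ condition holds on every row of $g$, the array-relating assertions (e.g.\ $\forall i:\mathbf{D}_2[i]=\mathbf{D}[i]\times 2$) hold by construction, and the generic aggregate inference rules are valid identities of the corresponding SQL aggregates. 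Thus every instance/group interpretation is a model of $\Context$, so a $\Context$-valid equivalence $H\Leftrightarrow H^\star$ forces $H$ and $H^\star$ to agree on $g$. Since the groups are already identical, $Q$ and $Q^\star$ retain exactly the same groups after filtering, giving $\FrWhGrHa{Q}\equiv\FrWhGrHa{Q^\star}$.

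For part (1), note that \RepairWhere$(H,H^\star,|H|)$, run with Z3 under $\Context$, always finds a correct repair: the single-site repair taking the whole tree as the site has \CreateBounds\ output $[\False,\True]$, so $H^\star$ trivially lies within it, and by \Cref{lemma:correctness-create-bounds,lemma:correctness-derive-fixes} \DeriveFixes\ then returns fixes whose application yields a formula $\Leftrightarrow H^\star$ under $\Context$; hence $Q_4$ passes $V_4$. Part (2) is immediate from faithfulness applied to $Q_4$'s encoded \HAVING\ and $H^\star$, and depends only on Z3 producing no false positives under $\Context$ (not on completeness), as in the progress-and-correctness guarantees of \Cref{sec:framework:approach}. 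For part (3), I would exhibit a query $\hat{Q}$ that is stage-$4$ consistent with $Q_4$ --- identical except in the \SELECT\ clause handled by stage $5$ --- with $\hat{Q}\equiv Q^\star$: since $\FrWhGrHa{Q_4}\equiv\FrWhGrHa{Q^\star}$ produces identical filtered groups, letting $\hat{Q}$ copy $Q^\star$'s \SELECT\ list (rewritten through the table mapping $\mapping$) makes the grouped outputs coincide tuple-for-tuple.

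The main obstacle is the faithfulness step, which is genuinely harder than its \WHERE\ counterpart. It must justify that the hand-crafted generic aggregate inference rules placed in $\Context$ are semantically \emph{sound} --- so that any $\Context$-valid equivalence Z3 certifies corresponds to a real equivalence of grouped queries --- and that the array-variable encoding correctly mediates between per-group multisets of column values and the aggregate symbols. Subtleties such as the bag-versus-set semantics of aggregate inputs, and the interaction of the universally quantified array assertions with Z3's incomplete handling of quantifiers, live entirely in this step; everything downstream is a routine transfer of the \WHERE-stage proof with $\Context$ carried along.
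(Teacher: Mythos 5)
Your proposal is correct and follows the same route the paper intends: reduce the \HAVING\ stage to the \WHERE-stage repair machinery (\CreateBounds, \DeriveFixes, \MinFix) run under the base context \Context, with correctness resting only on soundness of Z3 and of the encoding, and with eventual correctness witnessed by grafting $Q^\star$'s \SELECT\ list (through the table mapping) onto $Q_4$. In fact the paper never writes this proof out---its appendix section on \HAVING\ only lists the context \Context---so your ``faithfulness of the encoding'' step (every per-group interpretation of the grouping variables, array variables, and aggregate symbols is a model of \Context, hence a \Context-valid equivalence $H \Leftrightarrow H^\star$ transfers to the actual group filter on every instance) is precisely the soundness argument the paper leaves implicit, and it is the right one; the only residual technicality, which you correctly flag, is pinning down how a finite per-group multiset of column values instantiates the $\mathbb{N}$-indexed array variables so that the universally quantified assertions in \Context\ hold.
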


As with \WHERE, the correctness of the above lemma relies only on the fact that Z3 inference is sound with respect to the logic exercised by $H$, $H^\star$, and $\Context$ and that $\MinBoolExp$ always finds a Boolean formula equivalent to its given input.
We could additionally guarantee optimality similar to \Cref{lemma:where-optimality} by making the same assumptions therein (completeness of Z3 inference and optimality of $\MinBoolExp$) plus the additional assumption that the context \Context\ encodes all properties of aggregate functions relevant to inference.

\section{\SELECT\ Stage}\label{sec:handling-select}

This stage aims at fixing \SELECT\ as needed to ensure $Q \equiv Q^\star$, assuming that they already have equivalent \FROM, \WHERE, \GROUPBY\ and \HAVING.
% While \SELECT\ also contains a set of expressions, stricter restrictions are imposed: (1) the ordering of the expression is important, (2) the number of expressions must be exactly the same. 
% Suppose we have unified the $Q^\star, Q$ with mapping $\mapping$.
% Since $Q^\star, Q$ can be either \emph{SPJ} or \emph{SPJA} queries,
We test the equivalence between \SELECT\ expressions with a context $\Context$ dependent on the type of the query:
if the queries are SPJ, we simply assert the \WHERE\ condition in $\Context$;
if the queries are SPJA, we use the same $\Context$ defined by the \HAVING-stage.

Let $\vec{o}$ and $\vec{\ostar}$ denote the resulting ordered lists of \SELECT\ expressions for $Q, Q^\star$, respectively. The viability check ($V_5$) is that $\dim(\vec{o}) = \dim(\vec{\ostar})$ and $\vec{o}[i]$ is equivalent to $\vec{\ostar}[i]$ for $1 \leq i \leq \dim(\vec{\ostar})$, i.e. both \SELECT{}s have the same number of expressions and expressions on the same index position are equivalent. %for any tuple $t$ in the result, $\IsSat_\Context(o_i[t] \neq \ostar_i[t])$ returns false for $1 \leq i \leq \dim(\vec{\ostar})$. 
If \SELECT\ clauses are not equivalent between $Q^\star, Q$, our goal becomes to compute $\Delta^-$ of \SELECT\ expression to be removed from $Q$ at the corresponding index position and $\Delta^+$ of expressions to be added to $Q$ at the corresponding index position.

The algorithm checks the equivalence between $(\vec{o}[i], \vec{\ostar}[i])$ and add $\Delta^-$ and $\Delta^+$ respectively if they are inequivalent. Finally, excessive expressions in $Q$ or $Q^\star$ will also be added to $\Delta^-$ and $\Delta^+$ respectively. After fixing \SELECT, we guarantee $Q^\star \equiv Q$.

%Note that if $\IsSat_\Context(\vec{o}[i] \neq \vec{\ostar}[i])$ returns true for some $i \neq j$, $\vec{o}[i]$ will appear in $\Delta^-$ and as we impose a strict check on the ordering of expressions. When $\IsSat_\Context$ returns ``unknown'' for some $\vec{o}[i] \neq \vec{\ostar}[i]$, we treat it as ``satisfiable'' and count $\vec{o}[i],\vec{\ostar}[i]$ as part of $\Delta^-, \Delta^+$ respectively. 
\cut{
\begin{lemma}\label{lemma:select-fix}
We say that two lists of \SELECT\ expression are equivalent if they produce the same set of columns in the same ordering. Let $(\Delta^-, \Delta^+) = \FixSelect(P, \vec{o}, \vec{\ostar})$. Assuming that subroutine $\IsSat_\Context$ returns no false positive, we have:

{\bf Correctness:} \oursys's \SELECT-stage hint leads to a fixed working query $Q_5$ that 1) passes the viability check ($\vec{o}$ and $\vec{\ostar}$ are equivalent); 2) satisfies $Q_5 \equiv Q^\star$. This applies to both \emph{SPJ} and \emph{SPJA} queries. 

{\bf Strong minimality of $(\Delta^-, \Delta^+)$ for \emph{SPJ}} Let $(\Delta^-_\circ, \Delta^+_\circ)$ denote the minimal $(\Delta^-, \Delta^+)$ respectively, then for any $(\Delta^-_\circ, \Delta^+_\circ)$ that make $\vec{o}$ and $\vec{\ostar}$ equivalent, $\Delta^- \subseteq \Delta^-_\circ, \Delta^+ \subseteq \Delta^+_\circ$.

\end{lemma}
}

\section{Experiments}\label{sec:experiments}

We test three aspects of \oursys: coverage, accuracy, and running time.
For coverage, we test the ability of \oursys\ to fix wrong queries that arise in real-world classroom settings.
For accuracy and running time, we focus on \Cref{alg:find-where-repair}, which is the bottleneck of \oursys\ due to calls to \DeriveFixes\ or \DeriveFixesOPT.
As fix minimization incurs exponential time, we examine
1)~how the number of unique predicates affects running time,
2)~how close the generated repairs are to the optimal if queries are not conjunctive,
3)~a comparison between the running time and optimality of \DeriveFixes\ and \DeriveFixesOPT.
In general, \DeriveFixesOPT\ strives for smaller fixes and hence incurs longer running time than \DeriveFixes.

\mypar{Implementation/Test Environment} \label{sec:experiments:implementation} We implemented \oursys\ in Python 3.10 using Apache Calcite~\cite{begoli2018apache} to parse SQL queries and Z3 SMT Solver \cite{de2008z3} to test constraint satisfiability. We use ESPRESSO in PyEDA~\cite{pyeda} for fix minimization. We run the experiments locally on a 64-bit Ubuntu 20.04 LTS server with 3.20GHz Intel Core i7-8700 CPU and 32GB 2666MHz DDR4.

\mypar{Test Data Preparation} \label{sec:experiments:test-data-prep}
To prepare the first test dataset, denoted \textsf{Students}, we examined 2,000+ real student queries from an undergraduate database course in one semester at the first author's institution.
These queries came from 4 introductory-level SQL questions (with 4 reference queries), and altogether they included 341 wrong queries.
Out of these, 35 (11\%) used SQL features not supported by \oursys\ (see limitations at the end of \Cref{sec:framework}).
Hence, we end up with 306 supported wrong queries in \textsf{Students}.
(At the time of writing, we are still exploring with the institutional review board the possibility of making this dataset publicly available.)

To further expand coverage of errors, we cross-checked \textsf{Students} queries with the list of SQL issues indicative of semantic errors categorized by Brass et al.~\cite{brass2006semantic} (which did not publish a query dataset).
Out of the 43 issues in~\cite{brass2006semantic}, 18 involve SQL features not currently supported by \oursys,
but they only make up for a small minority (11.4\%) of the observed instances as reported by~\cite{brass2006semantic}.
Out of the 25 issues \oursys\ should support, 17 are already represented in the 306 \textsf{Students} queries.
To cover the remaining 8, we handcrafted two queries according to each issue and added to the dataset;
we also handcrafted corresponding reference queries (free from any issue in \cite{brass2006semantic}).
We denote the resulting dataset \textsf{Students+}, with 322 queries having errors/issues.

Our second test dataset, denoted \textsf{TPCH}, is based on TPC-H~\cite{tpch} schema and queries, with synthetic errors injected.
This dataset allows us to stress-test \oursys\ with queries that are more complex than \textsf{Students}.
Also, because errors are synthetic, we have the ``ground-truth'' repair sites and fixes, allowing us to easily assess the optimality of \oursys\ fixes.
Most \WHERE\ conditions in TPC-H queries are conjunctive:
we chose 7 TPC-H queries with conjunctions of 4,5,6,7,9,10,11 atomic predicates (TPC-H Query 4,3,10,9,5,8,21 respectively).
% We focus on the conjunctive queries as we only examine how the number of predicates might affect runtime, and complex structure of the query (e.g., nested AND/OR) might incur extra computation overhead, causing the results to be inaccurate.
Since we did not find a TPC-H query with exactly 8 predicates, we synthesized one by removing one predicate from TPC-H Query 5. 
For each query, we then introduced errors into two atomic predicates to make the wrong query, which remained conjunctive.
Thus, each pair of wrong and reference queries has 6-13 unique atomic predicates. 
Furthermore, to test cases beyond conjunctive \WHERE\ conditions,
we chose TPC-H Query 7, whose \WHERE\ contains multiple nested \sql{AND} and \sql{OR},
and created 5 wrong queries by injecting 1-5 errors by changing atomic predicates or logical operators.
For fair comparison, we ensured that the number of unique atomic predicates is always 10 between the reference query and each wrong query.

\subsection{Results and Discussion}

\begin{figure}[t]
\centering
\begin{subfigure}{.53\linewidth}
\centering
\includegraphics[scale=0.28]{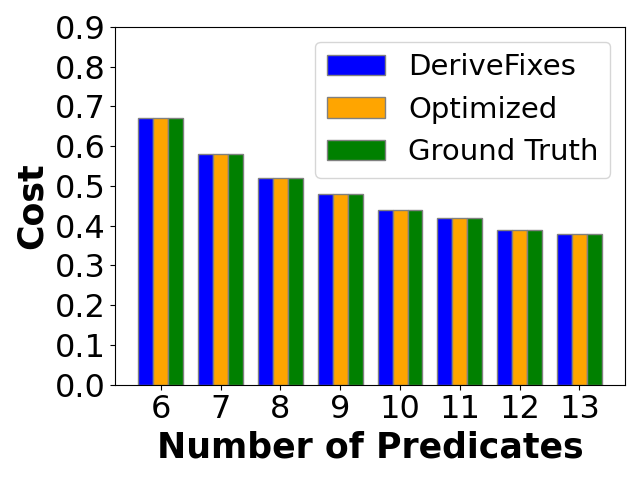}
\caption{Repair cost}
\label{fig:runtime-test-cost}
\end{subfigure}
\begin{subfigure}{.46\linewidth}
\includegraphics[scale=0.28]{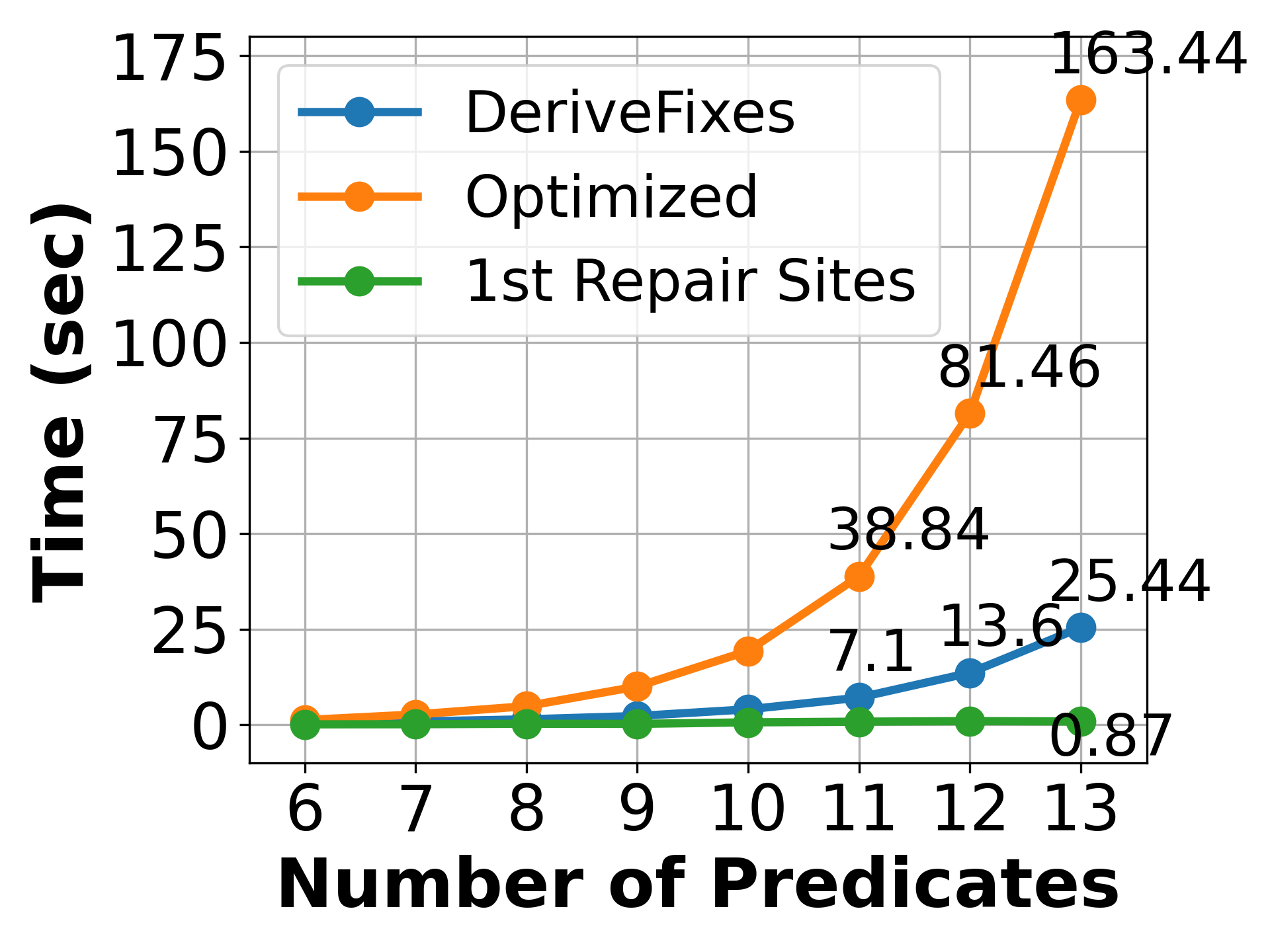}
\caption{Running time}
\label{fig:runtime-test-runtime}
\end{subfigure}
\caption{\DeriveFixes\ vs.\ \DeriveFixesOPT\ (Optimized) for conjunctive \WHERE\ (\textsf{TPCH})}
\label{fig:runtime-test}
% \vspace{-5.4mm}
\end{figure}

\begin{figure}[t]
\centering
\begin{subfigure}{.53\linewidth}
\includegraphics[scale=0.28]{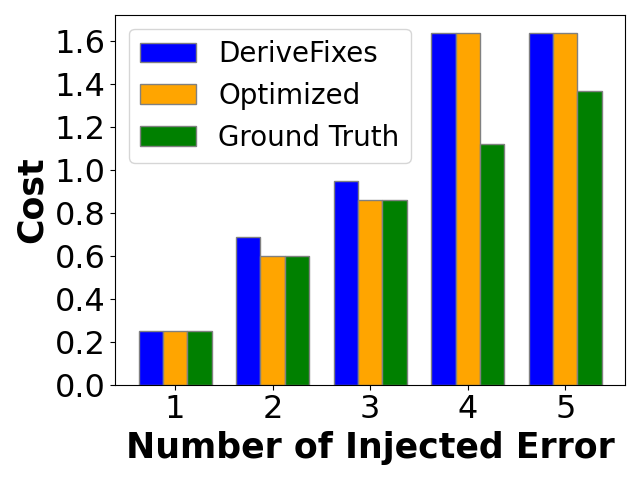}
\caption{Repair cost}
\label{fig:cost-q7}
\end{subfigure}
\begin{subfigure}{.46\linewidth}
\centering
\includegraphics[scale=0.28]{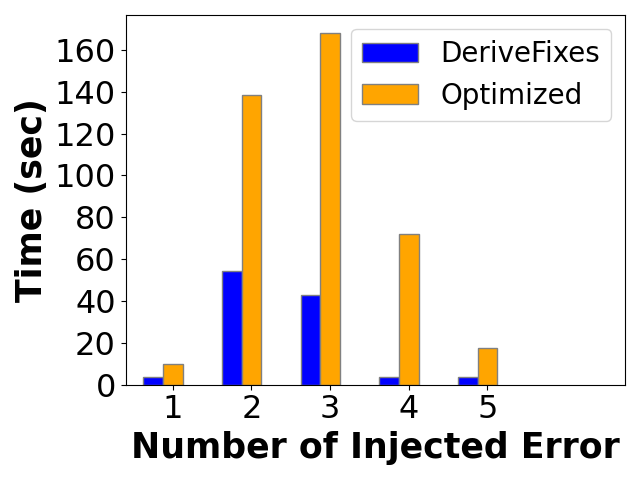}
\caption{Running time}
\label{fig:time-q7}
\end{subfigure}
\caption{\DeriveFixes\ vs.\ \DeriveFixesOPT\ (Optimized) for nested \sql{AND}/\sql{OR} (\textsf{TPCH})}\label{fig:q7-cost_vs_time}
\end{figure}

\begin{figure}[t]
\centering
\begin{subfigure}{.53\linewidth}
\includegraphics[scale=0.28]{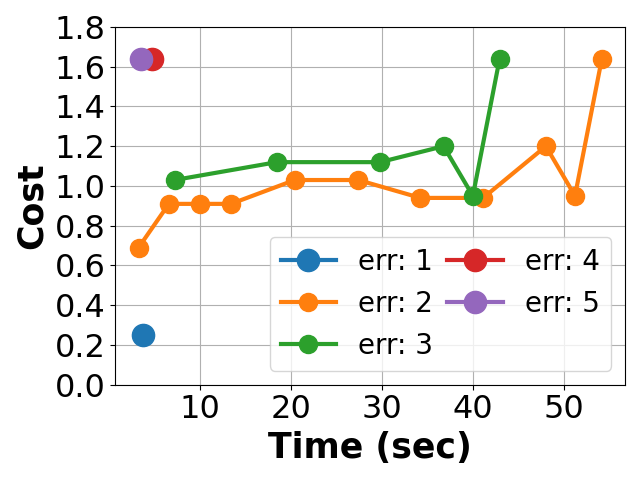}
\caption{\DeriveFixes}
\label{fig:qr-hint-cost-time}
\end{subfigure}
\begin{subfigure}{.46\linewidth}
\centering
\includegraphics[scale=0.28]{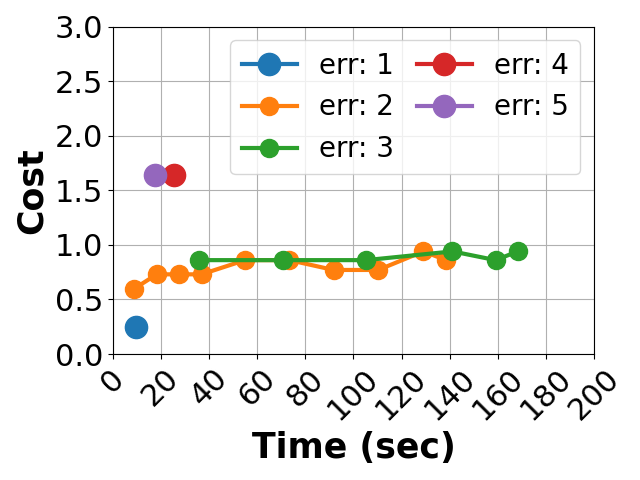}
\caption{\DeriveFixesOPT}
\label{fig:qr-hint-optimized-cost-time}
\end{subfigure}
\caption{Cost of repairs found during course of execution}
%before termination
\label{fig:cost_vs_time}
% \vspace{-4mm}
\end{figure}

\mypar{\textsf{Student+}}\label{sec:experiments:coverage-test-result}
To test coverage and optimality of \oursys,
we ran \oursys\ for the 322 \textsf{Student+} queries with errors/issues, along with their reference queries,
and examined all \oursys\ fixes.
For the 25 issues in \cite{brass2006semantic} that \oursys\ should support, we found that they were handled in three ways:
1)~11 of them were indeed errors, and \oursys\ correctly identified and fixed them all;
2)~3 of them were efficiency/stylistic issues where the queries were semantically still correct (e.g., logically correct \WHERE\ containing some tautological conditions, such as \sql{A >= B OR A < B}), and \oursys\ did not flag any error;
3)~the remaining 11 of them were also efficiency/stylistic issues (e.g., unnecessarily joining a primary key with its corresponding foreign key but only projecting the foreign key column), but \oursys\ failed to detect query equivalence in this case and suggested some fixes.
This last category is the only case where \oursys\ showed suboptimal behavior, though its suggested fixes still lead to correct queries, and with the interesting side effect of resolving efficiency/stylistic issues.
The detailed analysis can be found in the appendix.
It is worth noting that \oursys\ perfectly handles all of the 10 most common issues in \cite{brass2006semantic}.

\oursys's average running time per query on \textsc{Student+} is 0.2 seconds, using \DeriveFixes.
However, note that most \textsc{Student+} queries are rather simple, with conjunctive \WHERE\ (which does not need \DeriveFixesOPT\ for optimality) and at most 5 unique atomic predicates.
Therefore, we further stress-tested \oursys\ using \textsc{TPCH}.

\mypar{\textsf{TPCH}, conjunctive \WHERE\ with varying number of atomic predicates}\label{sec:experiments:runtime-result}
Here, we study \oursys's running time and optimality (as measured by repair cost, the lower the best) as we vary the number of atomic predicates involved in repairing \WHERE.
We compare versions of \oursys\ using \DeriveFixes\ vs.\ \DeriveFixesOPT, both set to explore up to two repair sites.
\Cref{fig:runtime-test-cost} confirms that for conjunctive queries, both always return optimal repairs according to the ground truth, regardless of the size of \WHERE.
(Note that the repair cost is not proportional to the number of atomic predicates because it is normalized by the query sizes per \Cref{eq:cost}).
\Cref{fig:runtime-test-runtime} shows that as expected, both have running times exponential in the number of unique atomic predicates,
but \DeriveFixes\ runs much faster than \DeriveFixesOPT.
Furthermore, the plot labeled ``1st Repair Sites'' shows that it takes less than one second for \oursys\ to find the first \emph{viable} (not necessarily optimal) repair site, so there is additional room to trade optimality for faster running time.

\mypar{\textsf{TPCH}, \WHERE\ with nested \sql{AND}/\sql{OR} and varying number of injected errors}\label{sec:experiments:accuracy-result}
As shown in \Cref{fig:cost-q7}, when the optimal repair (according to the ground truth) involves only one repair site (a single error),
both \DeriveFixes\ and \DeriveFixesOPT\ are able to find this optimal repair, confirming \Cref{lemma:where-optimality}.
When there are more errors (2-3), \DeriveFixes\ returns suboptimal repairs while \DeriveFixesOPT\ is still able to find optimal or near-optimal repairs (for the cases of 2 and 3 errors, respectively).
However, with 4-5 errors---which are arguably not the cases \oursys\ targets---both suffer from suboptimality because they are set to explore up to two repair sites; in fact, both decided that it was best to just repair the whole \WHERE\ condition.
\Cref{fig:time-q7} shows that \DeriveFixesOPT's better optimality comes at the expense of slower speed than \DeriveFixes, however.
Interestingly, with 4-5 errors, both run faster than with 2-3 errors, because the large numbers of errors severely limit the number of possibilities of single- and 2-site repairs, speaking to the effectiveness of \CreateBounds\ in quickly spotting and bailing out of difficult situations.

Finally, \Cref{fig:cost_vs_time} shows all unpruned viable repairs found during \oursys's course of execution,
in terms of when they were found and how much they cost;
there is one trace for each execution.
Traces for 1 (blue), 4 (red), and 5 (purple) errors degenerate into single dots because \oursys\ eventually finds only one solution as viable repair options are limited.
Recall that we heuristically prioritize the viable repairs to consider,
but there is no guarantee that a cheaper repair will always be found earlier.
Hence, there are fluctuations in the repair costs over time, although the general trends are up, confirming the effectiveness of our heuristic.
Furthermore, note that the lowest-cost repairs tend to surface early during execution.
\label{sec:experiments:feasibility}
In closing, while the total and worst-case running times of \oursys\ grow exponentially in query size,
in practice the running times are reasonable considering that \oursys\ is intended for education settings,
where returning hints instantaneously may not be necessary or desirable for learning.
With the observation that \oursys\ often returns some low-cost repairs early,
we can offer them as preliminary hints to get students thinking, while \oursys\ continues to look for better repairs in the meantime.

\section{User Study}\label{sec:user-study}

We conducted a small-scale user study to evaluate \oursys: 1)~whether students can understand what is wrong with the suggested hints,
and 2)~how the hints generated by \oursys\ compare with ones provided by ``{\em expert users}'' (teaching assistants in our study). 

\mypar{Participants} \label{sec:user-study:participants}
We recruited 38 students who have taken/are taking a graduate or undergraduate database course.
Except for an incentive of receiving a small gift card and practicing SQL, the participation was voluntary.
In the end, we collected 15 complete and valid answers.
A possible explanation for the low completion rate was the significant effort required to debug SQL queries with subtle mistakes
(we observed that some participants took more than an hour to finish).
We considered the possibility of recruiting participants from other sources (e.g., Amazon Mechanical Turk),
but decided against it because they would not represent our targeted population (students).
Furthermore, given the significant effort required from the participants as observed above,
it would be hard to incentivize participants who are not actively learning SQL:
a low reward would turn them away, while a high reward might encourage undesirable behaviors.

\mypar{Preparation}\label{sec:user-study:prep} 
To design the survey, we first performed an analysis of the \textsf{Students} queries to get a sense of what the common errors were.
Overall, most errors came from \WHERE\ and \HAVING\ (130 out of 341 are wrong due to \WHERE);
students often missed join conditions for queries involving many tables.
Other common errors include incorrect/redundant/missing tables in \FROM, incorrect order and missing/redundant expressions in \SELECT, and incorrect expressions in \GROUPBY.
We decided not to use the same queries from \textsf{Students}, as our participants had done the same/similar homework previously, which might bias the results.
Nonetheless, based on these observations, we designed four SQL questions using the DBLP schema (details in the appendix).
For each question, we crafted a wrong solution containing one or more mistakes:
two \WHERE\ errors for $Q_1$,
one \GROUPBY\ error and one \SELECT\ error for $Q_2$,
one \WHERE\ error for $Q_3$,
and one each \WHERE\ and \HAVING\ errors in $Q_4$.
Even though the queries are over a different schema, the errors above faithfully reflect real errors from \textsf{Students},
and they are consistent with the common errors found by others~\cite{ahadi2016students, brass2006semantic}.

Then, we performed a small study with four graduate teaching assistants (TAs) to generate hints for these queries.
Each TA was asked to pinpoint all mistakes in each query and offer hints, as if they were helping students debug wrong queries.
To simulate an office-hour setting, we asked TAs to finish all four questions in one sitting, with no help from \oursys.
We collected all hints provided by the TAs as ``expert'' hints.

Next, we ran \oursys\ on all wrong queries to obtain repair sites and fixes.
We removed fixes and only showed repair sites to the participants as hints.
To prevent participants from recognizing the source of hints (experts vs.\ \oursys) by their wording,
we paraphrased all hints to use a common template ``In [\emph{SQL clause}], [\emph{hint}]'' and standard wording.

\mypar{Tasks}
Using the four queries, %$Q_1, \cdots, Q_4$, 
each participant saw and completed three questions.
Students were required to complete questions on Q1 and Q2, and they completed one of Q3 and Q4 at random.
For each question, students were given the database schema, problem statement in English, and the wrong SQL query, and were asked to explain what is wrong with the query.  For creating {\em treatment} and {\em control} groups, students received hints from \oursys\ for either Q1 or Q2 (not both) at random, and for the other one they were asked to detect errors without any hints provided; the order of the two questions with and without hints was also chosen at random.
%so that we can do horizontal comparison to evaluate whether the generated hints are helpful for understanding errors in a query. 
For the last question, participants received Q3 or Q4 at random, and we showed the union of hints (mixed together) %from both
generated by the TAs as well as by \oursys, and asked participants to categorize each hint as one of the following: ``{\em Unhelpful or incorrect}'', ``{\em Helpful but require thinking}'', and ``{\em Obvious and giving away the answer}''. Participants were asked to finish all questions in one sitting. We recorded the time a participant spent on each question\footnote{$Q_1$ without/with hints took 704s/460s on average; $Q_2$ took 756s/658s. Students completed the survey asynchronously, so the time recorded may not be accurate.}. In our study, for Q1, 8 students answered it with no hints and 7 with hints from \oursys. For Q2, these numbers are 7 and 8 respectively. For the third question, 7 received Q3 and 8 received Q4.
%On the other hand, there are 7 participants who answered Q2 with no hint and 8 participants answered it with \oursys.

\begin{figure}[t]
\vspace{-3mm}
\centering
\begin{subfigure}{.55\linewidth}
\includegraphics[scale=0.25]{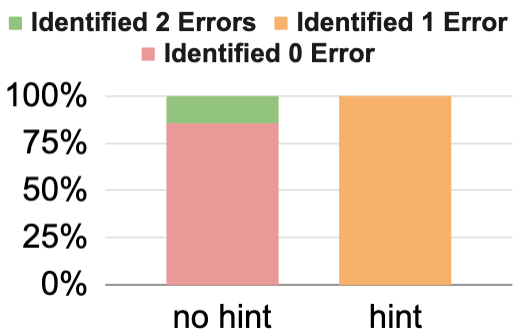}
\caption{Q1: No Hints vs. \oursys}
\label{fig:q1}
\end{subfigure}%
\begin{subfigure}{.45\linewidth}
\centering
\includegraphics[scale=0.25]{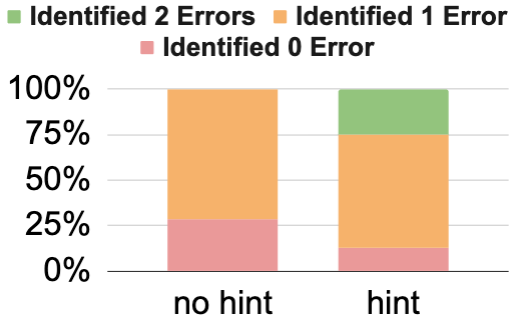}
\caption{Q2: No Hints vs. \oursys}
\label{fig:q2}
\end{subfigure}
\caption{User performance with/without \oursys } %(one student found 2 errors in $Q_1$ after 20 mins.) 
% \vspace{-3mm}
\end{figure}

\begin{figure}[t]
\vspace{-3mm}
\centering
\begin{subfigure}{.5\linewidth}
\includegraphics[scale=0.23]{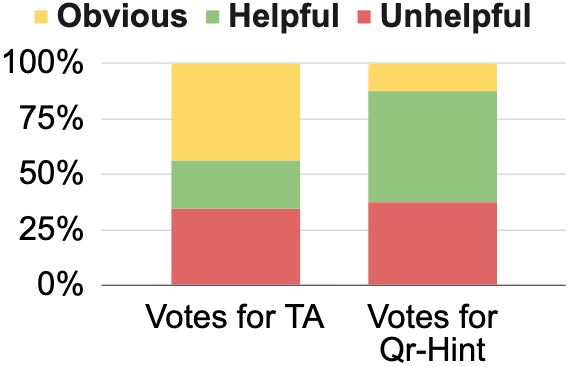}
\caption{Q3 Vote Cast}
\label{fig:q3-vote}
\end{subfigure}
\begin{subfigure}{.45\linewidth}
\centering
\includegraphics[scale=0.23]{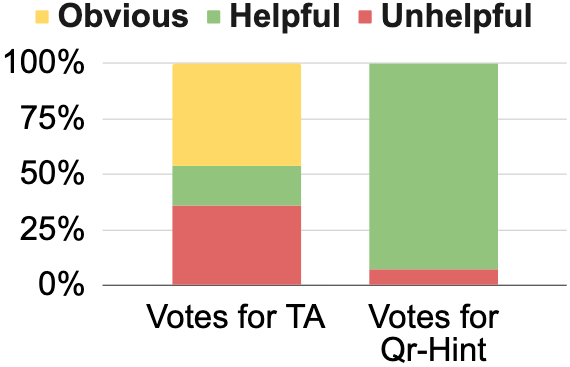}
\caption{Q4: Vote Cast}
\label{fig:q4-vote}
\end{subfigure}
\caption{Hint categorization from participants for Q3, Q4}
\label{fig:hint-categorization}
\end{figure}

\mypar{Result and Analysis} 
Our results %for the horizontal comparison in 
for Q1 and Q2 show that participants were better at identifying at least one error in the query given the hints provided by \oursys\  compared to no hints. As shown in Figure~\ref{fig:q1} and Figure~\ref{fig:q2}, 100\% and 87.3\% of the participants were able to identify at least one of the two errors in the wrong query in Q1 and Q2 respectively after receiving hints from \oursys, as opposed to 14.3\% and 71.4\% who were able to do so without a hint. While there is a single participant who correctly identified both errors without any hint for Q1, this participant spent more than 20 minutes doing so, while most participants spent no more than 10 minutes on the same question without hints. 

Q3 and Q4 are used to evaluate whether \oursys\ provided hints that are comparable to the ones given by teaching assistants in terms of their quality. 
For Q3, there are four TA hints and one hint from \oursys; and there are four TA hints and two hints generated by \oursys\ for Q4. For all responses, we sum up the number of times participants vote for each of the three categories of hint ranks: ``Obvious'', ``Unhelpful'', and ``Helpful''. The results are shown in Figures~\ref{fig:q3-vote}, \ref{fig:q4-vote}. In summary, the quality of TAs' hints varies greatly as perceived by participants. On the other hand, \oursys\ is consistently perceived by participants as ``helpful but require thinking'', which might be best suited for classroom settings. 
%where we may not want to directly give out the solutions to students. Also automated generation of hints tailored to queries produced by students is useful for larger classes when personalized help from TAs may be difficult to scale.

\section{
Conclusion and 
Future Work}\label{sec:future-work}

We presented \oursys, a framework for automatically generating hints and suggestions for fixes for a wrong SQL query with respect to a reference query.  
We developed techniques to fix all clauses in a query and gave theoretical guarantees.
There are multiple intriguing directions of future work, including the support of more complex constructs such as subqueries, outer-joins (\sql{NULL}), and database constraints. 
%Additionally, 
There are many steps where the framework evaluates all possible options (e.g., repair sites), hence improving the scalability of the system is also a future work.
It will also be interesting to develop techniques to avoid the limitations of SMT solvers in our framework.
We are implementing a graphical user interface so that \oursys\ can better assist students/TAs in database courses. Conducting a larger-scale user study to further understand the effectiveness of \oursys\ is also important for future work.

% As for the recent future work, several steps are imminent:

% \begin{itemize}
%     \item Resolve the sub-optimality on finding fixes for the WHERE clause.
%     \item Investigate the potential to extend the framework to handle positive SELECT-FROM-WHERE subqueries (i.e. EXISTS and IN operator without negation).
%     \item A full-fledged user study to evaluate how effective the framework is when helping users understand the errors in the queries. 
% \end{itemize}

% As for long-term future work, these directions might be interesting to explore:

% \begin{itemize}
%     \item Extend the framework to handle more operators (i.e. UNION, EXCEPT, NOT EXISTS, NOT IN, etc.), and understand the assumptions needed for covering those operators.
%     \item While Cosette~\cite{chu2017cosette} uses algebraic expressions to decide the equivalence between two queries, can such algebraic expressions be utilized to form repair sites and fixes from a mathematical stand point?
%     \item Similar to the why and why-not provenance, while we wonder what is the cause of some missing tuples/incorrect tuples in the output, can we use such information to adjust the query to achieve query equivalence?
% \end{itemize}

% \input{sections/algo}

%%
%% The acknowledgments section is defined using the "acks" environment
%% (and NOT an unnumbered section). This ensures the proper
%% identification of the section in the article metadata, and the
%% consistent spelling of the heading.
% \begin{acks}
% \end{acks}

%%
%% The next two lines define the bibliography style to be used, and
%% the bibliography file.
\clearpage
\balance
\bibliographystyle{ACM-Reference-Format}
\bibliography{main}

\clearpage
\appendix

\section{Related Work Supplement}
{\bf Extended Discussion on Testing query equivalence.}
There are several classical results on query equivalence. 
Chandra et al. \cite{chandra1977optimal} show that equivalence testing of conjunctive queries is NP-complete. % which has been later studied in other seminal works for different query classes \cite{aho1979equivalences, sagiv1980equivalences, klug1988conjunctive, jayram2006containment, ioannidis1995containment}. 
Aho et al.~\cite{aho1979equivalences} propose tableau to represent the value of a query, which is used to give algorithms for checking equivalence of SPJ queries. Sagiv et al.~\cite{sagiv1980equivalences} give a procedure for testing the equivalence of Select-Project-Join-Difference-Union (SPJDU) queries. Klug~\cite{klug1988conjunctive} presents algorithms for checking the equivalence of conjunctive queries with inequalities. The equivalence problem of some classes of queries under bag semantics has been proved to be undecidable \cite{ioannidis1995containment, jayram2006containment}.
While the query equivalence problem in general is undecidable \cite{trahtenbrot1950impossibility, abiteboul1995foundations}, tools are developed to check the equivalence of various classes of queries with restrictions and assumptions. Cosette~\cite{chu2017cosette, chu2017hottsql, chu2018axiomatic} transforms SQL queries to algebraic expressions and uses a decision procedure and rewrite rules to check if the resulting expressions of two queries are equivalent. EQUITAS~\cite{zhou2019automated} develops a symbolic representation of SQL queries in first-order logic, and uses satisfiability modulo theories (SMT) to check query equivalence. 
%In addition, 
WeTune~\cite{wang2022wetune} %also 
builds a query equivalence verifier by utilizing the U-semiring structure~\cite{chu2018axiomatic}. %Though being capable of proving query equivalence, the aforementioned 
These tools for query equivalence do not give hints on how to modify one query to become equivalent to another. % To the best of our knowledge, our work is the first in providing targeted hints and edits in a wrong SQL query with respect to a reference query to make the two queries equivalent without requiring a test database as input or output. 

\section{\FROM\ Stage Supplement}

\subsection{Finding Table Mapping}
In this section, we describe our heuristics for determining a table mapping. Since looking at \FROM\ alone does not have enough information for determining a mapping (e.g. when self-join exists), we gather information from other clauses (i.e., \WHERE, \GROUPBY, \SELECT) to make a decision. Given that both $Q$ and $Q^\star$ share the same multiset of tables, the general idea is to turn the table mapping into establishing one-to-one matching for elements between the two multisets.

% The general idea is to 1) create a \textit{table signature} for each table involved in self-join, 2) build a bipartite graph where each partition represents the multi-set of tables from $Q$ and $Q^\star$ respectively, and the same references to a table across two partitions are connected by an edge whose weight represents the similarity between two table signatures, and 3) select a mapping by solving the minimum-cost bipartite matching problem.

To uniquely identify each table in both multisets, we first introduce \textit{table signature}. We first describe the construction of a table signature for a table in query $Q$:
\begin{enumerate}[leftmargin=*]
    \item Scanning $Q$'s \WHERE\ and \HAVING, for each selected operator ($=, <, >, \leq, \geq, \sql{LIKE}$) and each attribute $a$ in the table, we create a set of attributes that ``interact'' with $a$ in some atomic predicates in \WHERE\ and/or \HAVING\ (note: we rewrite the predicates to make sure $a$ is on the left-hand side of the operator in the case of inequality). If $a$ does not appear in \WHERE\ or \HAVING\, or it does not appear in a predicate with the selected operator, the corresponding set will be empty. After creating each set, we expand it to the entire equivalence class of its current attributes. We then replace each attribute in the set with the name of the original table they belong to. 
    \item Scanning $Q$'s \GROUPBY, create a set of attributes from the table that appear in any \GROUPBY\ expression.
    \item Scanning $Q$'s \SELECT, for each attribute $a$ in the table, create a set of indices such that this attribute appears in the \SELECT\ expression at the indexed position.
\end{enumerate}

In summary, let $t$ denote a table in a query $Q$, a \textit{table signature} is a triple $\sigma = (W_t, G_t, S_t)$, where $W_t$ is a function $W_t(a, o) \mapsto T, a \in \Attributes(t), o \in \{=, <, >, \leq, \geq, \sql{LIKE}\}$, and $T$ is the set of tables that interact with $a$ in $t$ through $o$; $G_t \mapsto E$ s.t. $E \subseteq \Attributes(t)$ and $\forall e \in E$ appear in at least one \GROUPBY\ expression in $Q$; and $S_t$ is a function $S_t(a) \mapsto I, a \in \Attributes(t)$, and $I$ is the set of integers whose indexed \SELECT\ expression in $Q$ contain $a$.

With table signatures, let $O = \{=, <, >, \leq, \geq, \sql{LIKE}\}$ denote the set of operators, we then define the following metric for calculating a normalized similarity between two signatures $\sigma = (W,G,S), \sigma' = (W', G', S')$:

\begin{align*}
\Sim(\sigma, \sigma') =& \frac{ \sum_{a \in \Attributes(t), o \in O} \dist(W(a,o), W'(a,o))  }{|\Attributes(t)| \times |O|} \\
                       & + \dist(G, G') \\
                       & + \frac{ \sum_{a \in \Attributes(t)} \dist(S(a), S'(a))  }{|\Attributes(t)|}
% \Cost(\theta, \theta') = w \cdot |\set{S}|
%   + \sum_{s \in \set{S}} \frac{\dist(s, \set{F}(s))}{|P| + |P^\star|},
%   \text{where}\label{eq:cost}
% \vspace{-1mm}
\end{align*}

Here we define $\dist$ as the Jaccard similarity between two sets. Each component is a normalized Jaccard similarity between the corresponding sets, and we take the sum of three Jaccard similarities (i.e. for \WHERE, \GROUPBY, \SELECT\ respectively) as our final similarity metric. Note that when two sets are empty, we count their Jaccard similarity as 1.

With the similarity metric, we build a bipartite graph where
\begin{itemize}
    \item Partitions 1 and 2 contain the table signatures for tables in $Q^\star$ and $Q$, respectively.
    \item Each node in partition 1 is connected with all nodes in partition 2 that refer to the same table. The weight of the edge between two nodes is the similarity metric between their signatures.
\end{itemize}

Given the bipartite graph, we select a table mapping that has the maximum cumulative similarity by solving the minimum-weight perfect matching problem. We now use an example to demonstrate the heuristic.

\begin{Example}\label{ex:table-mapping-full}
Continuing with \Cref{ex:intro1}, the initial signatures for \narrow{S1}, \narrow{S2}, \narrow{s1}, \narrow{s2} are shown in \Cref{ex:table-mapping}. After replacing attribute names with table names, the final signatures are the following:

\centerline{\footnotesize\setlength{\tabcolsep}{0.2em}
\begin{tabular}{rr|l|l|l|l}
    &
            & \narrow{S1} in $Q^\star$
                    & \narrow{S2} in $Q^\star$
                            & \narrow{s1} in $Q$
                                    & \narrow{s2} in $Q$
\\\hline\hline
\WHERE\ \&
    & \narrow{bar}:
            & ${=}\{\narrow{Frequents}\}$
                    & ${=}\{\narrow{Frequents}\}$
                            & None
                                    & None
\\
\HAVING\
    & \narrow{beer}:
            & ${=}\{\narrow{Likes}, \narrow{Serves}\}$
                    & ${=}\{\narrow{Likes}, \narrow{Serves}\}$
                            & ${=}\{\narrow{Likes}, \narrow{Serves}\}$
                                    & ${=}\{\narrow{Likes}, \narrow{Serves}\}$
\\
    & \narrow{price}:
            & ${\le}\{\narrow{Serves}\}$
                    & ${\ge}\{\narrow{Serves}\}$
                            & ${>}\{\narrow{Serves}\}$
                                    & ${<}\{\narrow{Serves}\}$
\\\hline
\GROUPBY\
    &
            & $\{\narrow{bar}, \narrow{beer}\}$
                    & $\{\narrow{beer}\}$
                            & $\{\narrow{beer}\}$
                                    & $\{\narrow{beer}\}$
\\\hline
\SELECT\
    & \narrow{bar}:
            & $\{ 2 \}$
                    & $\emptyset$
                            & $\emptyset$
                                    & $\{ 2 \}$
\\
    & \narrow{beer}:
            & $\{ 1 \}$
                    & $\{ 1 \}$
                            & $\{ 1 \}$
                                    & $\{ 1 \}$
\\
    & \narrow{price}:
            & $\emptyset$
                    & $\emptyset$
                            & $\emptyset$
                                    & $\emptyset$
\\\hline\hline
\end{tabular}
}

The weight of the edge between \narrow{S1} and \narrow{s1} is calculated as followed:
$\frac{4+5+3}{15} + \frac{1}{2} + \frac{0+1+1}{3} \approx 1.97$.
Comparing the \WHERE\ and \HAVING\ in \narrow{S1} and \narrow{s1}'s signatures, the Jaccard distance between $W(\narrow{bar}, =)$ and $W'(\narrow{bar}, =)$ is 0 as they do not have common element. Since the other 4 operators do not involve \narrow{bar}, their corresponding sets are all empty and thus yield a Jaccard distance of 1 between \narrow{S1} and \narrow{s1}. Thus the total Jaccard distance for \narrow{bar} is 4. Similarly, we obtain 5, 3, respectively for \narrow{beer} and \narrow{price}, and the sum of these Jaccard distances is normalized. The Jaccard distance between the \GROUPBY\ is simply $\frac{1}{2}$ as there is only one common element. For \SELECT, \narrow{beer} and \narrow{price} have the same sets between \narrow{S1} and \narrow{s1}, so the normalized Jaccard distance is $\frac{2}{3}$.

Following the same fashion, the weight of the rest of edges are below:
\begin{itemize}
    \item $\narrow{S1} \mapsto \narrow{s2}: \frac{4+5+3}{15} + \frac{1}{2} + \frac{1+1+1}{3} = 2.3$
    \item $\narrow{S2} \mapsto \narrow{s1}: \frac{4+5+3}{15} + 1 + \frac{1+1+1}{3} = 2.8$
    \item $\narrow{S2} \mapsto \narrow{s2}: \frac{4+5+3}{15} + 1 + \frac{0+1+1}{3} \approx 2.47$
\end{itemize}

With the signatures, the corresponding bipartite graph is shown below:
\begin{figure}[hbp]
    \centering
    \includegraphics[scale=0.3]{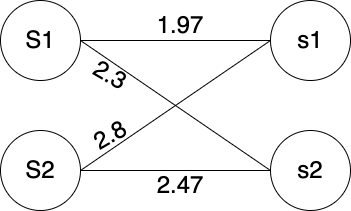}
    \caption{\small Constructed bipartite graph for \Cref{ex:table-mapping-full}}
    \label{fig:table-mapping}
\end{figure}
By negating the weight of each edge, we then convert the problem into solving the minimum-weight perfect matching on the graph, \oursys\ eventually choose the following mapping: $\narrow{S1} \mapsto \narrow{s2}, \narrow{S2} \mapsto \narrow{s1}$, which has the minimum overall weight ($-2.3 + -2.8 = -5.1$) among all matchings, and it is the best one for suggesting fixes in the downstream stages.
\end{Example}

\subsection{Proof for Lemma~\ref{lemma:from-optimality}}
\begin{proof}[Proof of \Cref{lemma:from-optimality}]
Without loss of generality, suppose that $Q^\star$ returns a non-empty result over some database instance $D$.
We construct a new database instance $D'$ as follows: 
\begin{enumerate}
\item For each unique table $T \in \Tables(Q^\star)$, duplicate each row in $T$ a number of times equal to a unique prime $p_T$ (such that $p_{T_1} \neq p_{T_2}$ for any $T_1 \neq T_2$).
\item For each table $T \not\in \Tables(Q^\star)$, make $T$ empty.
\end{enumerate}
We have
\[
|Q^\star(D')|/|Q^\star(D)| = \prod_{\text{unique}\,T \in \Tables(Q^\star)} p_T^{|\Aliases(Q^\star, T)|}.
\]
For any $Q$ equivalent to $Q^\star$, it must be the case that $\Tables(Q) \subseteq \Tables(Q^\star)$ (disregarding counts); otherwise $Q(D')$ would be empty by construction of $D'$. 
Furthermore, note that for $Q$ to be equivalent to $Q^\star$, we must have $|Q(D')|/|Q(D)| = |Q^\star(D')|/|Q^\star(D)|$, so
\[
\prod_{\text{unique}\,T \in \Tables(Q)} p_T^{|\Aliases(Q, T)|} = \prod_{\text{unique}\,T \in \Tables(Q^\star)} p_T^{|\Aliases(Q^\star, T)|}.
\]
It then follows from the Prime Factorization Theorem~\cite{gauss1966disquisitiones} that $\Tables(Q)$ and $\Tables(Q)$ contain the same \emph{set} of tables, and that for each unique $T \in \Tables(Q)$ (or $\Tables(Q)$), $|\Aliases(Q, T)| = |\Aliases(Q^\star, T)|$.
In other words, $\Tables(Q) \mseq \Tables(Q^\star)$.
\end{proof}

\section{\WHERE\ Stage Supplement}

In this section, we give a complete story for the derivation of fixes (i.e. \Cref{alg:derive-fixes}) and its optimization, as well as give proof to \Cref{lemma:where-correctness}, \Cref{lemma:where-optimality}, \Cref{lemma:correctness-create-bounds} and \Cref{lemma:correctness-derive-fixes}.

\subsection{\DeriveFixes\ Revisited}

\subsubsection{Target Bound Derivation}
We first completely map out the story for lines~\ref{l:DeriveFixes:recurse:start}--\ref{l:DeriveFixes:recurse:end} of \Cref{alg:derive-fixes}, which contain the strategy for pushing down target bound based on the repair bounds of the children of a node.

For illustration and without loss of generality, we first assume that all $\land, \lor$ nodes have only two children $c, c'$. While addressing a particular node $x$, we use the same notation as in \Cref{sec:where}, where $[l_c, u_c], [l_{c'}, u_{c'}]$ represent the repair bound of the child $c, c'$ respectively, and $[l^\star, u^\star]$ denotes the $x$'s target bound, and $[l^\star_c, u^\star_c], [l^\star_{c'}, u^\star_{c'}]$ denote the target bound of $c, c'$ respectively. 

The principle that guides the formulation of the target bound is simple: knowing that we can only pick a formula within the repair bounds at any node (as the repair bounds dictate the possible formula), we would like to tighten the repair bounds as little as possible to form target bounds so that \MinFix\ have more freedom to find a small formula. We use an example to demonstrate why loose bounds potentially yield small formulas. 

\begin{Example}\label{ex:min-formula-optimization}
Let $l \equiv (a_1 \lor a_2) \land a_3$ and $u \equiv a_1 \lor a_2$ be the lower and upper bounds, where $a_1, a_2, a_3$ are independent atomic predicates. The minimal formula within $[l, u]$ is $a_1 \lor a_2$. However, let $u' \equiv a_1 \lor a_2 \lor a_3$ be the new upper bound, the minimal formula that falls within $[l, u']$ becomes $a_3$. $a_3$ is smaller than $a_1 \lor a_2$ in terms of the size of their syntax trees. Here we expand the range of $[l, u]$ by raising the upper bound (i.e., adding a disjunct to the upper bound).

Symmetrically, we can create a case when range expansion is done by restricting the lower bound (i.e. adding a conjunct to the lower bound). Let $l \equiv a_1 \land a_2$ and $u \equiv (a_1 \land a_2) \lor a_3$ be the lower and upper bounds. If we restrict $l$ further by constructing $l' \equiv a_1 \land a_2 \land a_3$, the minimal formula moves from $a_1 \land a_2$ to $a_3$ for bounds $[l, u]$ and $[l', u]$ respectively. 
\end{Example}

With such a principle, the next question becomes how to tighten children's repair bounds to form target bounds such that we still guarantee combining the children yields the target formula. Situations differ based on the logical operator at each node.

\begin{figure}
\vspace{-3mm}
\centering
\begin{subfigure}{.35\linewidth}
\includegraphics[scale=0.12]{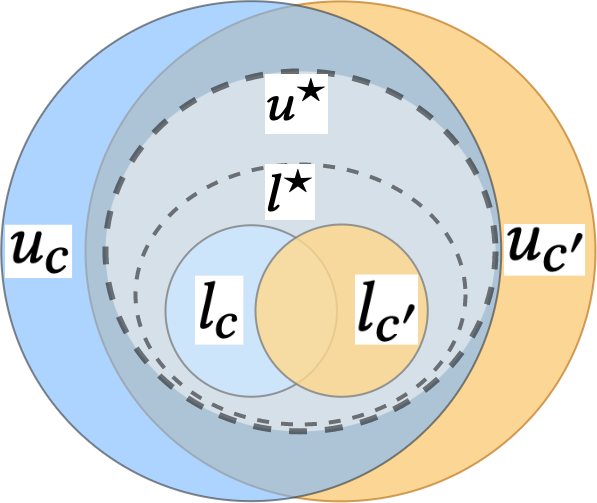}
\caption{The bounds for $\land$}
\label{fig:and-vis}
\end{subfigure}
\begin{subfigure}{.45\linewidth}
\centering
\includegraphics[scale=0.12]{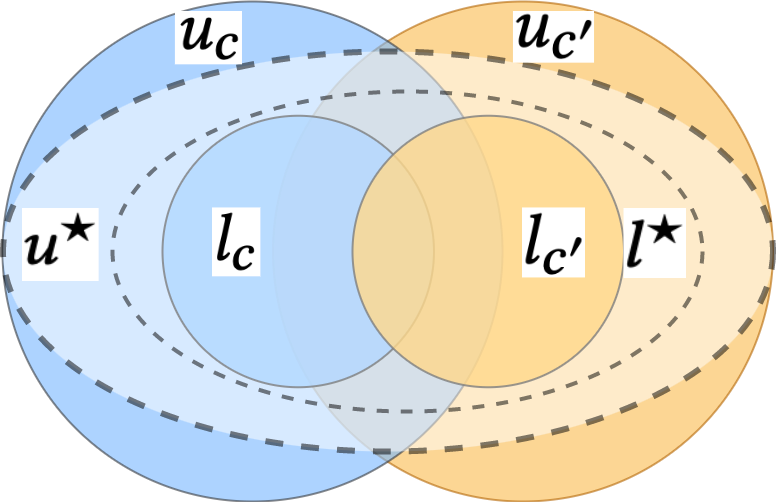}
\caption{The bounds for $\lor$}
\label{fig:or-vis}
\end{subfigure}
\caption{Venn Diagrams for visualizing relationship among formulae in Algorithm \ref{alg:derive-fixes}}
\end{figure}

{\bf When $x$ is rooted at $\land$}, the formula has a general form of $c \land c'$. Assuming both $c, c'$ contain some repair sites, we want to have formulas at the children $p_c, p_{c'}$ satisfying the following constraints after fixes are applied:
\begin{enumerate}[leftmargin=*]
    \item $p_c \in [l_c, u_c]$ and $p_{c'} \in [l_{c'}, u_{c'}]$, i.e., $p_c$ and $p_{c'}$ are within their respective repair bounds.
    \item $p_c \land p_{c'} \in [l^\star, u^\star]$, i.e., combining $p_c, p_{c'}$ using $\land$ forms a new formula that falls within the target bounds at $x$.
    \item The target bounds to be pushed down to $c, c'$ are contained in their repair bounds, i.e., (i) $[l^\star_c u^\star_c] \in [l_c, u_c]$, and (ii) $[l^\star_{c'}, u^\star_{c'}] \in [l_c, u_{c'}]$.
\end{enumerate}

Given the above constraints, the relationship among target bounds and repair bounds are depicted in \Cref{fig:and-vis}: $l_c \land l_{c'} \Rightarrow l^\star \Rightarrow u^\star \Rightarrow u_c \land u_{c'}$. 

We now make the following observation that helps us determine how we tighten $[l_c, u_c]$ and $[l_{c'}, u_{c'}]$ with $[l^\star, u^\star]$ to form $[l^\star_c, u^\star_c]$ and $[l^\star_{c'}, u^\star_{c'}]$, respectively.
\begin{itemize}
    \item $x \Leftrightarrow p_c \land p_{c'} \Leftrightarrow (p_c \lor \lnot p_{c'}) \land p_{c'} \Leftrightarrow p_c \land (p_{c'} \lor \lnot p_c)$
\end{itemize}
Such observation indicates that $p_c$ does not have to cover the semantics of $p_{c'}$ as $p_{c'}$ will be $\land$-ed back when reconstructing $x$. Since $x$ is currently represented by the target bound $[l^\star, u^\star]$, and adding a disjunct is to raise the upper bound as shown in \Cref{ex:min-formula-optimization}, we thus raise $u^\star$ to be $u^\star \lor \lnot p_{c'}$ for constructing the target upper bound for $c$. At this moment, however, we are not certain what $p_{c'}$ will be in the end, and this fact hints us to pick a formula whose semantics must be covered by all possible $p_{c'}$. Recalling the repair bound dictates possible fixes, we thus pick $u_{c'}$ as it is guaranteed to be implied by all possible $p_{c'}$. Finally, as $u^\star \lor \lnot u_{c'} \notin [l_c, u_c]$, we have to further tighten $u_c$ by adding $u^\star \lor \lnot u_{c'}$ as a conjunct, forming the target upper bound $u^\star_c = u_c \land (u^\star \lor \lnot u_{c'})$. As for the target lower bound, we have to pick $l^\star_c = l^\star$, because if $p_c$ ``undershoot'' $l^\star$ (i.e., $p_c \Rightarrow l^\star$ but $p_c \nLeftarrow l^\star$), the missing semantics cannot be covered by $\land$-ing any $p_{c'} \in [l_{c'}, u_{c'}]$ even if $l_{c'} = l^\star$ (i.e., $l^\star \nRightarrow p_c \land l^\star$, causing the resulting formula to fall out of $[l^\star, u^\star]$). Following a similar reasoning, we can obtain $[l^\star_{c'}, u^\star_{c'}] = [l^\star, u_{c'} \land (u^\star \lor \lnot u_c)]$. 

Finally, when there are more than two children under the $\land$ node, we process one child $c$ at a time while combining all other children into $c'$. Since $\land$ is commutative, the new repair bound for the combined $c'$ is simply formed by combining the repair lower/upper bounds from all other children.

{\bf When $x$ is rooted at $\lor$}, the formula has a general form of $c \lor c'$. Similar to an $\land$ node, we want to have $p_c, p_{c'}$ satisfying the following constraints:
\begin{enumerate}[leftmargin=*]
    \item $p_c \in [l_c, u_c]$ and $t_1 \in [l_{c'}, u_{c'}]$, i.e., $p_c$ and $p_{c'}$ are within their respective repair bounds.
    \item $p_c \lor p_{c'} \in [l^\star, u^\star]$, i.e., combining $p_c, p_{c'}$ (after fixes are applied) falls within the target bounds at the root $\lor$ node.
    \item The target bounds to be pushed down to $p_c, p_{c'}$ are contained in their repair bounds, i.e., (i) $[l^\star_c, u^\star_c] \in [l_c, u_c]$, and (ii) $[l^\star_{c'}, u^\star_{c'}] \in [l_{c'}, u_{c'}]$.
\end{enumerate}

Given the above constraints, the relationship among target bounds and repair bounds are depicted in \Cref{fig:or-vis}: $l_c \lor l_{c'} \Rightarrow l^\star \Rightarrow u^\star \Rightarrow u_c \lor u_{c'}$. 

We now make the following observation that helps us determine how we tighten $[l_c, u_c]$ and $[l_{c'}, u_{c'}]$ with $[l^\star, u^\star]$ to form $[l^\star_c, u^\star_c]$ and $[l^\star_{c'}, u^\star_{c'}]$, respectively.
\begin{itemize}
    \item $x \Leftrightarrow p_c \lor p_{c'} \Leftrightarrow (p_c \land \lnot p_{c'}) \lor p_{c'} \Leftrightarrow p_c \lor (p_{c'} \land \lnot p_c)$
\end{itemize}
Such observation indicates that $p_c$ does not have to cover the semantics of $p_{c'}$ as $p_{c'}$ will be $\lor$-ed back when reconstructing $x$. Since $x$ is currently represented by the target bound $[l^\star, u^\star]$, and adding a conjunct is to lower the lower bound as shown in \Cref{ex:min-formula-optimization}, we thus lower $l^\star$ to be $l^\star \land \lnot p_{c'}$ for constructing the target lower bound for $c$. With symmetric reasoning to the $\land$ case, we can obtain $[l^\star_c, u^\star_c] = [l_c \lor (l^\star \land \lnot l_{c'}), u^\star]$ and $[l^\star_{c'}, u^\star_{c'}] = [l_{c'} \lor (l^\star \land \lnot l_c), u^\star]$. Since $\lor$ is also commutative, we handle multiple children in the exact same manner as for $\land$.

{\bf When $x$ is rooted at $\lnot$}, it has only one child, and we thus push down the target bounds by setting them to be $[\lnot u^\star, \lnot l^\star]$ as negation inverts the direction of implication.

\subsubsection{Fix Minimization}
\begin{algorithm2e}[t] \small
\caption{$\MapAtomPreds(\set{P})$}
\label{alg:map-atom-preds}
\Input{a set $\set{P}$ of predicates}
\Output{a list of atomic predicates $\vec{a} = [ a_1, a_2, \ldots, a_{\dim(\vec{a})} ]$
    denoted by Boolean variables $\vec{\bool{a}} = [ \bool{a}_1, \bool{a}_2, \ldots, \bool{a}_{\dim(\vec{\bool{a}})} ]$,
    and a mapping $\phi$ such that for any Boolean subexpression $s$ in any predicate in $\set{P}$,
    $\phi(s)$ returns a Boolean function (with variables in $\vec{\bool{a}}$)
    that is equivalent to $s$ after replacing each variable $\bool{a}_i$ with predicate $a_i$}
\Let $\vec{a} = []$, $\vec{\bool{a}} = []$, and $\phi = \text{empty mapping}$\;
\ForEach{predicate in $\set{P}$ and each atomic predicate $t$ therein}{
    \ForEach(\tcp*[h]{see if $t$ is expressible by a chosen predicate}){$a_i \in \vec{a}$}{
        \uIf{$\IsEquiv(t, a_i)$}{
            \Let $\phi(t) = \bool{a}_i$; \Break;
        }\ElseIf{$\IsEquiv(t, \neg a_i)$}{
            \Let $\phi(t) = \neg \bool{a}_i$; \Break;
        }
    }
    \If(\tcp*[h]{a ``new'' atomic predicate found}){$\phi(t)$ is undefined}{
        $\vec{a}.\narrow{append}(t)$;
        $\vec{\bool{a}}.\narrow{append}(\bool{a}_{\dim(\vec{a})})$;
        \Let $\phi(t) = \bool{a}_{\dim(\vec{\bool{a}})}$;
    }
}
\ForEach{predicate in $\set{P}$ and each Boolean subexpression $s$ therein}{
    \If{$\phi(s)$ is undefined}{
        \Let $\phi(s)$ be the Boolean function obtained from $s$ by replacing each atomic predicate $t$ with $\phi(t)$;
    }
}
\Return $\vec{a}, \vec{\bool{a}}, \phi$;
\end{algorithm2e}

% ======================== MinFix ============================

\begin{algorithm2e}[t]\small
\caption{$\MinFix(l^\star, u^\star)$}
\label{alg:min-fix}
\Input{predicates $l^\star$ and $u^\star$ together defining a target bound $[l^\star, u^\star]$}
\Output{a predicate bounded by $[l^\star, u^\star]$ that is as simple as possible}
\Let $\vec{a}, \vec{\bool{a}}, \phi = \MapAtomPreds(\{l^\star, u^\star\})$\; \nllabel{l:MinFix:MapAtomPreds}
\Let $g_l = \phi(l^\star)$ and $g_u = \phi(u^\star)$\tcp*{both are Boolean functions of $\vec{\bool{a}}$} \nllabel{l:MinFix:bounds}
\Let $g^\star = \BuildTruthTable(\vec{a}, \vec{\bool{a}}, g_l, g_u)$\; \nllabel{l:MinFix:target}
\Let $g = \MinBoolExp(g^\star)$\;
\Return predicate obtained from $g$ by replacing each variable $\bool{a}_i$ with atomic predicate $a_i$;

\BlankLine\setcounter{AlgoLine}{0}\nonl
\Subroutine $\BuildTruthTable(\vec{a}, \vec{\bool{a}}, g_l, g_u)$\DontPrintSemicolon\;\PrintSemicolon
\Output{a partial Boolean function $g^\star$ of $\vec{\bool{a}}$,
    consistent with the bounds defined by $g_l$ and $g_u$,
    represented as a mapping $\smash{\{0,1\}^{\dim(\vec{\bool{a}})}} \to \{\DontCare, 0, 1\}$}
\Let $g^\star = $ empty mapping\;
\ForEach{assignment $\vec{v} \in \smash{\{0,1\}^{\dim(\vec{\bool{a}})}}$ of $\vec{\bool{a}}$ \nllabel{l:BuildTruthTable:iteration}}{
    \Let $x = \bigwedge_{i \in [1..\dim(\vec{a})]} x_i$,
        where $x_i$ is $a_i$ if $\vec{v}$ assigns $\bool{a}_i$ to $1$,
        or $\neg a_i$ if $\vec{v}$ assigns $\bool{a}_i$ to $0$\;
    \uIf(\tcp*[h]{input setting not possible}){$\IsUnSat(x)$ \nllabel{l:BuildTruthTable:unsat1}}{
        \Let $g^\star(\vec{v}) = \DontCare$; \nllabel{l:BuildTruthTable:unsat2}
    }\uElseIf( \tcp*[h]{both true or both false}){$g_l(\vec{v}) = g_u(\vec{v})$ \nllabel{l:BuildTruthTable:sat1}}{
        \Let $g^\star(\vec{v}) = g_l(\vec{v})$;
    }\Else(\tcp*[h]{$g_l(\vec{v}) = 0$ and $g_u(\vec{v}) = 1$, because $l \Rightarrow u$}){
        \Let $g^\star(\vec{v}) = \DontCare$; \nllabel{l:BuildTruthTable:sat2}
    }
}
\Return $g^\star$;
\end{algorithm2e}
At each repair site, \Cref{alg:derive-fixes} calls \MinFix\ to compute the minimal fix given the target bound. The pseudocode of \MinFix\ is shown in \Cref{alg:min-fix}. It takes a lower bound $l^\star$ and an upper bound $u^\star$ as inputs, builds a desired truth table, and utilizes the \emph{ESPRESSO}~\cite{brayton1982comparison} to find a formula $t$ in disjunctive normal form, having the minimum number of minterms among all formulas that fall within the bounds.

However, since the truth table and ESPRESSO only work with Boolean variables instead of atomic predicates, \MinFix\ leverages a subroutine \MapAtomPreds\ (Algorithm~\ref{alg:map-atom-preds}, line~\ref{l:MinFix:MapAtomPreds} in Algorithm~\ref{alg:min-fix}) to 
\begin{enumerate}
    \item scan through both $l^\star$ and $u^\star$ and extracts all semantically unique atomic predicates.
    \item determines a mapping that maps each semantically unique atomic predicate in both $l^\star$, $u^\star$ to a set of unique Boolean variables that represent the atomic predicates (e.g., $a = b$ is semantically equivalent to $a+1 = b+1$, thus they will be mapped to the same Boolean variable).
    \item construct a mapping $\phi$ which maps any subexpression in a predicate (formula) to a Boolean function so that $\phi(l^\star)$ and $\phi(u^\star)$ return the Boolean functions that represent the truth table of $l^\star$ and $u^\star$ respectively.
\end{enumerate}

Given the Boolean functions for $l^\star$ and $u^\star$ (line~\ref{l:MinFix:bounds}), \MinFix\ then calls a subroutine \BuildTruthTable\ to construct the Boolean function representing the truth table of the minimal formula $t \in [l^\star, u^\star]$ (line~\ref{l:MinFix:target} in \MinFix) by going through all possible truth value assignments for $l^\star$ and $u^\star$ (line~\ref{l:BuildTruthTable:iteration} in \BuildTruthTable) with the following criteria: 
\begin{itemize}
    \item If the conjunction of all atomic predicates in a row is not satisfiable (e.g. $a=b$ and $a>b$ cannot both be true simultaneously), we mark the truth value for $t$ by ``$\DontCare$'' (don't-care) since such situation can never happen (line~\ref{l:BuildTruthTable:unsat1}-\ref{l:BuildTruthTable:unsat2} in \BuildTruthTable). 
    \item If the conjunction of all atomic predicates in a row is satisfiable (line~\ref{l:BuildTruthTable:sat1}-\ref{l:BuildTruthTable:sat2} in \BuildTruthTable):
    \begin{enumerate}
        \item if $l^\star$ and $u^\star$ are evaluated to the same truth value (i.e. both true or both false), then the same truth value will also be assigned to $t$.
         \item if $l^\star$ and $u^\star$ are evaluated to different truth values (i.e. false for $l^\star$ and true for $u^\star$), then a don't-care is assigned to $t$. The purpose of such a don't-care assignment is to allow more flexibility for Quine-McCluskey's method to minimize $t$ as much as possible. Note that the situation where $l^\star$ is evaluated to true and $u^\star$ is evaluated to false can never happen due to $l^\star \Rightarrow u^\star$.
    \end{enumerate}
\end{itemize}

After obtaining the Boolean function (i.e. truth table) for $t$, \MinFix\ feeds it to a subroutine \MinBoolExp\ which minimizes a given Boolean function $f$ with possible ``don't-care outputs'' (denoted \DontCare). Boolean minimization in general is NP-complete, but good heuristics exist that often find near-optimal solutions for even a large number of variables. Our implementation uses \emph{ESPRESSO}~\cite{brayton1982comparison}, but alternatives such as the Quine-McCluskey algorithm~\cite{mccluskey1956minimization} can also be used. We demonstrate how \MinFix\ works using the following example.

\begin{Example}\label{ex:minimize-bounds}
\begin{figure}[h]\scriptsize\setlength{\tabcolsep}{3pt}
    \begin{minipage}[b]{0.6\linewidth}\centering
    {\scriptsize
      \begin{tabular}[b]{|c|c|c|c|c|c|c|}\hline
        {\tt $a \geq b$} & {\tt $f=e$} & {\tt $a=b$} & {\tt $a>b$} & {\tt $l^\star$} & {\tt $u^\star$} & {\tt $t$} \\ \hline
        0 & 0 & 0 & 0 & 0 & 0 & 0 \\
        0 & 0 & 0 & 1 & $\DontCare$ & $\DontCare$ & $\DontCare$ \\
        0 & 0 & 1 & 0 & $\DontCare$ & $\DontCare$ & $\DontCare$ \\
        0 & 0 & 1 & 1 & $\DontCare$ & $\DontCare$ & $\DontCare$ \\
        0 & 1 & 0 & 0 & 0 & 1 & $\DontCare$ \\
        0 & 1 & 0 & 1 & 0 & 1 & $\DontCare$ \\
        0 & 1 & 1 & 0 & $\DontCare$ & $\DontCare$ & $\DontCare$ \\
        0 & 1 & 1 & 1 & $\DontCare$ & $\DontCare$ & $\DontCare$ \\
        1 & 0 & 0 & 0 & $\DontCare$ & $\DontCare$ & $\DontCare$ \\
        1 & 0 & 0 & 1 & 0 & 1 & $\DontCare$ \\
        1 & 0 & 1 & 0 & 1 & 1 & 1 \\
        1 & 0 & 1 & 1 & $\DontCare$ & $\DontCare$ & $\DontCare$ \\
        1 & 1 & 0 & 0 & $\DontCare$ & $\DontCare$ & $\DontCare$ \\
        1 & 1 & 0 & 1 & 1 & 1 & 1 \\
        1 & 1 & 1 & 0 & 1 & 1 & 1 \\
        1 & 1 & 1 & 1 & $\DontCare$ & $\DontCare$ & $\DontCare$ \\\hline
      \end{tabular}
      }
    \end{minipage}
\caption{\label{fig:minimize-bounds}\small
  Compact Truth tables for $l^\star$, $u^\star$ and $t$ in Example~\ref{ex:minimize-bounds}. }
\end{figure}
Consider the following bounds for a minimal formula $t$:
$l^\star \equiv (a \geq b \land f=e) \lor a=b$, and 
$u^\star \equiv a=b \lor e=f \lor a > b$.
Thus, we can construct a truth table for each formula shown in Figure~\ref{fig:minimize-bounds}. Based on the truth tables, it is clear that contradiction occurs when

\begin{itemize}
    \item $a=b$ and $a>b$ are both true.
    \item Either $a=b$ or $a>b$ is true but $a \geq b$ is false.
    \item $a \geq b$ is true but both $a=b$ and $a>b$ are false.
\end{itemize}

Running the Quine-McCluskey method on the final truth table for $t$ yields $t \equiv a \geq b$, and $t \in [l^\star, u^\star]$.
\end{Example}

Given \MinFix, the fixes computed for the repair sites in \Cref{ex:derive-fixes} are the following:
\begin{itemize}
    \item $x_4: a=b \lor (a=c \land d>10) \lor (a=c \land d<7)$
    \item $x_{10}, x_{12}: (N_{10}, N_{12}): (a=b \land d \neq e) \lor (a=b \land d>f) \lor (a=c \land d>10) \lor (a=c \land e<5)$
\end{itemize}

While the sizes of these fixes are still quite large, it is verifiable that the resulting formula is equivalent to $P^\star$ in \Cref{ex:syntax-tree} after inserting the fixes into $P$.

\subsection{Optimization for Algorithm~\ref{alg:derive-fixes}}
The suboptimality of multiple fixes comes from the independent derivation of the target formula bounds (Algorithm \ref{alg:derive-fixes}). Using the same notations as in Algorithm~\ref{alg:derive-fixes}, consider the target bounds for the children of an $\lor$ node with only two children and their Venn Diagram in Figure~\ref{fig:or-vis}:

{\small
\begin{itemize}[leftmargin=2em]
    \item[] $[l^\star_c, u^\star_c] = [l_c \lor (l^\star \land \lnot l_{c'}), u^\star]$, $[l^\star_{c'}, u^\star_{c'}] = [l_{c'} \lor (l^\star \land \lnot l_c), u^\star]$
\end{itemize}
}

Here we use the Venn diagram to illustrate the source of suboptimality. The regions representing $l^\star_c$ and $l^\star_{c'}$ overlap (same for $u^\star_c, u^\star_{c'}$). Though their union matches exactly the region of $l^\star$ ($u^\star$ respectively), such overlap indicates an overlap in their semantics, meaning semantic redundancy exists in the target bounds of \Children($x$). This implies that any combination of $p_c \in [l^\star_c, u^\star_c]$ and $p_{c'} \in [l^\star_{c'}, u^\star_{c'}]$ have semantic overlaps, causing suboptimality in the subsequent derivation of fixes. To illustrate this, reconsider Example~\ref{ex:derive-fixes}. The predicates $a=c$, $d > 10$, and $d < 7$ appear in the fix for $x_4$ unnecessarily.

To reduce such suboptimality, we propose to use a new procedure {\bf instead of Algorithm \ref{alg:derive-fixes}} to consider all repair sites simultaneously by leveraging the fact that Algorithm \ref{alg:derive-fixes} returns optimal fix for a single repair site (\Cref{lemma:where-optimality}). The overall routine is shown in \Cref{alg:min-fix-mult}. 

The general intuition of \Cref{alg:min-fix-mult} is to start with $l^\star$ and $u^\star$ being the reference formula and updates them in the same manner as Algorithm~\ref{alg:derive-fixes} until the lowest common ancestor (LCA) of all repair sites. Treating the LCA as a single repair site, the corresponding optimal fix lies within its $[l^\star, u^\star]$, and \Cref{alg:min-fix-mult} aims to make up such optimal fix by collectively deriving fixes for all actual repair sites. We next describe the major steps of \Cref{alg:min-fix-mult}. For a concise and clear illustration, we denote the optimal single fix at the LCA with $T$.

\mypar{Build consistency table} We first replace each repair site with a unique Boolean variable, forming a new Boolean predicate $P'$ at the LCA, and the goal is to make $P' \equiv T$. Because two Boolean formulas are equivalent if they share the same truth table, we then want to observe under what value assignments $T$ and $P'$ are consistent (i.e., evaluated to the same truth value) before determining how to construct each individual fix. Therefore, \Cref{alg:min-fix-mult} first generates a ``{\bf consistency table}'' (line~\ref{l:consistency-table}) where the inputs are all unique Boolean variables and atomic predicates from $T, P'$, and the formulas being evaluated are $l^\star, u^\star, T, P'$, here $l^\star, u^\star$ are present for the derivation of $T$ which follows the same procedures as in \MinFix\ (\Cref{alg:min-fix}).

\begin{Example}\label{ex:qr-hint-optimized}
Consider the following formulae: $P^\star \equiv a=1 \lor (b=2 \land c=3)$, $P \equiv c=3 \lor (b=2 \land a=1)$. 
           
Let the repair sites in $P$ be $c=3$ and $a=1$ with Boolean variables $r_1, r_2$, and their lowest common ancestor be the top-level $\lor$. We can obtain $l^\star \Leftrightarrow u^\star \Leftrightarrow P^\star$. Therefore, the consistency table can be constructed as shown in Figure~\ref{fig:consistency-table}. 

\begin{figure}[hbp]\scriptsize\setlength{\tabcolsep}{3pt}
    \begin{minipage}[b]{0.6\linewidth}\centering
    {\scriptsize
      \begin{tabular}[b]{|c|c|c|c|c|c|c|c|c|}\hline
        {\tt $r1$} & {\tt $r2$} & {\tt $a=1$} & {\tt $b=2$} & {\tt $c=3$} & {\tt $l^\star$} & {\tt $u^\star$} & {\tt $T$} & {\tt $P'$} \\ \hline
        0 & 0 & 0 & 0 & 0 & 0 & 0 & 0 & 0 \\
        0 & 0 & 0 & 0 & 1 & 0 & 0 & 0 & 0 \\
        0 & 0 & 0 & 1 & 0 & 0 & 0 & 0 & 0 \\
        0 & 0 & 0 & 1 & 1 & 1 & 1 & 1 & 0 \\
        0 & 0 & 1 & 0 & 0 & 1 & 1 & 1 & 0 \\
        0 & 0 & 1 & 0 & 1 & 1 & 1 & 1 & 0 \\
        0 & 0 & 1 & 1 & 0 & 1 & 1 & 1 & 0 \\
        \rowcolor{red!50}
        0 & 0 & 1 & 1 & 1 & 1 & 1 & 1 & 0 \\
        0 & 1 & 0 & 0 & 0 & 0 & 0 & 0 & 0 \\
        0 & 1 & 0 & 0 & 1 & 0 & 0 & 0 & 0 \\
        0 & 1 & 0 & 1 & 0 & 0 & 0 & 0 & 1 \\
        0 & 1 & 0 & 1 & 1 & 1 & 1 & 1 & 1 \\
        0 & 1 & 1 & 0 & 0 & 1 & 1 & 1 & 0 \\
        0 & 1 & 1 & 0 & 1 & 1 & 1 & 1 & 0 \\
        0 & 1 & 1 & 1 & 0 & 1 & 1 & 1 & 1 \\
        \rowcolor{blue!50}
        0 & 1 & 1 & 1 & 1 & 1 & 1 & 1 & 1 \\
        
        1 & 0 & 0 & 0 & 0 & 0 & 0 & 0 & 1 \\
        1 & 0 & 0 & 0 & 1 & 0 & 0 & 0 & 1 \\
        1 & 0 & 0 & 1 & 0 & 0 & 0 & 0 & 1 \\
        1 & 0 & 0 & 1 & 1 & 1 & 1 & 1 & 1 \\
        1 & 0 & 1 & 0 & 0 & 1 & 1 & 1 & 1 \\
        1 & 0 & 1 & 0 & 1 & 1 & 1 & 1 & 1 \\
        1 & 0 & 1 & 1 & 0 & 1 & 1 & 1 & 1 \\
        \rowcolor{blue!50}
        1 & 0 & 1 & 1 & 1 & 1 & 1 & 1 & 1 \\
        1 & 1 & 0 & 0 & 0 & 0 & 0 & 0 & 1 \\
        1 & 1 & 0 & 0 & 1 & 0 & 0 & 0 & 1 \\
        1 & 1 & 0 & 1 & 0 & 0 & 0 & 0 & 1 \\
        1 & 1 & 0 & 1 & 1 & 1 & 1 & 1 & 1 \\
        1 & 1 & 1 & 0 & 0 & 1 & 1 & 1 & 1 \\
        1 & 1 & 1 & 0 & 1 & 1 & 1 & 1 & 1 \\
        1 & 1 & 1 & 1 & 0 & 1 & 1 & 1 & 1 \\
        \rowcolor{blue!50}
        1 & 1 & 1 & 1 & 1 & 1 & 1 & 1 & 1 \\\hline
      \end{tabular}
      }
    \end{minipage}
  \caption{\label{fig:consistency-table}\small
  The consistency table for Example~\ref{ex:qr-hint-optimized} }
%   \vspace{-2mm}
\end{figure}
\end{Example}
The consistency table gives a direct view of the occasions where $T$ and $P'$ are consistent (e.g. blue highlights, evaluated to the same truth value) and inconsistent (e.g. red highlights, evaluated to different truth values). This helps us determine how to construct each fix in later steps as we want to avoid any inconsistency between $T$ and $P'$ to achieve $T \Leftrightarrow P'$.

% ========================== MinFixMult ============================
\begin{algorithm2e}[t] \small
\caption{$\MinFixMult(x, \set{S}, l^\star, u^\star)$}
\label{alg:min-fix-mult}
\Input{a formula $x$, a set $\set{S}$ of disjoint subtrees (repair sites) of $x$,
    and a target bound $[l^\star, u^\star]$ for $x$ to achieve by fixes}
\Output{a repair $(\set{S}, F)$, where $F$ maps each site in $\set{S}$ to a formula}
\Let $\set{U}$ denote the set of atomic formulas in $x$ that belong to none of the subtrees in $\set{S}$\; \nllabel{l:MinFixMult:consistency1}
\Let $\vec{a}, \vec{\bool{a}}, \phi = \MapAtomPreds(\set{U} \cup \{l^\star, u^\star\})$\;
\Let $g_l = \phi(l^\star)$ and $g_u = \phi(u^\star)$\tcp*{both are Boolean functions of $\vec{\bool{a}}$}
\Let $g^\star = \BuildTruthTable(\vec{a}, \vec{\bool{a}}, g_l, g_u)$\;
\tcp{Compute feasibility of truth values for repair sites:} \nllabel{l:consistency-table}
\Let $\vec{s} = [s_1, s_2, \ldots]$ be the list of subexpressions in $\set{S}$,
    denoted by the list of Boolean variables $\vec{\bool{s}} = [\bool{s}_1, \bool{s}_2, \ldots]$\;
\Let $g_x$ be a Boolean function with variables $\vec{\bool{a}} \concat \vec{\bool{s}}$,
    obtained from $x$ by replacing each subexpression $s_i$ with variable $\bool{s}_i$,
    and replacing each atomic formula $t \in \set{U}$ by $\phi(t)$\; \nllabel{l:MinFixMult:consistency2}
\Let $\mathfrak{C} = \InitFeasibility(\vec{\bool{a}}, \vec{\bool{s}}, g_x, g^\star)$\;
\tcp{Fix one site at a time, and incrementally update feasibility:} \nllabel{l:constraint-table}
\Let $F = $ empty mapping, and $\set{I} = [1..\dim(\vec{\bool{s}})]$\;
\While{$\set{I} \neq \emptyset$}{
    \Let $d, g^\star_d = \PickSite(\vec{\bool{a}}, \vec{\bool{s}}, \mathfrak{C}, \set{I})$\; \nllabel{l:greedy-start}
    \Let $g_d = \MinBoolExp(g^\star_d)$\;
    \Let $F(s_d) = $ formula obtained from $g_d$ by replacing each variable $\bool{a}_i$ with atomic formula $a_i$\;
    \Let $\mathfrak{C} = \UpdateFeasibility(\vec{\bool{a}}, \vec{\bool{s}}, \mathfrak{C}, d, g_d)$\;
    \Let $\set{I} = \set{I} \setminus \{ d \}$; \nllabel{l:greedy-end}
}
\Return $(\set{S}, F)$;
\end{algorithm2e}

\mypar{Build constraint table} After acquiring the consistency table, the next question becomes how to use all available atomic predicates to construct fixes for each repair site while avoiding inconsistencies. For this purpose, we turn all atomic predicates as ``constraints'' for all Boolean variables that represent the repair sites, thus constructing a ``{\bf constraint table}'' (line~\ref{l:constraint-table}). A constraint table is a truth table where inputs are all the atomic predicates in the consistency table, and the output is the concatenation of truth values of Boolean variables. For each row (i.e. truth assignment of all atomic predicates), the constraint table aggregates and lists all truth assignments of all Boolean variables where $T$ and $P'$ are consistent, and these are the potential truth values to be assigned individually to each Boolean variable.

\begin{Example}\label{ex:qr-hint-optimized-1}
Consider the consistency table in \Cref{fig:consistency-table} from \Cref{ex:qr-hint-optimized}. The corresponding constraint table is shown in \Cref{fig:constraint-table}. 

\begin{figure}[h]\scriptsize\setlength{\tabcolsep}{3pt}
  
    \begin{minipage}[b]{0.6\linewidth}\centering
    {\scriptsize
      \begin{tabular}[b]{|c|c|c|c|}\hline
        {\tt $a=1$} & {\tt $b=2$} & {\tt $c=3$}  & {\tt $r_1,r_2$} \\ \hline
        
        0 & 0 & 0 & 00,01  \\
        0 & 0 & 1 & 00,01  \\
        0 & 1 & 0 & 00  \\
        0 & 1 & 1 & 01,10,11  \\
        1 & 0 & 0 & 10,11  \\
        1 & 0 & 1 & 10,11  \\
        1 & 1 & 0 & 01,10,11  \\
        \rowcolor{blue!50}
        1 & 1 & 1 & 01,10,11  \\\hline
      \end{tabular}
      }
    \end{minipage}
  
  \caption{\label{fig:constraint-table}\small
  Constraint table for Example~\ref{ex:qr-hint-optimized} }
\end{figure}

The blue-highlighted row reflects the highlighting in the consistency table, where only the truth assignments $(0,1)$, $(1,0)$ or $(1,1)$ for $(r_1, r_2)$ produce consistent evaluations for $T$ and $P$. This indicates that when constructing $r_1$ and $r_2$ using the available atomic predicates $a=1, b=2$ and $c=3$, we must guarantee the truth assignment $(1,1,1)$ for $(a=1, b=2, c=3)$ would cause $(r_1, r_2)$ to be evaluated to either $(0,1)$, $(1,0)$ or $(1,1)$ respectively. All other rows in the constraint table are constructed in the same manner and carry the same implication. 

\end{Example}

\clearpage

\mypar{Compute minimal fixes} The final step is to compute a fix for each repair site according to the constraint table. While a constraint table lists all possible simultaneous truth assignments for all repair sites (e.g. last column of \Cref{fig:constraint-table}), we cannot make independent choices for the truth value of each repair site as dependencies exist. For example, in the highlighted row in \Cref{fig:constraint-table}, if $r_1$ is assigned $0$, then $r_2$ can only be assigned $1$ for consistency. On the other hand, if $r_1$ is assigned $1$, then $r_2$ can be assigned either $0$ or $1$. At this point, it is unclear how truth values can be assigned to each repair site to obtain minimum fixes, we thus follow a greedy procedure (line~\ref{l:greedy-start}-\ref{l:greedy-end}):
\begin{enumerate}
    \item Randomly pick a repair site $r$.
    \item Iterate over each row in the last column of the constraint table and give $r$ the most flexible assignment (i.e. if both $0$ and $1$ are available, assign a don't-care).
    \item Use Quine-McCluskey's method to compute the minimal fix for the repair site.
    \item Update the assigned don't-cares to a determined value by evaluating the fix with the corresponding truth value assignment. 
    \item Update the available options in the last column of the constraint table based on the truth assignment of $r$.
\end{enumerate}

% ========================== Helper to MinFixMult ============================

\begin{algorithm2e}[htb] \small
\caption{Helper functions for \MinFixMult}
\label{alg:min-fix-mult-helpers}
\BlankLine\setcounter{AlgoLine}{0}\nonl
\Subroutine $\InitFeasibility(\vec{\bool{a}}, \vec{\bool{s}}, g_x, g^\star)$\DontPrintSemicolon\;\PrintSemicolon
\Output{$\mathfrak{C}: \smash{\{0,1\}^{\dim(\vec{\bool{a}})}} \to \{\DontCare\} \cup \mathbb{P}(\smash{\{0,1\}^{\dim(\vec{\bool{s}})}})$,
    a mapping such that for each assignment $\vec{v}$ of variables in $\vec{\bool{a}}$,
    $\mathfrak{C}(\vec{v})$ returns the set of feasible truth value settings for $\vec{\bool{s}}$
    (such that $g_x$ is consistent with $g^\star(\vec{v})$),
    or $\DontCare$ if $\vec{v}$ is impossible or irrelevant (i.e., $g^\star(\vec{v}) = \DontCare$)}
\Let $\mathfrak{C} = $ empty mapping\;
\ForEach{assignment $\vec{v} \in \smash{\{0,1\}^{\dim(\vec{\bool{a}})}}$ of $\vec{\bool{a}}$}{
    \If(\tcp*[h]{impossible or irrelevant}){$g^\star(\vec{v}) = \DontCare$}{
        \Let $\mathfrak{C}(\vec{v}) = \DontCare$;
        \Continue;
    }
    \Let $\mathfrak{C}(\vec{v}) = \emptyset$\;
    \ForEach{assignment $\vec{u} \in \smash{\{0,1\}^{\dim(\vec{\bool{s}})}}$ of $\vec{\bool{s}}$}{
        \lIf{$g_x(\vec{v} \concat \vec{u}) = g^\star(\vec{v})$}{%
            \Let $\mathfrak{C}(\vec{v}) = \mathfrak{C}(\vec{v}) \cup \{ \vec{u} \}$%
        }
    }
}
\Return $\mathfrak{C}$;

\BlankLine\setcounter{AlgoLine}{0}\nonl
\Subroutine $\UpdateFeasibility(\vec{\bool{a}}, \vec{\bool{s}}, \mathfrak{C}, d, g_d)$\DontPrintSemicolon\;\PrintSemicolon
\Output{$\mathfrak{C}'$, a more constrained version of $\mathfrak{C}$ that reflects the effect of
    wiring variable $\bool{s}_d$ to the Boolean function $g_d$}
\Let $\mathfrak{C}' = $ empty mapping\;
\ForEach{assignment $\vec{v} \in \smash{\{0,1\}^{\dim(\vec{\bool{a}})}}$ of $\vec{\bool{a}}$}{
    \uIf(\tcp*[h]{impossible or irrelevant}){$\mathfrak{C}(\vec{v}) = \DontCare$}{
        \Let $\mathfrak{C}'(\vec{v}) = \DontCare$;
    }\Else(\tcp*[h]{only include settings consistent with $g_d$}){
        \Let $\mathfrak{C}'(\vec{v}) = \{ \vec{u} \in \mathfrak{C}(\vec{v}) \mid u_d = g_d(\vec{v}) \}$;
    }
}
\Return $\mathfrak{C}'$;

\BlankLine\setcounter{AlgoLine}{0}\nonl
\Subroutine $\PickSite(\vec{\bool{a}}, \vec{\bool{s}}, \mathfrak{C}, \set{I})$\DontPrintSemicolon\;\PrintSemicolon
\Output{index $d \in \set{I}$ as the next site to fix,
    and a partial Boolean function $g^\star_d$ of $\vec{\bool{a}}$,
    represented as a mapping $\smash{\{0,1\}^{\dim(\vec{\bool{a}})}} \to \{\DontCare, 0, 1\}$,
    specified in accordance with the feasibility map $\mathfrak{C}$}
\lForEach(\tcp*[f]{init accumulators for priority calculation}){$i \in \set{I}$}{%
    \Let $c_i = 0$%
}
\ForEach{assignment $\vec{v} \in \smash{\{0,1\}^{\dim(\vec{\bool{a}})}}$ of $\vec{\bool{a}}$}{
    \lIf(\tcp*[f]{impossible or irrelevant}){$\mathfrak{C}(\vec{v}) = \DontCare$}{%
        \Continue%
    }
    \ForEach{$i \in \set{I}$}{
        \Let $r = |\{ \vec{u} \in \mathfrak{C}(\vec{v}) \ \mid u_i = 1 \}| \;/\; |\mathfrak{C}(\vec{v})|$\;
        \Let $c_i = c_i + |r-0.5|$\tcp*{prioritize uneven splits}
    }
}
\Let $d = \max\arg_{i \in I} c_i$\;
\Let $g^\star_d = $ empty mapping\;
\ForEach{assignment $\vec{v} \in \smash{\{0,1\}^{\dim(\vec{\bool{a}})}}$ of $\vec{\bool{a}}$}{
    \If(\tcp*[h]{impossible or irrelevant}){$\mathfrak{C}(\vec{v}) = \DontCare$}{%
        \Let $g^\star_d(\vec{v}) = \DontCare$; \Continue;
    }
    \Let $B = \{ u_d \mid \vec{u} \in \mathfrak{C}(\vec{v}) \}$\tcp*{$0$/$1$ settings for $\bool{s}_d$, deduplicated}
    \uIf(\tcp*[h]{forced to choose one truth value}){$|B| = 1$}{
        \Let $g^\star_d(\vec{v}) = $ the only element of $B$;
    }\Else(\tcp*[h]{$|B| = 2$, so choose either $0$ or $1$ (note that $B \neq \emptyset$)}){
        \Let $g^\star_d(\vec{v}) = \DontCare$;
    }
}
\Return $(d, g^\star_d)$;

\end{algorithm2e}

\begin{Example}\label{ex:qr-hint-optimized-2}
Continue from \Cref{ex:qr-hint-optimized-1}, the derivation process for $r_1$ and $r_2$ are shown in an extended constraint table in \Cref{fig:constraint-table-full}. Given the previous procedure, we first give $r_1$ maximum flexibility for constructing its formula, which yields $a=1$. We then update the don't-cares accordingly before computing $r_2$. As for $r_2$, we follow the same procedure and derive the truth assignment based on the truth values of $r_1$. Finally, the procedure yields $c=3$. These are indeed the optimal fixes. 
\begin{figure}[ht]\scriptsize\setlength{\tabcolsep}{3pt}
  
    \begin{minipage}[b]{0.6\linewidth}\centering
    {\scriptsize
      \begin{tabular}[b]{|c|c|c|c|c|c|}\hline
        {\tt $a=1$} & {\tt $b=2$} & {\tt $c=3$}  & {\tt $r_1,r_2$} & {\tt $r_1$} & {\tt $r_2$} \\ \hline
        
        0 & 0 & 0 & 00,01 & 0 & $\DontCare \rightarrow 0$  \\
        0 & 0 & 1 & 00,01 & 0 & $\DontCare \rightarrow 1$ \\
        0 & 1 & 0 & 00 & 0 & 0 \\
        0 & 1 & 1 & 01,10,11 & $\DontCare \rightarrow 0$ & 1 \\
        1 & 0 & 0 & 10,11 & 1 & $\DontCare \rightarrow 0$ \\
        1 & 0 & 1 & 10,11 & 1 & $\DontCare \rightarrow 1$ \\
        1 & 1 & 0 & 01,10,11 & $\DontCare \rightarrow 1$ & 0 \\
        \rowcolor{blue!50}
        1 & 1 & 1 & 01,10,11 & $\DontCare \rightarrow 1$ & $\DontCare \rightarrow 1$ \\\hline
      \end{tabular}
      }
    \end{minipage}
  
  \caption{\label{fig:constraint-table-full}\small
  Extended constraint table for Example~\ref{ex:qr-hint-optimized-2} }
\end{figure}

In addition, running \oursys-optimized over \Cref{ex:syntax-tree} yields optimal fixes $a=b$ and $d>10 \land e<5$ for repair sites $x_4, (x_{10}, x_{12})$.
\end{Example}

% =================================== WHERE correctness ===================================
\subsection{Proof of Lemma~\ref{lemma:where-correctness}}
\begin{proof}[Proof of \Cref{lemma:where-correctness}]
We prove the correctness of \WHERE-stage by proving the correctness of repair returned by \Cref{alg:find-where-repair}, i.e. applying the repair sites and fixes returned by \Cref{alg:find-where-repair} yields a new formula $P'$ such that $P^\star \Leftrightarrow P'$. The proof contains two steps:

\mypar{Step 1: \Cref{alg:create-bounds} returns the correct repair sites} Assume that there exists a set of fixes $\set{F}$ for a set of repair sites $\set{S}$ in $P$. By \Cref{lemma:correctness-create-bounds}, we can create a lower bound $P_{\bot}$ and an upper bound $P_{\top}$ such that $P' \in [P_{\bot}, P_{\top}]$, where $P'$ is the formula obtained by applying fixes to repair sites. Since $P' \Leftrightarrow P^\star$, $P^\star \in [P_{\bot}, P_{\top}]$. Thus, if a set of fixes exists for a set of repair sites, $P^\star$ must fall within the corresponding repair bounds at the root of $P$. This validates the procedure for determining repair sites.

\mypar{Step 2: \Cref{alg:derive-fixes} returns the correct fixes} Given a lower bound $P_{\bot}$ and an upper bound $P_{\top}$ for $P$  with respect to $\set{S}$ such that $P^\star \in [P_{\bot}, P_{\top}]$, by \Cref{lemma:correctness-derive-fixes}, we can derive a set of fixes $\set{F}$ for $\set{S}$ through \Cref{alg:derive-fixes}. 
\end{proof}

% =================================== WHERE optimality ===================================
\subsection{Proof for Lemma~\ref{lemma:where-optimality}}
\begin{proof}
We use the induction over the structure of $P$ to prove the minimality of the fix $f$ for a single repair site $s$ in $P$. 

\mypar{Base case} The base case is simply when $s$ is $P$ (i.e. the entire $P$ is the repair site). In such case, $f$ is a minimal DNF of $P^\star$ returned by the Quine-McCluskey's method. Thus, removing any clause from $f$ causes $f \nLeftrightarrow P^\star$. 

\mypar{Induction Step} When $s$ is not $P$, there are three possible cases.

\mypar{Case 1} $P$ is in the form of $c_1 \land ... \land c_n$, where $c_i$ is the repair site. The target bounds for $c_i$ are derived to be $[P^\star, \top \land (P^\star \lor \lnot \bigwedge\limits_{j=1, j\neq i}^n c_j)]$ where the repair bounds of $\bigwedge\limits_{j=1, j\neq i}^n c_j$ is simply $[\bigwedge\limits_{j=1, j\neq i}^n c_j, \bigwedge\limits_{j=1, j\neq i}^n c_j]$ as it does not contain any repair site. Running Quine-McCluskey's method over the target bounds of $c_i$ yields a formula $f$, which is guaranteed to be in minimal DNF. Since we know $P^\star \Leftrightarrow \bigwedge\limits_{j=1, j\neq i}^n c_j \land f$ and $f$ is in minimal DNF, removing any of the clauses in $f$ would cause $\bigwedge\limits_{j=1, j\neq i}^n c_j \land f \Rightarrow P^\star$ but not vice versa. 

\mypar{Case 2} $P$ is in the form of $c_1 \lor ... \lor c_n$, where $c_i$ is the repair site. The target bounds for $c_i$ are derived to be $[\bot \lor (P^\star \land \lnot \bigvee\limits_{j=1, j\neq i}^n c_j), P^\star]$ where the repair bounds of $\bigvee\limits_{j=1, j\neq i}^n c_j$ are simply $[\bigvee\limits_{j=1, j\neq i}^n c_j, \bigvee\limits_{j=1, j\neq i}^n c_j]$. Running Quine-McCluskey's method over the target bounds of $c_i$ yields a formula $f$, which is guaranteed to be in minimal DNF. Since we know $P^\star \Leftrightarrow \bigvee\limits_{j=1, j\neq i}^n c_j \lor f$ and $f$ is in minimal DNF, removing any of the clauses in $f$ would cause $\bigvee\limits_{j=1, j\neq i}^n c_j \lor f \Rightarrow P^\star$ but not vice versa. 

\mypar{Case 3} $P$ is in the form of $\lnot c$. Here $P^\star \Leftrightarrow \lnot f$. Since $f$ is in minimal DNF, removing a clause results in $P^\star \nLeftrightarrow \lnot f$.
\end{proof}

% ================================== CreateBounds correctness ==============================
\subsection{Proof of Lemma~\ref{lemma:correctness-create-bounds}}
\begin{proof}[Proof of \Cref{lemma:correctness-create-bounds}]
We use induction over the structure of $P$ to prove that for any subtree $x$ in $P$ (for which \CreateBounds\ is invoked), $\CreateBounds(x, \set{S}[x])$ returns a correct bound for $x$:
i.e., applying any repair to $\set{S}[x]$ in $x$ will result in a formula $x'$ bounded by $\CreateBounds(x, \set{S}[x])$.

\mypar{Base case}
Suppose $x$ is an atomic formula, \CreateBounds\ returns $[x,x]$ which bounds $x$. When $x$ is a repair site, \CreateBounds\ returns $[\False, \True]$, which certainly bounds $x$ or any Boolean expression with which we can replace $x$.

\mypar{Induction step}
Suppose $x$ is not atomic and is not itself a repair site.
Let $\Theta = \op(x)$ denote the logical operator at the root of $x$.
Every repair on $x$ (with the given $\set{S}[x]$) is obtained by (potentially) repairing each child of $x$, but without changing $\Theta$.
In other words, every repair $x$ results in $x' = \Theta_{c \in \Children(x)} c'$, where $c'$ is the result of some repair of $c$ at sites $\set{S}[c]$.
By the inductive hypothesis, $\forall c \in \Children(x): c' \in [l_c, u_c] = \CreateBounds(c, \set{S}[c])$.

There are two cases depending on $\Theta$.
In the case that $\Theta$ is $\land$ or $\lor$, since $\forall c \in \Children(x): l_c \Rightarrow c' \Rightarrow u_c$,
we have $\Theta_{c \in \Children(x)} l_c \Rightarrow \Theta_{c \in \Children(x)} c' \Rightarrow \Theta_{c \in \Children(x)} u_c$,
which means $x'$ is within the bound returned by Line~\ref{l:CreateBounds:and-or} of \CreateBounds.
In the case that $\Theta$ is $\lnot$, since $l_c \Rightarrow c' \Rightarrow u_c$,
we have $\lnot u_c \Rightarrow \lnot c' \Rightarrow \lnot l_c$,
which means $x'$ is within the bound returned by Line~\ref{l:CreateBounds:not} of \CreateBounds.
\end{proof}

% ================================= DeriveFixes correctness ===================================
\subsection{Proof of Lemma~\ref{lemma:correctness-derive-fixes}}
\begin{proof}[Proof of \Cref{lemma:correctness-derive-fixes}]
We use induction over the structure of $P$ to prove that for any subtree $x$ in $P$ for which $\DeriveFixes(x, \set{S}, l^\star, u^\star)$ is invoked:
\begin{itemize}
    \item (H1) $l^\star \Rightarrow u^\star$, and the bound $[l^\star, u^\star]$ implies (is equivalent or tighter than) the bound returned by $\CreateBounds(x, \set{S})$.
    \item (H2) The repair returned by $\DeriveFixes(x, \set{S}, l^\star, u^\star)$ yields some $x' \in [l^\star, u^\star]$.
\end{itemize}
Note that applying (H2) to the root of $P$ proves \Cref{lemma:correctness-derive-fixes}.

\mypar{Proving (H1) top-down}
The base case is when $x$ is the root of $P$; we only invoke $\DeriveFixes$ if $P^\star \in \CreateBounds(x, \set{S})$, so obviously $[l^\star, u^\star]$ implies $\CreateBounds(x, \set{S})$.
For the induction step, assuming that (H1) holds for $x$, we now show that (H1) for each child of $x$ for which \DeriveFixes\ is invoked.
There are three cases.

\mypar{Case 1} $x$ has form $\lnot c$.
Let $[l, u] = \CreateBounds(x, \set{S})$.
By the inductive hypothesis $l \Rightarrow l^\star \Rightarrow u^\star \Rightarrow u$.
Therefore $\lnot u \Rightarrow \lnot u^\star \Rightarrow \lnot l^\star \Rightarrow \lnot l$.
In other words, we call \DeriveFixes\ on $c$ with a bound (Line~\ref{l:DeriveFixes:not:start}) that implies $[\lnot u, \lnot l]$,
which is $\CreateBounds(c, \set{S}[c])$ by Line~\ref{l:CreateBounds:not} of \CreateBounds.

\mypar{Case 2} $x$ has form $c_1 \land ... \land c_n$. In this case, \Cref{alg:derive-fixes} divides the formula as $p_c \land p_{c'}$ where $p_c = c_i$, $p_{c'} = \bigwedge\limits_{j=1, j\neq i}^n c_j$.
We shall show that for $l_i^\star$ and $u_i^\star$ defined under Line~\ref{l:DeriveFixes:and:start}, $[l_i^\star, u_i^\star]$ implies $[l_i, u_i] = \CreateBounds(c_i, \set{S}[c_i])$ for all $i$; the cases for all other children are symmetric.
Clearly, $l_i \Rightarrow l^\star = l_i^\star$
and $u_i^\star = u_i \land (u^\star \lor \lnot u'_i) \Rightarrow u_i$.
Furthermore, note that:
\begin{align*}
l_i & \Rightarrow u_i;\\
l_i & \Rightarrow u^\star \lor \lnot u'_i;\\
l^\star & \xRightarrow{\text{\tiny ind.\ hypo. and Line~\ref{l:CreateBounds:and-or} of \CreateBounds}} u_i \land u'_i \Rightarrow u_0;\\
l^\star & \xRightarrow{\text{\tiny ind.\ hypo.}} u^\star \Rightarrow u^\star \lor \lnot u'_i.
\end{align*}
Therefore, $l_i^\star = l^\star \Rightarrow u_i \land (u^\star \lor \lnot u'_i) = u_i^\star$.

\mypar{Case 3} $x$ has form $c_1 \lor ... \lor c_n$. In this case, \Cref{alg:derive-fixes} divides the formula as $p_c \lor p_{c'}$ where $p_c = c_i$, $p_{c'} = \bigvee\limits_{j=1, j\neq i}^n c_j$.
We shall show that for $l_i^\star$ and $u_i^\star$ defined in the branch starting on Line~\ref{l:DeriveFixes:or:start}, $[l_i^\star, u_i^\star]$ implies $[l_i, u_i] = \CreateBounds(c_i, \set{S}[c_i])$ for all $i$; the cases for all other children are symmetric.
Clearly, $l_i \Rightarrow l_i \lor (l^\star \land \lnot l'_i) = l_i^\star$
and $u_i^\star = u^\star \Rightarrow u_i$.
Furthermore, note that:
\begin{align*}
l_i & \Rightarrow u_i;\\
l_i & \Rightarrow l_i \lor l'_i \xRightarrow{\text{\tiny ind.\ hypo. and Line~\ref{l:CreateBounds:and-or} of \CreateBounds}} u^\star;\\
l^\star \land \lnot l'_i & \Rightarrow u_i;\\
l^\star \land \lnot l'_i & \Rightarrow l^\star \xRightarrow{\text{\tiny ind.\ hypo.}} u^\star.
\end{align*}
Therefore, $l_i^\star = l_i \lor (l^\star \land \lnot l'_i) \Rightarrow u^\star = u_i^\star$.

\mypar{Proving (H2) bottom-up}
% There are two base cases.
% First, when $x \in \set{S}$, assuming the correctness of \MinFix, \Cref{lemma:correctness-create-bounds} and (H1) ensure that $\MinFix(x, l^\star, u^\star)$ yields a repaired formula in $[l^\star, u^\star]$.
% Second, when $x$ is atomic and $x \not\in \set{S}$, $\CreateBounds(x, \set{S})$ simply returns $[x, x]$;
% by (H1), $[l^\star, u^\star]$ implies $[x, x]$, which means $l^\star = u^\star = x$, so an empty repair suffices.
For the base case, when $x \in \set{S}$, assuming the correctness of \MinFix, \Cref{lemma:correctness-create-bounds} and (H1) ensure that $\MinFix(x, l^\star, u^\star)$ yields a repaired formula in $[l^\star, u^\star]$.

For the inductive step, assuming that (H2) holds for each child of $x$, we now show that (H2) holds for $x$.
There are three cases.

\mypar{Case 1} $x$ has form $\lnot c$.
By the inductive hypothesis, $\DeriveFixes$ on $c$ returns a repair that results in some $c'$ such that $\lnot u^\star \Rightarrow c' \Rightarrow \lnot l^\star$.
Clearly, the same repair, which is returned by $\DeriveFixes(x, \set{S}, l^\star, u^\star)$, changes $x$ to $\lnot c'$, which satisfies $l^\star \Rightarrow \lnot c' \Rightarrow u^\star$.

\mypar{Case 2} $x$ has form $c_1 \land ... \land c_n$.
By the inductive hypothesis, for all $1 \leq i \leq n$, $\DeriveFixes$ on $c_i$ returns a repair that results in some $c'_i$ such that $l^\star \Rightarrow c'_i \Rightarrow u_i \land (u^\star \lor \lnot u'_i)$.
The repair returned by $\DeriveFixes$ on $x$ results in $\bigwedge\limits_{i=1}^n c'_i$.
Clearly, $c'_1 \land ... \land c'_n \Leftarrow l^\star \land l^\star \Leftrightarrow l^\star$.
Also,
\begin{align*}
\bigwedge\limits_{i=1}^n c'_i & \Rightarrow  \bigwedge\limits_{i=1}^n \left(u_i \land (u^\star \lor \lnot u'_i)\right) \\
& \Leftrightarrow \bigwedge\limits_{i=1}^n u_i \land \left(u^\star \lor \left( \bigwedge\limits_{i=1}^n \lnot u'_i \right)\right)   \\
& \Leftrightarrow  \left(\bigwedge\limits_{i=1}^n u_i \land u^\star \right) \lor \left(\bigwedge\limits_{i=1}^n u_i \land \bigwedge\limits_{i=1}^n \lnot u'_i \right) \\
& \Leftrightarrow u^\star \lor \left(\bigwedge\limits_{i=1}^n u_i \land \bigwedge\limits_{i=1}^n \lnot u'_i \right) \\
% & \Leftrightarrow u^\star \lor (u_0 \land u_1 \land \lnot u_1 \land \lnot u_0) \hspace{1em}\text{\myComment $\because u^\star \Rightarrow u_0 \land u_1$ by (H1)}\\
& \Leftrightarrow u^\star \lor \bot \\ % \hspace{1em}\text{\myComment $\because l_i \Rightarrow u_i$}\\
& \Leftrightarrow u^\star. % \hspace{1em}\text{\myComment $\because l_0 \land l_1 \Rightarrow u^\star$ by (H1)}
\end{align*}

    % \li let $l_0^\star = l_0 \lor (l^\star \land \lnot l_1 \land u_0)$ \hspace{1em}\myComment expand to $l^\star$, except what is
    % \zi \hspace{1em}\myComment guaranteed to be included by other child, and don't over-expand
    % \li let $u_0^\star = u_0 \land u^\star$ \hspace{1em}\myComment must be restricted to $u^\star$
    % \li let $l_1^\star = l_1 \lor (l^\star \land \lnot l_0 \land u_1)$ \hspace{1em}\myComment symmetric to $l_0^\star$
    % \li let $u_1^\star = u_1 \land u^\star$ \hspace{1em}\myComment symmetric to $u_0^\star$

\mypar{Case 3} $x$ has form $c_1 \lor ... \lor c_n$.
By the inductive hypothesis, for all $1 \leq i \leq n$, $\DeriveFixes$ on $c_i$ returns a repair that results in some $c'_i$ such that $l_i \lor (l^\star \land \lnot l'_i) \Rightarrow c'_i \Rightarrow u^\star$.
The repair returned by $\DeriveFixes$ on $x$ results in $\bigvee\limits_{i=1}^n c'_i$.
Clearly, $c'_1 \lor ... \lor c'_n \Rightarrow u^\star \lor u^\star \Leftrightarrow u^\star$.
Also,
\begin{align*}
\bigvee\limits_{i=1}^n c'_i & \Leftarrow  \bigvee\limits_{i=1}^n (l_i \lor (l^\star \land \lnot l'_i)) \\
& \Leftrightarrow \bigvee\limits_{i=1}^n l_i \lor \left(l^\star \land \bigvee\limits_{i=1}^n \lnot l'_i \right) \\
& \Leftrightarrow \left(\bigvee\limits_{i=1}^n l_i \lor l^\star \right) \land \left(\bigvee\limits_{i=1}^n l_i \lor \bigvee\limits_{i=1}^n \lnot l'_i  \right) \\
% & \Leftrightarrow l^\star \land (l_0 \lor l_1 \lor \lnot l_1 \lor \lnot l_0) \hspace{1em}\text{\myComment $\because l_0 \lor l_1 \Rightarrow l^\star$ by (H1)}\\
& \Leftrightarrow l^\star \land \top \\ % \hspace{1em}\text{\myComment $\because l_i \Rightarrow u_i$}\\
& \Leftrightarrow l^\star. % \hspace{1em}\text{\myComment $\because l^\star \Rightarrow u_0 \lor u_1$ by (H1)}
\end{align*}
\end{proof}

\section{\GROUPBY\ Stage Supplement}

\subsection{Necessity of Fixing \GROUPBY}
We show that it is necessary to fix \GROUPBY. 
\begin{lemma}\label{lemma:aggr-necessity}
Consider two single-block SQL queries $Q_1$ and $Q_2$, where $Q_1$ has no \sql{GROUP} \sql{BY} or aggregation,
while $Q_2$ has \sql{GROUP} \sql{BY} and/or aggregation but no \sql{HAVING}.
$Q_1$ and $Q_2$ cannot be equivalent under bag semantics, assuming that no database constraints are present and there exists some database instance for which either $Q_1$ or $Q_2$ returns a non-empty result.
\end{lemma}

\begin{proof}[Proof of \Cref{lemma:aggr-necessity}]
Suppose for some instance $D$, both $Q_1$ and $Q_2$ return the same non-empty results.
Pick any $T \in \Tables(Q_1)$.
Create a new instance $D'$ by duplicating the contents of $T$ in the same table (i.e., doubling the multiplicity of each tuple in $T$) while keeping all other tables unchanged.
Since $Q_1$ has no \sql{GROUP} \sql{BY} or aggregation, the multiplicity of each tuple in $Q_1(D')$ will be increased by a factor of $2^c$ compared with that in $Q_1(D)$, where $c>1$ is the count of $T$ in $\Tables(Q_1)$ (which is multiset).
Hence, the size of $Q_1(D')$ is strictly larger than $Q_1(D)$.
On the other hand, consider $Q_2$, which has \GROUPBY\ and/or aggregation.
Between $Q_2(D')$ and $Q_2(D)$, the grouping of intermediate join result tuples remains the same, except the number of duplicates within each group.
Hence, the size of $Q_2(D')$ remains the same as $Q_2(D)$, which is the total number of groups.
Therefore, $Q_1(D')$ and $Q_2(D')$ are different.
\end{proof}

\subsection{Proof for Lemma~\ref{lemma:group-fix}}
\begin{proof}[Proof of \Cref{lemma:group-fix}]
    \mypar{Correctness} We prove the correctness of \GROUPBY-stage by showing $\vec{o} \setminus \Delta^- \union \Delta^+$ is equivalent to $\vec{\ostar}$. Assuming $\vec{o} \setminus \Delta^- \union \Delta^+$ is not equivalent to $\vec{\ostar}$, then among all possible pairs of tuples, there must exist $t_1, t_2$ such that $\bigwedge_i (o_i[t_1]\!=\!o_i[t_2]) \nLeftrightarrow \bigwedge_i (\ostar_i[t_1]\!=\!\ostar_i[t_2])$. This implies that $\IsSat(P[t_1] \land P[t_2] \land G^\star \land o_i[t_1]\!\neq\!o_i[t_2])$ (line~\ref{l:fix-grouping:redundant}) and/or $\IsSat(P[t_1] \land P[t_2] \land G \land \ostar_i[t_1]\!\neq\!\ostar_i[t_2])$ (line~\ref{l:fix-grouping:missing}) must be satisfiable, which contradicts the fact that \IsSat\ returns no false positive.

    \mypar{Strong Minimality of $\Delta^-$} Assuming $\Delta^- \nsubseteq \Delta^-_\circ$ (i.e., there exists an $o_x \in \vec{o}$ such that $o_x \in \Delta^-$ but $o_x \notin \Delta^-_\circ$), and $\vec{o} \setminus \Delta^-_\circ \union \Delta^+$ is equivalent to $\vec{\ostar}$. This implies that $o_x[t_1] = o_x[t_2]$ holds for all possible $t_1, t_2$, which contradicts the fact that $p$ is evaluated to true (otherwise $o_x$ should not be added to $\Delta^-$ according to the algorithm). Since \IsSat\ does not return false positive, $o_x$ does not exist and thus $\Delta^- \subseteq \Delta^-_\circ$.

    \mypar{Weak Minimality of $\Delta^+$} Assuming $\Delta^+ = \{\ostar_x\}$ and $\vec{o} \setminus \Delta^-_\circ$ is equivalent to $\vec{\ostar}$ (denoted by $\vec{o} \setminus \Delta^-_\circ \equiv \vec{\ostar}$). Following \Cref{alg:fix-grouping}, $\Delta^- \subseteq \Delta^-_\circ$ due to strong minimality. This indicates that $\vec{o} \Rightarrow \vec{\ostar}$ (i.e. $P[t_1] \land P[t_2] \land G \Rightarrow P[t_1] \land P[t_2] \land G^\star$) but $\vec{o} \nLeftarrow \vec{\ostar}$ (i.e. $P[t_1] \land P[t_2] \land G \nLeftarrow P[t_1] \land P[t_2] \land G^\star$). However, $\Delta^+ = \{\ostar_x\}$ indicates that $\IsSat(P[t_1] \land P[t_2] \land G \land \ostar_x[t_1] \neq \ostar_x[t_2])$ return true, where $\ostar_x$ is a conjunct in $G^\star$. This implies $\vec{o} \setminus \Delta^- \nRightarrow \vec{\ostar}$, thus further implying $\IsSat$ returns a false positive because $\vec{o} \setminus \Delta^-_\circ \equiv \vec{\ostar}$ was assumed at the beginning. Since \IsSat\ never returns false positive, no such $\ostar_x$ can exist in $\Delta^+$, and thus there does not exist a corresponding $\Delta^-_\circ$ such that $\vec{o} \setminus \Delta^-_\circ \equiv \vec{\ostar}$.

\end{proof}

\section{\HAVING\ Stage Supplement}

We use the following base context as default for testing satisfiability for \HAVING.
For brevity, the following assumes all values are integers;
if there are columns and literals of different domains, additional constraints will be added analogously.
We note that these constraints are \emph{not} intended to define the aggregation functions precisely;
rather, they encode only a subset of their properties that allow SMT solvers to deduce useful equivalences reasonably efficiently.

{\footnotesize
\newcommand{\XX}{\ensuremath{\mathbf{X}}}
\newcommand{\YY}{\ensuremath{\mathbf{Y}}}
\newcommand{\ZZ}{\ensuremath{\mathbf{Z}}}
\begin{align*}
\Context&: \left\{\;\begin{aligned}
\mathbf{X}, \mathbf{Y}, \mathbf{Z}, \mathbf{Ones} \text{ have type }\narrow{Array}(\mathbf{Z})\\
i, c \text{ have type }\narrow{Integer} \\
\sql{SUM}, \sql{AVG}, \sql{COUNT}, \sql{MAX}, \sql{MIN} \text{ have type }\narrow{Array}(\mathbb{Z}) \to \mathbb{Z}\\
\forall \XX, \YY, \ZZ: (\forall i: \XX[i] + \YY[i] = \ZZ[i]) \Rightarrow \sql{SUM}(\XX) + \sql{SUM}(\YY) = \sql{SUM}(\ZZ) \\ 
\forall \XX, \YY, \ZZ: (\forall i: \XX[i] - \YY[i] = \ZZ[i]) \Rightarrow \sql{SUM}(\XX) - \sql{SUM}(\YY) = \sql{SUM}(\ZZ) \\ 
\forall \XX, \YY, c: (\forall i: \XX[i] \times c = \YY[i]) \Rightarrow \sql{SUM}(\XX) \times c = \sql{SUM}(\YY)\\
\forall \XX, \YY, c: (\forall i: \XX[i] \div c = \YY[i]) \Rightarrow \sql{SUM}(\XX) \div c = \sql{SUM}(\YY)\\
    \text{// repeat the above 4 lines, replacing \sql{SUM} by \sql{AVG}}\\
\forall i: \mathbf{Ones}[i] = 1\\
\forall \XX: \sql{COUNT}(\XX) = \sql{COUNT}(\mathbf{Ones})\\
\forall \XX: \sql{SUM}(\XX) = \sql{AVG}(\XX) \times \sql{COUNT}(\mathbf{Ones})\\
    \text{// above assumes no \sql{NULL} values}\\
\forall \XX, i: \sql{MAX}(\XX) \ge \XX[i]\\
\forall \XX, i: \sql{MIN}(\XX) \le \XX[i]\\
\end{aligned}\;\right\}
\end{align*}
}
\section{\SELECT\ Stage Supplement}

\begin{algorithm2e}[t] \small
\caption{$\FixSelect(P, \vec{o}, \vec{\ostar})$}
\label{alg:fix-select}
\Input{a formula $P$ and two expression lists $\vec{o}$ and $\vec{\ostar}$}
\Output{a pair $(\Delta^-, \Delta^+)$,
where $\Delta^- \subseteq [1..\dim(\vec{o})]$ is a subset of indices of $\vec{o}$
and $\Delta^+ \subseteq [1..\dim(\vec{\ostar})]$ is a subset of indices of $\vec{\ostar}$}
% \Let $\vec{v}$ denote the set of variables in $P$, $\vec{o}$, and $\vec{\ostar}$\;
% \Let $t_1$ be an assignment of $\vec{v}$ to new sets of variables $\vec{v}_1$\;

\Let $\Delta^- = \emptyset$\;
\Let $\Delta^+ = \emptyset$\;
\Let $n = \min(\dim(\vec{o}),\dim(\vec{\ostar}))$\;
\ForEach{$o_i \in \vec{o}$, $\ostar_i \in \vec{\ostar}$, $1 \leq i \leq n$}{
    \If{$\IsSat_\Context(o_i\!\neq\!\ostar_i)$}{
        \Let $\Delta^- = \Delta^- \union \{i\}$\;
        \Let $\Delta^+ = \Delta^+ \union \{i\}$\;
    }
}

\ForEach{$o_i \in \vec{o}, n < i \leq \dim(\vec{o})$}{
    \Let $\Delta^- = \Delta^- \union \{i\}$\;
}

\ForEach{$\ostar_i \in \vec{\ostar}, n < i \leq \dim(\vec{\ostar})$}{
    \Let $\Delta^+ = \Delta^+ \union \{i\}$\;
}

\Return $(\Delta^-, \Delta^+)$\;
\end{algorithm2e}

The pseudocode for fixing \SELECT\ is shown in \Cref{alg:fix-select}. 

The correctness and optimality are the following:

\begin{lemma}\label{lemma:select-fix}
We say that two lists of \SELECT\ expression are equivalent if they produce the same set of columns in the same ordering. Let $(\Delta^-, \Delta^+) = \FixSelect(P, \vec{o}, \vec{\ostar})$. Assuming that subroutine $\IsSat_\Context$ returns no false positive, we have:

{\bf Correctness:} \oursys's \SELECT-stage hint leads to a fixed working query $Q_5$ that 1) passes the viability check ($\vec{o}$ and $\vec{\ostar}$ are equivalent); 2) satisfies $Q_5 \equiv Q^\star$. This applies to both \emph{SPJ} and \emph{SPJA} queries. 

{\bf Strong minimality of $(\Delta^-, \Delta^+)$ for \emph{SPJ}} Let $(\Delta^-_\circ, \Delta^+_\circ)$ denote the minimal $(\Delta^-, \Delta^+)$ respectively, then for any $(\Delta^-_\circ, \Delta^+_\circ)$ that make $\vec{o}$ and $\vec{\ostar}$ equivalent, $\Delta^- \subseteq \Delta^-_\circ, \Delta^+ \subseteq \Delta^+_\circ$.

\end{lemma}

% \subsection{Proof for Lemma~\ref{lemma:select-fix}}
\begin{proof}[Proof of \Cref{lemma:select-fix}]
    \mypar{Correctness} We prove the correctness by showing $\vec{o} \setminus \Delta^- \cup \Delta^+$ is equivalent to $\vec{\ostar}$. Since $\IsSat_\Context$ does not return false positive, $\vec{o}[i], \vec{\ostar}[i]$ are guaranteed to be added to $\Delta^-, \Delta^+$ respectively upon ``satisfiable'' or ``unknown'', and replacing $\vec{o}[i]$ with $\vec{\ostar}[i]$ guarantees the correctness of expression on position $i$. In addition, for any extra expressions in $\Delta^+, \Delta^-$ that do not have a counterpart in the other list, they are removed to ensure the number of expressions is the same between the \SELECT\ of $Q^\star, Q$. 

    \mypar{Strong minimality of $(\Delta^+, \Delta^-)$ in \emph{SPJ} queries} Assuming $\Delta^- \subsetneq \Delta^-_\circ$ (i.e. there exists an $o_x \in \vec{o}$ s.t. $o_x \in \Delta^-$ but $o_x \notin \Delta^-_\circ$), and $\vec{o} \setminus Delta^-_\circ \cup \Delta^+$ is equivalent to $\vec{o}$. This implies that either of the following is true: 1) $o_x$ is redundant in $Q$'s \SELECT\ and needs to be removed; 2) $\IsSat_\Context(o_x \neq \ostar_x)$ returns ``satisfiable''. In the former case, $o_x$ has to be removed and must be in $\Delta^-_\circ$ (otherwise $\Delta^-_\circ$ is not correct); in the latter case, $o_x \notin \Delta^-_\circ$ implies $\IsSat_\Context$ returns a false positive, which contradicts with our assumption. Thus $\Delta^- \subseteq \Delta^-_\circ$. $\Delta^+ \subseteq \Delta^+_\circ$ follows a similar proof. 
\end{proof}
\section{Experiments and User Study}

\subsection{Coverage Dataset}
The coverage dataset consists of 341 real wrong queries from students from an introductory database course at our institution, and \oursys\ can successfully fix 306 of them. The remaining 35 queries are not supported by \oursys\ due to the existence of set operations, outer joins, etc. The corresponding questions, solutions, and causes of errors are comprehensively shown in \Cref{tab:student-query-analysis}. 

The coverage of \oursys\ for the list of semantic errors proposed by Brass et al.~\cite{brass2006semantic} is shown in \Cref{tab:brass-error-list}. In summary, 25 of the 43 errors can be caught by \oursys, assuming the provided reference queries do not have any of the listed errors. Most importantly, the most frequent errors reported by \cite{brass2006semantic} are supported by \oursys. In addition, 17 out of the 25 supported errors are reflected in the coverage dataset. Note that a lot of errors in \cite{brass2006semantic} are efficiency and stylistic errors instead of logical errors (e.g., unnecessity of expressions in different clauses). 

\subsection{Schema and Questions in User Study Survey}

The following DBLP database schemas were used for the user study, we change the table name ({\tt inproceedings} $\rightarrow$ {\tt conference\_paper}, {\tt article} $\rightarrow$ {\tt journal\_paper}) in order to make them more intuitive for participants:
\begin{itemize}
    \item {\tt conference\_paper: (\underline{pubkey}, title, conference\_name, year, area)}
    \item {\tt journal\_paper: (\underline{pubkey}, title, journal\_name, year)}
    \item {\tt authorship: (\underline{pubkey}, \underline{author})}
\end{itemize}
The area attribute in the conference\_paper table can only be one of the following: "ML-AI", "Theory", "Database", "Systems" or "UNKNOWN". 

The questions and the correct queries for the user study are shown in Table~\ref{tab:user-study-queries}. The wrong queries and hints are shown in Table~\ref{tab:user-study-hints} (Hints from teaching assistants are in black, and hints from \oursys\ are in blue). All hints are shown in the same order as they were shown to the participants.

\begin{table*}[t]
  {    
  \centering
    \scriptsize
    \begin{tabular}{|p{20em}|p{40em}|}
      \hline
      \textbf{Question Statement} & \textbf{Correct Query}  \\\hline
      %%%%%%%%%%%%%%%%%%%%
      
      $Q_1$: Find names of the authors, such that among the years when he/she published both conference paper and journal paper, 2 of the published papers are at least 20 years apart.
      &
      SELECT i1.author \newline
      FROM conference\_paper i1, conference\_paper i2, journal\_paper a1, \newline
      journal\_paper a2, authorship au1, authorship au2, \newline
      authorship au3, authorship au4 \newline
      WHERE i1.pubkey = au1.pubkey AND i2.pubkey = au2.pubkey \newline
      AND a1.pubkey = au3.pubkey AND a2.pubkey = au4.pubkey \newline
      AND au1.author = au2.author AND au2.author = au3.author \newline
      AND au3.author = au4.author AND i1.year + 20 >= i2.year \newline
      AND i1.year = a1.year AND i2.year = a2.year \newline
      GROUP BY i1.author \\\hline
      
      $Q_2$: For each author who has published conference papers in the database area, find the number of their conference paper collaborators in the database area by years before 2018 (ignore the years when they have 0 collaborators). Your output should be in the format of (author, year, number of collaborators in that year).
      &
      SELECT t2.author, t1.year, COUNT(DISTINCT t3.author) \newline
      FROM conference\_paper t1, authorship t2, authorship t3 \newline
      WHERE t1.pubkey = t2.pubkey AND t3.pubkey = t1.pubkey  \newline
      AND t3.author <> t2.author AND t1.year < 2018  \newline
      AND t1.area = 'Database' \newline
      GROUP BY t2.author, t1.year \\\hline
      
      $Q_3$: Excluding publications in the year of 2015, find authors who publish conference papers in at least 2 areas.
      &
      SELECT t1.author \newline
      FROM conference\_paper t1, authorship t2, conference\_paper t3, authorship t4 \newline
      WHERE t1.pubkey = t2.pubkey AND t2.author = t4.author  \newline
      AND t3.pubkey = t4.pubkey AND t1.year = t3.year \newline
      AND t1.area <> t3.area AND t1.year <> 2015  \newline
      AND t1.area <> 'UNKNOWN' AND t3.area <> 'UNKNOWN' \newline
      GROUP BY t1.author \\\hline
      
      $Q_4$: Among the authors who publish in the Systems-area conferences, find the ones that have no co-authors on such publications (i.e. the author does not have any collaborator for any conference paper in systems area).
      &
      SELECT t2.author \newline
      FROM conference\_paper t1, authorship t2, authorship t3 \newline
      WHERE t1.pubkey = t2.pubkey \newline
      AND t2.pubkey = t3.pubkey AND t1.area = 'Systems' \newline
      GROUP BY t2.author \newline
      HAVING COUNT(DISTINCT t3.author) <= 1 \\\hline

    \end{tabular}
    \caption{Question statements and correct queries in the user study.}
    \label{tab:user-study-queries}
    % \vspace{-5.5mm}
    }
\end{table*}

\begin{table*}[t]
  {    
  \centering
    \scriptsize
    \begin{tabular}{|p{35em}|p{25em}|}
      \hline
      \textbf{Wrong Queries} & \textbf{Hints}  \\\hline
      %%%%%%%%%%%%%%%%%%%%
      
      $Q_1$: \newline
      SELECT e.author  \newline
      FROM conference\_paper a, authorship e, conference\_paper b, authorship f,  \newline
      journal\_paper c, authorship g, journal\_paper d, authorship h  \newline
      WHERE a.pubkey = e.pubkey AND b.pubkey = g.pubkey  \newline
      AND c.pubkey = f.pubkey AND e.author = h.author  \newline
      AND d.pubkey = h.pubkey AND e.author = g.author  \newline
      AND f.author = h.author AND a.year + 20 > d.year  \newline
      GROUP BY e.author
      & 
      \textcolor{blue}{1. In WHERE: You should change "a.year + 20 > d.year" to some other conditions.} \\\hline

     $Q_2$: \newline
     SELECT a.author, year, COUNT(*)  \newline
     FROM conference\_paper, authorship, authorship a  \newline
     WHERE conference\_paper.pubkey = a.pubkey AND authorship.pubkey = a.pubkey  \newline
     AND a.author <> authorship.author AND year < 2018  \newline
     GROUP BY a.author, area, year, authorship.author  \newline
     HAVING area = 'Database' AND conference\_paper.year < 2018 
     &
     \textcolor{blue}{1. In GROUP BY: authorship.author is incorrect.} \newline
     \textcolor{blue}{2. In SELECT: COUNT(*) is incorrect.} \\\hline

     $Q_3$: \newline
     SELECT b.author  \newline
     FROM conference\_paper, authorship b, conference\_paper a, authorship  \newline
     WHERE conference\_paper.pubkey = authorship.pubkey AND a.year < 2015  \newline
     OR a.year > 2015 AND b.author = authorship.author  \newline
     AND a.pubkey = b.pubkey AND conference\_paper.year = a.year  \newline
     AND a.area <> conference\_paper.area AND a.area <> 'UNKNOWN'  \newline
     AND conference\_paper.area <> 'UNKNOWN'  \newline
     GROUP BY b.author
     &
     \textcolor{blue}{1. In WHERE, try to fix the whole condition by adding a pair of parentheses - in SQL AND takes higher precedence than OR (this fix alone should make the query correct)} \newline
     2. In WHERE, you are missing a pair of parentheses around a.year < 2015 OR a.year > 2015. \newline
     3. GROUP BY is incorrect. \newline
     4. GROUP BY is incorrect without an aggregate function. \\\hline

     $Q_4$: \newline
     SELECT a.author  \newline
     FROM authorship, conference\_paper, authorship a  \newline
     WHERE conference\_paper.pubkey = a.pubkey AND a.pubkey = authorship.pubkey  \newline
     GROUP BY a.author, conference\_paper.area  \newline
     HAVING conference\_paper.area = 'System' AND COUNT(DISTINCT a.author) <= 1
     &
     1. GROUP BY should not include t1.area.  \newline
     2. In HAVING, conference\_paper.area = 'System' should not appear.  \newline
     3. \textcolor{blue}{In HAVING, try to fix conference\_paper.area = 'System' (this plus another fix in HAVING will make the query right).}  \newline
     4. In HAVING, conference\_paper.area = 'System' should be = 'Systems'.  \newline
     5. \textcolor{blue}{In HAVING, try to fix COUNT(DISTINCT a.author) <= 1 (this plus another fix in HAVING will make the query right).}  \newline
     6. In HAVING, COUNT(DISTINCT a.author) <= 1 is referring to the same author attribute as the GROUP BY. \\\hline

    \end{tabular}
    \caption{Wrong queries and the hints provided (\oursys\ hints are in blue).}
    \label{tab:user-study-hints}
    % \vspace{-5.5mm}
    }
\end{table*}

\begin{table*}[t]\centering

\scriptsize
\begin{tabular}{|c|c|c|c|l|l|} \hline
% ================= (a) ====================
\multirow{6}{*}{ Question (a) } & Question & \multicolumn{4}{|l|}{Find the names of all beers served at James Joyce Pub.} \\\cline{2-6}
& Solutions & \multicolumn{4}{|l|}{SELECT beer
FROM serves
WHERE bar = 'James Joyce Pub';} \\\cline{2-6}
& \multirow{4}{*}{ Error Statistics } & \multicolumn{2}{|c|}{Total Wrong Query} & 22 & \\\cline{3-6}
& & \multicolumn{2}{|c|}{FROM} & 8 & 1. Wrong table; 2. Extra table (cross join with bar) \\\cline{3-6}
& & \multicolumn{2}{|c|}{WHERE} & 9 & Wrong bar name or typo \\\cline{3-6}
& & \multicolumn{2}{|c|}{SELECT} & 5 & SELECT * or bar alone instead of beer \\\hline
\multicolumn{6}{|c|}{} \\\hline
% ================= (b) ====================
\multirow{6}{*}{ Question (b) } & Question & \multicolumn{4}{|l|}{Find names and addresses of bars that serve Budweiser at a price higher than 2.20.} \\\cline{2-6}
& Solutions & \multicolumn{4}{|l|}{SELECT name, address
FROM bar, serves
WHERE bar.name = serves.bar AND beer = 'Budweiser' AND price > 2.20;} \\\cline{2-6}
& \multirow{4}{*}{ Error Statistics } & \multicolumn{2}{|c|}{Total Wrong Query} & 126 & Note: 3 of them cannot be processed due to outer-join \\\cline{3-6}
& & \multicolumn{2}{|c|}{FROM} & 10 & Missing either Bar table or Serves table \\\cline{3-6}
& & \multicolumn{2}{|c|}{WHERE} & 96 & 1. Missing join condition; 2. Use >= instead of > \\\cline{3-6}
& & \multicolumn{2}{|c|}{SELECT} & 17 & 1. Missing Columns; 2. Column order is wrong \\\hline
\multicolumn{6}{|c|}{} \\\hline
% ================= (c) ====================
\multirow{7}{*}{ Question (c) } & Question & \multicolumn{4}{|l|}{Find the names of drinkers who like Corona and frequent James Joyce Pub at least twice a week.} \\\cline{2-6}
& Solutions & \multicolumn{4}{|p{50em}|}{SELECT likes.drinker \newline
FROM likes, frequents \newline
WHERE likes.beer = 'Corona' AND likes.drinker = frequents.drinker \newline
AND frequents.bar = 'James Joyce Pub' AND frequents.times\_a\_week >= 2;} \\\cline{2-6}
& \multirow{5}{*}{ Error Statistics } & \multicolumn{2}{|c|}{Total Wrong Query} & 143 & Note: 20 of them cannot be processed due to usage of set operation, outer joins, complex subqueries \\\cline{3-6}
& & \multicolumn{2}{|c|}{FROM} & 11 & 1. Wrong table involved (serves); 2. Unnecessary drinker table (false positive, true error in SELECT/WHERE) \\\cline{3-6}
& & \multicolumn{2}{|c|}{WHERE} & 105 & 1. Missing join condition; 2. Using > instead of >=, or wrong number; 3. Missing condition on beer or bar \\\cline{3-6}
& & \multicolumn{2}{|c|}{SELECT} & 6 & SELECT * instead of name \\\cline{3-6}
& & \multicolumn{2}{|c|}{GROUP BY} & 1 & GROUP BY wrong columns \\\hline
\multicolumn{6}{|c|}{} \\\hline
% ================= (d) ====================
\multirow{11}{*}{ Question (d) } & Question & \multicolumn{4}{|l|}{Find the name of each drinker who likes at least two beers.} \\\cline{2-6}
& Solution 1 & \multicolumn{4}{|l|}{SELECT drinker FROM likes GROUP BY drinker HAVING COUNT(*) >= 2;} \\\cline{2-6}
& Solution 2 & \multicolumn{4}{|l|}{SELECT DISTINCT l1.drinker
FROM likes l1, likes l2
WHERE l1.drinker = l2.drinker AND l1.beer <> l2.beer;} \\\cline{2-6}
& \multirow{8}{*}{ Error Statistics } & \multicolumn{2}{|c|}{Total Wrong Query} & 50 & Note: 12 of them cannot be processed due to usage of set operations \\\cline{3-6}
& & \multirow{4}{*}{ Solution 1 } & FROM & 1 & Wrong table \\\cline{4-6}
& & & GROUP BY & 1 & Group by 1 \\\cline{4-6}
& & & HAVING & 18 & 1. Using > instead of >=; 2. COUNT(DINSTINCT *) \\\cline{4-6}
& & & SELECT & 4 & Extra column COUNT \\\cline{3-6}
& & \multirow{3}{*}{ Solution 2 } & FROM & 5 & Extra/wrong tables (likes / frequents) \\\cline{4-6}
& & & WHERE & 2 & Wrong conditions: l1.beer = l2.beer, l1.drinker <> l2.drinker \\\cline{4-6}
& & & SELECT & 7 & Missing DISTINCT \\\hline
\end{tabular}
\caption{Student Query Statistics}\label{tab:student-query-analysis}
\end{table*}

% ========================== Brass' List ==============================
\begin{table*}
\scriptsize
\begin{tabular}{|c|c|c|c|c|c|} \hline

\textbf{No.} & \multicolumn{2}{|l|}{\bf{Error}} & \textbf{Frequency in \cite{brass2006semantic}} & In \textbf{Students} Dataset &
 \textbf{\oursys\ Support} \\\cline{1-6}

1 & \multicolumn{2}{|l|}{Inconsistent condition} & $11.4\%$ & Y & \multirow{11}{*}{\begin{minipage}{2in}Logical errors \newline \oursys\ correctly gives hints\end{minipage}  } \\\cline{1-5}
3 & \multicolumn{2}{|l|}{Constant output columns} & $3.2\%$ & Y & \\\cline{1-5}
4 & \multicolumn{2}{|l|}{Duplicate output columns} & & Y & \\\cline{1-5}
5 & \multicolumn{2}{|l|}{Unused tuple variables} & $5.6\%$ & Y & \\\cline{1-5}
12 & \multicolumn{2}{|l|}{LIKE without wildcard} & & & \\\cline{1-5}
27 & \multicolumn{2}{|l|}{Missing join conditions} & $21.3\%$ & Y & \\\cline{1-5}
31 & \multicolumn{2}{|l|}{Comparison between different domains} & & Y & \\\cline{1-5}
33 & \multicolumn{2}{|l|}{DISTINCT in SUM and AVG} & & & \\\cline{1-5}
34 & \multicolumn{2}{|l|}{Wildcards without LIKE} & & Y & \\\cline{1-5}
37 & \multicolumn{2}{|l|}{Many duplicates} & $10.8\%$ & Y & \\\cline{1-5}
38 & \multicolumn{2}{|l|}{DISTINCT that might remove important duplicates} & & Y &  \\\hline

2 & \multicolumn{2}{|l|}{Unnecessary DISTINCT} & $3.7\%$ & Y & \multirow{11}{*}{ \begin{minipage}{2in}Efficiency/Stylistic issues \newline \oursys\ catch them if the reference queries are free from these errors\end{minipage} } \\\cline{1-5}
6 & \multicolumn{2}{|l|}{Unnecessary join} & $8.4\%$ & Y & \\\cline{1-5}
7 & \multicolumn{2}{|l|}{Tuple variables are always identical} & $3.2\%$ & & \\\cline{1-5}
15 & \multicolumn{2}{|l|}{Unnecessary aggregation function} & & & \\\cline{1-5}
16 & \multicolumn{2}{|l|}{Unnecessary DISTINCT in aggregation function} & & Y & \\\cline{1-5}
17 & \multicolumn{2}{|l|}{Unnecessary argument of COUNT} & & Y & \\\cline{1-5}
19 & \multicolumn{2}{|l|}{GROUP BY with singleton group} & $4.4\%$ & & \\\cline{1-5}
20 & \multicolumn{2}{|l|}{GROUP BY with only a single group} & & & \\\cline{1-5}
22 & \multicolumn{2}{|l|}{GROUP BY can be replaced by DISTINCT} & & Y & \\\cline{1-5}
24 & \multicolumn{2}{|l|}{Unnecessary ORDER BY term} & $10.8\%$ & & \\\cline{1-5}
32 & \multicolumn{2}{|l|}{Strange HAVING} & & &  \\\hline

8 & \multicolumn{2}{|l|}{Implied, tautological, or inconsistent subcondition} & $5.4\%$ & Y & \multirow{3}{*}{ \begin{minipage}{2in}Efficiency/Stylistic issues \newline \oursys\ does not consider them as errors\end{minipage}  }  \\\cline{1-5}
21 & \multicolumn{2}{|l|}{Unnecessary GROUP BY attribute} & & Y & \\\cline{1-5}
25 & \multicolumn{2}{|l|}{Inefficient HAVING} & & Y & \\\hline

9 & \multicolumn{2}{|l|}{Comparison with NULL} & & & \multirow{18}{*}{ Not supported by \oursys }  \\\cline{1-5}
10 & \multicolumn{2}{|l|}{NULL value in IN/ANY/ALL subquery} & & & \\\cline{1-5}
11 & \multicolumn{2}{|l|}{Unnecessarily general comparison operator} & & & \\\cline{1-5}
13 & \multicolumn{2}{|l|}{Unnecessarily complicated SELECT in EXISTS-subquery} & & & \\\cline{1-5}
14 & \multicolumn{2}{|l|}{IN/EXISTS condition can be replaced by comparison} & & & \\\cline{1-5}
18 & \multicolumn{2}{|l|}{Unnecessary GROUP BY in EXISTS subquery} & & & \\\cline{1-5}
23 & \multicolumn{2}{|l|}{UNION can be replaced by OR} & & & \\\cline{1-5}
26 & \multicolumn{2}{|l|}{Inefficient UNION} & & & \\\cline{1-5}
28 & \multicolumn{2}{|l|}{Uncorrelated EXISTS subquery} & & & \\\cline{1-5}
29 & \multicolumn{2}{|l|}{IN-Subquery with only one possible result value} & & & \\\cline{1-5}
30 & \multicolumn{2}{|l|}{Condition in the subquery that can be moved up} & & & \\\cline{1-5}
35 & \multicolumn{2}{|l|}{Condition on left table in left outer join condition} & & & \\\cline{1-5}
36 & \multicolumn{2}{|l|}{Outer join can be replaced by inner join} & & & \\\cline{1-5}
39 & \multicolumn{2}{|l|}{Subquery term that might return more than one tuple} & & & \\\cline{1-5}
40 & \multicolumn{2}{|l|}{SELECT INTO that might return more than one tuple} & & & \\\cline{1-5}
41 & \multicolumn{2}{|l|}{No indicator variable for argument that might be NULL} & & & \\\cline{1-5}
42 & \multicolumn{2}{|l|}{Difficult type conversion} & & & \\\cline{1-5}
43 & \multicolumn{2}{|l|}{Runtime error in datatype function. e.g. divided by 0} & & & \\\hline

\end{tabular}
\caption{List of Semantic Errors categorized by \oursys}\label{tab:brass-error-list}
\end{table*}

% =========================== Misc =========================

\end{document}